\documentclass[a4paper,UKenglish,cleveref, autoref, thm-restate]{lipics-v2021}

\usepackage{tikz}
\usetikzlibrary{automata}

\usepackage{mdframed}
\newmdtheoremenv{theo}{Theorem}
\newmdtheoremenv{idef}{Definition}

\newcommand{\N}{\mathbb{N}}

\title{Positional Determinacy with Colored Vertices: a $1$-to-$2$-Player Lift} 

\author{Rapha\"el {Berthon}}{Université Paris-Saclay, CNRS, ENS Paris-Saclay, Laboratoire Méthodes Formelles, 91190, Gif-sur-Yvette, France \and \url{https://home.lmf.cnrs.fr/Rberthon/} }{rberthon@lmf.cnrs.fr}{https://orcid.org/0000-0002-2580-5193}{[Funded by ANR-22-CE48-0012 (Bisous)]}

\author{St\'ephane {Le Roux}}{Université Paris-Saclay, CNRS, ENS Paris-Saclay, Laboratoire Méthodes Formelles, 91190, Gif-sur-Yvette, France \and  \url{https://home.lmf.cnrs.fr/StephaneLeRoux/}}{stephane.le_roux@ens-paris-saclay.fr}{https://orcid.org/0000-0002-6511-0572}{[Funded by ANR-22-CE48-0012 (Bisous)]}

\authorrunning{R. Berthon and S. Le Roux} 

\Copyright{Rapha\"el Berthon and St\'ephane Le Roux} 

\ccsdesc[500]{Theory of computation~Automata over infinite objects} 

\keywords{two-player games, 
one-player games, 
parity objectives} 

\category{} 

\relatedversion{} 

\acknowledgements{Thanks to Bastien Laboureix for useful conversations.}

\nolinenumbers 

\EventEditors{Ana Sokolova and Patrick Totzke}
\EventNoEds{2}
\EventLongTitle{37th International Conference on Concurrency Theory (CONCUR 2026)}
\EventShortTitle{CONCUR 2026}
\EventAcronym{CONCUR}
\EventYear{2026}
\EventDate{September 1--4, 2026}
\EventLocation{Liverpool, UK}
\EventLogo{}
\SeriesVolume{391}
\ArticleNo{34}

\begin{document}
\maketitle

\begin{abstract}
Positional determinacy of vertex-colored parity games was proved in the 1990s, which directly implies positional determinacy of edge-colored parity games. In 2006, it was shown that if a prefix-independent color-based objective ensures that every edge-colored two-player turn-based game is positionally determined, this objective is equivalent to a parity objective. We prove a similar result for vertex-colored games, namely that the following are equivalent for any prefix-independent objective $W$ over a finite set of colors:
\begin{itemize}
    \item $W$ is positionally determined on all vertex-colored one-player games. 
     
    \item $W$ is positionally determined on all vertex-colored two-player games. 
    
    \item $W$ is equivalent to a parity objective on ordrerd pairs of colors.
\end{itemize}
We prove that finiteness of the color set is required for our equivalence to hold. Beyond this $1$-to-$2$-player lift, the technique that we develop to handle the pairs of colors establishes a promising 2-way correspondence between edge-colored games and vertex-colored games. 

\end{abstract}

\section{Introduction}

\textbf{Context: }Game theory is applied to many fields such as economics, political science,
evolutionary biology, and is used for model-checking processes in the industry~\cite{DBLP:conf/cav/Cook18}. While many game-theoretic models allow for multiple players and non-zero-sum payoffs, fundamental games in logic and computer science usually involve two players in a win-lose setting, where every play results in a win for exactly one player.

In model checking, these win-lose games are typically turn-based and played on a finite or infinite labeled graph, called an arena. Starting from an initial vertex, the player owning the current vertex selects an outgoing edge to reach the next vertex. This continues indefinitely, producing an infinite path.
A player's strategy maps the history of visited vertices to her next move. Specifically, for every finite path ending at a vertex owned by that player, the strategy selects the edge to be followed. A pair of one strategy per player induces a unique infinite sequence of edges and vertices starting from the initial vertex, constituting the play.

General strategies can be complex to implement: in a play of infinite duration, the number of distinct histories is infinite; therefore, implementing such a strategy generally requires a mechanism to store and process a history of unbounded length. Much work has thus focused on strategies that require only finite memory, usually represented as a finite-state machine, or no memory at all. In the latter case, strategies map each position in the arena to a single outgoing edge, and are said to be positional (or memoryless). 

The objective of the game is defined independently of the arena, as a set of infinite words $W\subseteq \Gamma^\omega$, where $\Gamma$ is a set of colors. Either vertices or edges are assigned colors from $\Gamma$, to obtain vertex-colored arenas (or games) or edge-colored arenas (or games). In both cases, every play generates an infinite sequence of colors called a trace. The player called Eve wins if this trace is in $W$, and Adam wins otherwise.

To state our results, we use the concept of \emph{parity objective}: an infinite sequence of integers, chosen from a finite set called the set of priorities, satisfies the parity objective if the greatest integer occurring infinitely often is even. In~\cite{CN06}, the authors considered \emph{generalized parity objectives}: intuitively, they are defined on infinite sequences of colors but are similar to the parity objective via a \emph{priority function} from colors to integers.

In the context of infinite-duration games, prefix-independence is an important property for win-lose objectives as well as for more general objective functions (like gain functions). An objective is prefix-independent, also called tail, if the winner of a play is unaffected by the addition or removal of any finite initial sequence of colors. This property is central to the analysis of infinite games: e.g. if there is a winning strategy from a given vertex, then there exists a winning strategy from every vertex from which the player can reach this given vertex.

An objective is said to be positionally determined on a class of arenas if a player with a winning strategy from a given vertex also has a positional one from that vertex. Colcombet and Niwi{\'n}ski showed in~\cite{CN06} that prefix-independent objectives that are positionally determined on both finite and infinite edge-colored arenas are exactly those equivalent to the parity objective. The case for vertex-colored games has been opened until now.

\begin{figure}[ht]
    \centering
    \begin{subfigure}{.45\textwidth}
        \centering
        \begin{tikzpicture}[shorten >=1pt, node distance=2cm, auto, bend angle=45]
            \node[state] (v) {};
            \path[->] (v) edge [loop left] node {$a$} (v)
                      (v) edge [loop right] node {$b$} (v);
        \end{tikzpicture}
        \subcaption{Edge-colored hub-cycle arena}\label{subfig:edge-col}
    \end{subfigure}
    \hfill
    \begin{subfigure}{.45\textwidth}
        \centering
        \begin{tikzpicture}[shorten >=1pt, node distance=2.5cm, auto]
            \node[state] (a) {$a$};
            \node[state] (b) [right of=a] {$b$};
            \path[->] (a) edge [loop left] node {} (a)
                      (a) edge [bend left] node {} (b)
                      (b) edge [bend left] node {} (a);
        \end{tikzpicture}
        \subcaption{Vertex-colored hub-cycle arena}\label{subfig:vertex-col}
    \end{subfigure}
    \caption{Two arenas where Eve controls all vertices.}
    \label{fig:hub-cycle}
\end{figure}

\begin{example}\label{ex:intro-ab}
Consider the objective $W = (a+b)^*(ab)^\omega$ on $\Gamma=\{a,b\}$, defining words that eventually alternate between $a$ and $b$. It will be our running example throughout the paper.

\noindent \emph{Edge-colored case.} In the edge-colored game defined by $W$ and the arena from Figure~\ref{subfig:edge-col} (which we will call a hub-cycle game), Eve, who controls the unique vertex, has a winning strategy, but no positional winning strategy: a non-positional strategy can alternate between edges $a$ and $b$, producing the trace $(ab)^\omega$, but any positional strategy for Eve must choose a single edge $e \in \{a, b\}$ to play infinitely; this eventually results in either $a^\omega$ or $b^\omega$. 

\noindent \emph{Vertex-colored case.} By contrast, $W$ is positionally determined on the vertex-colored arena from Figure~\ref{subfig:vertex-col}. Eve can play the positional strategy visiting $a$ from $b$ and visiting $b$ from $a$. Starting from $a$, this produces the infinite sequence $(ab)^\omega \in W$. We will later see that $W$ is positionally determined on all vertex-colored arenas.
\end{example}

\textbf{Contribution: }
 In this article we distinguish two settings: generalized parity objectives defined on sequences of colors (as in~\cite{CN06}), and also those defined on sequences of ordered pairs of colors.

Furthermore, we identify a specific class of strongly connected arenas already occurring in~\cite{CN06}, which we call \emph{hub-cycle arenas}, that contain at most one vertex with more than one outgoing edge. In this article, colors are assigned either to the edges or to the vertices. A \emph{hub-cycle game} is a one-player game played on such an arena. Figure~\ref{fig:hub-cycle} gives two examples of hub-cycle arenas.

Let $\Gamma$ be a set of colors. Our main contributions characterize positional determinacy via these hub-cycle arenas:

\begin{enumerate}
    \item If $\Gamma$ is finite, the following are equivalent for any prefix-independent objective $W\subseteq\Gamma^\omega$:
    \begin{enumerate}
        \item $W$ is positionally determined on all vertex-colored hub-cycle (one-player) arenas.\label{intro:2a}
        \item $W$ is positionally determined on all vertex-colored two-player arenas.\label{intro:2b}
        \item $W$ is a generalized parity objective on pairs.\label{intro:2c}
    \end{enumerate}
\end{enumerate}

While $1$-to-$2$-player lifts for edge-colored arenas already appear in the literature~\cite{GZ05,BLORV22,BRV23,CO26}, this is to our knowledge the first such result for vertex-colored arenas. We also prove that we cannot simply drop the assumption that $\Gamma$ is finite. Several works consider half-positionality~\cite{Kopczynski06, DBLP:journals/theoretics/Ohlmann23, CO26}, where Eve has a positional winning strategy whenever she has a winning strategy. We discuss related works in more details in Section~\ref{sec:rel}.

\textbf{Outline.} Section~\ref{sec:def} provides preliminary definitions. 
Section~\ref{sec:edge} recalls the existing $1$-to-$2$-player lift for edge-colored arenas. 
Vertex-colored arenas are addressed in Section~\ref{sec:vertex-words} and Section~\ref{sec:vertex-pairs} via reductions to generalized parity objectives on anchored words and then pairs. 
Section~\ref{sec:remarks} compares our results with existing literature, while Section~\ref{sec:rel} proposes future directions. Finally, Section~\ref{sec:conclusion} offers concluding remarks.

\section{Definitions}
\label{sec:def}

Since our results extends the results of \cite{CN06}, we broadly adopt their notational framework to facilitate the comparison with the original proofs and definitions. 

Let $\mathbb{N}=\{0,1,2\dots\}$ be the set of natural numbers. For $i,j\in \mathbb{N}$, $i\leq j$ we denote by $[i,j]\subseteq \mathbb{N}$ the set $\{i,i+1,\dots, j\}$. Let $\Gamma$ be a non-empty, possibly infinite set. Let $\Gamma^*$, $\Gamma^+$, and $\Gamma^\omega$ respectively denote the sets of finite, non-empty finite, and infinite words over $\Gamma$, with $\varepsilon$ representing the empty word. For $a \in \Gamma$, let $a^+ := \{a\}^+$. For $E \subseteq \Gamma^*$ and $W \subseteq \Gamma^* \cup \Gamma^\omega$, let $EW$ denote the concatenation $\{uv \mid u \in E, v \in W\}$. The complement of $W \subseteq \Gamma^\omega$ is $\overline{W} := \Gamma^\omega \setminus W$. For $u \in \Gamma^+$, $u^\omega := uuu\dots \in \Gamma^\omega$, and for $E \subseteq \Gamma^+$, we take $E^\omega = \{u_0u_1u_2\dots\ |\ \forall n \in \N, u_n \in E\}$.

\begin{definition}[Arenas]\label{def:arena}
Let $\Gamma$ be a non-empty, possibly infinite set of colors. A \textbf{two-player edge-colored arena} is a tuple $A = \langle V_E, V_A, E \rangle$, where the set of vertices $V = V_E \cup V_A$ is the disjoint union of the sets of vertices $V_E$ and $V_A$, belonging to Eve and Adam, respectively. $E \subseteq V \times \Gamma \times V$ is the set of colored edges. We assume that every vertex $v \in V$ has at least one outgoing edge. The class of all such arenas is $\mathcal{A}_e$. An arena with either $V_E=\varnothing$ or $V_A=\varnothing$ is a \textbf{one-player} arena. 

The class $\mathcal{A}_v \subsetneq \mathcal{A}_e$ of \textbf{vertex-colored arenas} contains all arenas satisfying the following constraint: for all $v, v', v'' \in V$ and $a, b \in \Gamma$, if $(v, a, v') \in E$ and $(v, b, v'') \in E$, then $a = b$. This amounts to a coloring on vertices. 
\end{definition}

\begin{definition}[Games and Plays]\label{def:game}
A \textbf{game} is a tuple $G = \langle A, W \rangle$ where $A \in \mathcal{A}_e$ is an arena and $W \subseteq \Gamma^\omega$ is a winning objective. 
A \textbf{play} in $G$ is an infinite sequence of edges $\pi = (v_0, a_0, v_1, a_1, \dots)$ such that $(v_i, a_i, v_{i+1}) \in E$ for all $i \in \mathbb{N}$. The resulting trace is the infinite word $w_\pi = a_0 a_1 \dots \in \Gamma^\omega$. Eve wins the play if $w_\pi \in W$; otherwise, Adam wins. We distinguish between \textbf{edge-colored} and \textbf{vertex-colored games} based on the class of arenas upon which they are defined, and also define \textbf{one-player games} that are played on a one-player arena.
\end{definition}

\begin{definition}[Strategies]
Let $A = \langle V_E, V_A, E \rangle$ be an arena. A \textbf{history} is a finite path $h = (v_0, a_0, v_1, a_1, \dots, v_n)$ such that $(v_i, a_i, v_{i+1}) \in E$ for all $0 \leq i < n$. The set of all histories is $\mathcal{H}$, and $\mathcal{H}_X \subseteq \mathcal{H}$ denotes histories ending in $v_n \in V_X$ for player $X \in \{E, A\}$.

A \textbf{strategy} for player $X$ is a function $\sigma_X: \mathcal{H}_X \to E$ such that for any $h = (\dots, v_n)$, $\sigma_X(h)$ is an edge $(v_n, a, v) \in E$. A play $\pi$ is \textbf{consistent} with strategy $\sigma_X$ if for every $i \in \mathbb{N}$ where $v_i \in V_X$, the transition satisfies $(v_i, a_i, v_{i+1}) = \sigma_X(v_0, a_0, \dots, v_i)$.

A strategy $\sigma_X$ is \textbf{positional} if it depends only on the current vertex. It is represented as a function $\sigma_X: V_X \to E$ where $\sigma_X(v)$ is an outgoing edge from $v$. 
A strategy $\sigma_X$ is \textbf{winning for player $X$} from an initial vertex $v_0$ if every play $\pi$ starting at $v_0$ and consistent with $\sigma_X$ is won by $X$. In this case, $v_0$ is a \textbf{winning vertex} for player $X$.
\end{definition}

\begin{definition}[Positional determinacy]
\textbf{A game} $G = \langle A, W \rangle$ is positionally determined if, from any initial vertex $v_0 \in V$, one of the players has a positional winning strategy. 

\textbf{An objective} $W \subseteq \Gamma^\omega$ is positionally determined on a class of arenas $\mathcal{A} \subseteq \mathcal{A}_e$ if for every arena $A \in \mathcal{A}$, the game $\langle A, W \rangle$ is positionally determined.
\end{definition}

\begin{definition}[Prefix-independence]
An objective $W \subseteq \Gamma^\omega$ is \textbf{prefix-independent}, also called \textbf{uniform} or a \textbf{tail-objective} if $\Gamma W = W$.
\end{definition}

\begin{remark}
For prefix-independent objectives, if a game is positionally determined for a player, then this player has a single positional strategy that wins from every vertex from where she can win. See Lemma 2.12 in~\cite{DBLP:conf/dagstuhl/2001automata} or Lemma 5 in~\cite{CN06} for a complete proof.  
\end{remark}

\begin{definition}[Parity Objectives]\label{def:parity}
Let $n \in \mathbb{N}$. 
\begin{itemize}
    \item $W \subseteq [0,n]^\omega$ is a \textbf{parity objective} (of order $n$) if a word $u = u_0 u_1 \dots \in W$ iff $\limsup_{i \to \infty} u_i$ is even. We refer to $[0,n]$ as the set of priorities. 
    \item $W \subseteq \Gamma^\omega$ is a \textbf{generalized parity objective} (of order $n$) if there exists a priority function $p: \Gamma \to \{0, \dots, n\}$ such that a word $u = u_0 u_1 \dots \in W$ iff $\limsup_{i \to \infty} p(u_i)$ is even.
    \item $W \subseteq \Gamma^\omega$ is a \textbf{generalized parity objective on pairs} (of order $n$) if there exists a priority function $p: \Gamma^2 \to \{0, \dots, n\}$ such that $u \in W$ iff $\limsup_{i \to \infty} p(u_i, u_{i+1})$ is even.
\end{itemize}
\end{definition}

\begin{example}\label{ex:def-ab}
We again consider the objective $W = (a+b)^*(ab)^\omega$  on $\Gamma=\{a,b\}$. This is a \textbf{generalized parity objective on pairs} of order $1$. Indeed, we can define the priority function $p: \{a, b\}^2 \to \{0, 1\}$ such that $p(a, b) = p(b, a) = 0$ and $p(a, a) = p(b, b) = 1$.
\end{example}

\section{Existing results on edge-colored games}
\label{sec:edge}
For $\Gamma$ a non-empty possibly infinite set of colors, and $W \subseteq \Gamma^\omega$, we define the set of winning finite cyclic words as $W_f := \{u \in \Gamma^+ \mid u^\omega \in W\}$. 

 We consider $\Gamma$, a possibly infinite set of colors. We recall the proof of the following, which includes a result demonstrated in Kopczy\'{n}ski's thesis~\cite{Kopczynski08}, that extends~\cite{CN06} to one player games: 

\begin{enumerate}\setcounter{enumi}{1}
    \item For any prefix-independent objective $W\subseteq \Gamma^\omega$, the following are equivalent:
    \begin{enumerate}
        \item $W$ is positionally determined on all edge-colored one-player arenas.\label{sec1:1a}
        \item $W$ is positionally determined on all edge-colored two-player arenas.\label{sec1:1b}
        \item $W$ is a generalized parity objective.\label{sec1:1c}
    \end{enumerate}
\end{enumerate}

\textit{Sketch of proof: } The implication \ref{sec1:1b} $ \Rightarrow $ \ref{sec1:1a} follows immediately as one-player arenas are a subclass of two-player arenas. The implication \ref{sec1:1c} $ \Rightarrow $ \ref{sec1:1b}, was established in the 1990s \cite{EJ91,Mostowski91,Zielonka98}, showing that generalized parity objectives are positionally determined on vertex-colored arenas; this result extends directly to edge-colored arenas. In 2006, \cite{CN06} proved the converse, i.e. \ref{sec1:1b} $ \Rightarrow $ \ref{sec1:1c}. Finally, \ref{sec1:1a} $ \Rightarrow $ \ref{sec1:1c} has been shown in 2008 by Kopczy\'{n}ski~\cite{Kopczynski08}. 

\begin{example}
We come back to our running example, the objective $W = (a+b)^*(ab)^\omega$ on $\Gamma=\{a,b\}$. Example~\ref{ex:intro-ab} shows that $W$ is not positional on the edge-colored arena defined in Figure~\ref{subfig:edge-col}. By the equivalence between \ref{sec1:1b} and \ref{sec1:1c} above, $W$ is not a generalized parity objective. 
\end{example}

The proof of our Lemma~\ref{lem:1p-pos} is very similar to the one in~\cite{Kopczynski08}, in particular to the following proposition:

\begin{restatable}{proposition}{onePosEdge}\label{prop:1p-pos-edge}
For any set of colors $\Gamma$, suppose $W\subseteq \Gamma^\omega$ is prefix-independent and positionally determined on edge-colored one-player arenas. 
Then, for any $L, L' \subseteq \Gamma^*$:
\begin{align*}
&\forall u \in L, \exists v \in L', uv \in W_f \iff \exists v \in L', \forall u \in L, uv \in W_f \\
&\forall u \in L, \exists v \in L', uv \in \overline{W}_f \iff \exists v \in L', \forall u \in L, uv \in \overline{W}_f
\end{align*}
\end{restatable}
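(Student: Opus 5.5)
The direction from right to left is immediate, in both lines, once $L\neq\varnothing$: if a single $v\in L'$ works for every $u\in L$, then the same $v$ witnesses the existential for each $u$. The degenerate case $L=\varnothing$ (where the left-hand side holds vacuously) will be handled separately by the paper's convention, or by assuming $L,L'$ non-empty. So the content is left-to-right. The two lines are symmetric: the second is the first applied to $\overline{W}$, which is again prefix-independent. Positional determinacy on one-player arenas is also symmetric in the two players, so it suffices to treat $W_f$. I will argue by contraposition. Assume that for every $u\in L$ there is $v_u\in L'$ with $uv_u\in W_f$, and that for every $v\in L'$ there is $u_v\in L$ with $u_vv\in\overline{W}_f$. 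I then aim to build a single edge-colored one-player arena in which some player wins from a vertex but has no positional winning strategy.

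The arena will be of hub-cycle type. There is a vertex $m$ with, for each $v\in L'$, a chain of fresh vertices whose edges spell $v$ and end at a vertex $h$. From $h$, for each $u\in L$, there is a chain spelling $u$ back to $m$. Arenas may be infinite, so infinite $L,L'$ cause no difficulty. A positional strategy at the two branching vertices $m,h$ fixes one $v$ and one $u$, producing the trace $(vu)^\omega$. By prefix-independence this is in $W$ iff $(uv)^\omega\in W$, i.e.\ iff $uv\in W_f$. This is the key translation between positional strategies and the finite cyclic words in the statement.

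The assignment of owners is the delicate point. The hypothesis of the proposition only applies to one-player arenas, yet the quantifier alternation $\forall u\,\exists v$ is naturally a two-player situation. My first attempt will be to give both $h$ and $m$ to one player, say Adam, who wants to leave $W$. Then I will use the second assumption to exhibit a non-positional Adam strategy that reacts to the last $v$ played by choosing $u_v$. I will then show that the first assumption defeats every positional Adam strategy. For the latter I need an Eve-owned variant of the same gadget: splitting $h$ into copies $h_v$, one per incoming $v$-chain, makes "choose $u$ depending on $v$" positional. Comparing the Eve-owned arena (edges $u$ from a single hub) with the Adam-owned arena (edges from the split copies $h_v$) should then produce a contradiction with positional determinacy in one of them.

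I expect this step to be the main obstacle. I must make sure the non-positional winning strategy really yields traces that lie in $W$ or outside $W$ as required. Such traces are infinite concatenations $u_0v_0u_1v_1\dots$ rather than periodic words $(uv)^\omega$, so knowledge of $W_f$ alone does not decide them. To bridge this gap I plan to argue as in Kopczy\'nski's proof. I will use positional determinacy once more on an auxiliary one-player arena realising the concatenated play, to collapse it to a periodic one: a player who wins with memory in a one-player arena must also win positionally, i.e.\ with some lasso $(uv)^\omega$. This should bring the contradiction back to the assumed membership of the fixed pairs $u_vv$ in $\overline{W}_f$ and $uv_u$ in $W_f$.
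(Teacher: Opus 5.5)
Your proposal has a genuine gap: the owner-assignment plan on the $m$/$h$ gadget cannot produce a contradiction. In any one-player version of that gadget, the owner chooses both the $v$-chain at $m$ and the $u$-chain at $h$ (or at $h_v$). Your two hypotheses then directly give the owner a positional winning strategy. Adam fixes any $v\in L'$ at $m$ and $u_v$ at $h$, so the play is $(vu_v)^\omega\notin W$ since $u_vv\in\overline{W}_f$. Eve fixes any $u\in L$ at $h$ and $v_u$ at $m$. So the claim that the first assumption defeats every positional Adam strategy is false, and splitting $h$ into copies $h_v$ changes nothing. Both arenas you want to compare are positionally determined, consistently with the hypotheses. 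For the same reason, the ``collapse to a lasso'' in your last paragraph yields nothing in this gadget. Its lassos are $(vu)^\omega$ for independently chosen $u\in L$, $v\in L'$, and the hypotheses say nothing about such pairs.

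The missing idea is to chain the two hypotheses into one alternating sequence and to parse the single word $w=u_0v_0u_1v_1\cdots$ in two ways. This is what the paper does for its vertex-colored analogue, Lemma~\ref{lem:1p-pos}, following Kopczy\'nski. Take $u_0\in L$, $v_i:=v_{u_i}$ and $u_{i+1}:=u_{v_i}$. Parsed as $(u_0v_0)(u_1v_1)\cdots$, all blocks of $w$ lie in $W_f$. Its suffix parsed as $(v_0u_1)(v_1u_2)\cdots$ has all blocks in $\overline{W}_f$, by rotating $u_{i+1}v_i$ using prefix-independence. Positional determinacy is then applied to arenas whose only cycles at the hub are these pre-paired blocks, so every lasso is a pair whose membership you know. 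An Adam-owned hub with cycles spelling $u_iv_i$ has no positional Adam win, so Eve wins and every play, including $w$, is in $W$. Dually, an Eve-owned hub with cycles $v_iu_{i+1}$ forces $v_0u_1v_1\cdots\notin W$, which contradicts prefix-independence. (Minor: right-to-left holds unconditionally; the degenerate case is $L=L'=\varnothing$, where left-to-right fails.)
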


\section{Vertex-colored games: parity on words}
\label{sec:vertex-words}
Throughout the paper, we focus on a restricted class of one-player arenas that serves as a building block for our results.

\begin{definition}[Hub-cycle arenas]\label{def:hub-cycle}
A \textbf{hub-cycle arena} is a strongly connected, one-player arena where at most one vertex has an out-degree strictly greater than one. A game is \textbf{hub-cycle} if its underlying arena is a hub-cycle arena, and can be edge-colored or vertex-colored.
\end{definition}

We find that positional determinacy for edge-colored two-player games is equivalent to its restriction to edge-colored hub-cycle games (which are one-player games), as stated by our characterization.

To study vertex-colored games, we begin by introducing a generalization of parity objectives, where priorities are assigned to finite words on colors instead of individual colors.
For $\Gamma$ a non-empty possibly infinite set of colors, and $W \subseteq \Gamma^\omega$, and for every letter $a \in \Gamma$, we define the set of winning finite cyclic words anchored at $a$ as $W_{a,f} = \{u \in a\Gamma^* \mid u^\omega \in W\}\subseteq W_f$.

\begin{definition}[Generalized parity objective on anchored words]\label{def:parity-words}
Let $\Gamma$ be a non-empty set of colors and $W \subseteq \Gamma^\omega$. We say $W$ is a \textbf{generalized parity objective on anchored words} if for every letter $a \in \Gamma$, there exists $n \in \mathbb{N}$ and a priority function $p_a: a\Gamma^* \to \{0, \dots, n\}$ such that the following holds:

For every word $w \in \Gamma^\omega$ where $a$ occurs infinitely often, let $a w_0, a w_1, a w_2, \dots$ (with $w_i \in \Gamma^*$) be any sequence of factors of $w$ such that $w = x a w_0 a w_1 a w_2 \dots$ for some $x \in \Gamma^*$.
Then $w\in W$ if and only if $\limsup_{i \to \infty} p_a(a w_i)$ is even.
\end{definition}

Note that the priority functions $(p_a)_{a \in \Gamma}$ of a generalized parity objective on anchored words do not specify membership in $W$ for sequences where no colors occur infinitely often. If $\Gamma$ is finite, they fully specify $W$, though. Also note that in general, these functions must be consistent with each other, e.g. when two colors occur infinitely often in a sequence.

\begin{example}
We illustrate generalized parity objectives on anchored words with our running example $W = (a+b)^*(ab)^\omega$ on $\Gamma=\{a,b\}$. Define $p_a(au) = 0$ if $u \in b(ab)^*$ and $p_a(au) = 1$ otherwise; define $p_b(bu) = 0$ if $u \in a(ba)^*$ and $p_b(bu) = 1$ otherwise. If a word $w$ contains infinitely many $a$, we present $w$ as $w = x a w_0 a w_1 a w_2 \dots$, and otherwise as $w = x b w_0 b w_1 b w_2 \dots$. In both cases, the even priority $0$ is only achieved by traces that eventually only alternate between $a$ and $b$.

\end{example}

We add the condition~\ref{sec2:2c} to our equivalence on vertex-colored games below. Condition~\ref{sec2:2c} is a useful and informative intermediate step from condition~\ref{sec2:2a} to condition~\ref{sec2:2d}.

\begin{enumerate}
\item If $\Gamma$ is finite, the following are equivalent for any prefix-independent objective $W\subseteq \Gamma^\omega$:
\begin{enumerate}
    \item $W$ is positionally determined on all vertex-colored hub-cycle one-player arenas.\label{sec2:2a}
    \item $W$ is positionally determined on all vertex-colored two-player arenas.\label{sec2:2b}
    \item $W$ is a generalized parity objective on pairs.\label{sec2:2d}
    \item $W$ is a generalized parity objective on anchored words.\label{sec2:2c}
\end{enumerate}
\end{enumerate}

As in the edge-colored case, the implication \ref{sec2:2b} $ \Rightarrow$ \ref{sec2:2a} is direct since hub-cycle games are a subclass of two-player games. 
We prove \ref{sec2:2a} $ \Rightarrow $ \ref{sec2:2c} by progressing up to Lemma~\ref{post-det-full-conseq} using techniques similar to the edge-colored proof in~\cite{CN06}.
Our proof holds even when $\Gamma$ is infinite: 

\begin{restatable}{theorem}{revcase}\label{thm:pose-parity-words}
    \begin{mdframed}[innertopmargin=+0.5em,innerbottommargin=+0.5em]
    Let $\Gamma$ be a non-empty possibly infinite set of colors. If $W \subseteq \Gamma^\omega$ is prefix-independent and positionally determined on vertex-colored hub-cycle arenas, then $W$ is a generalized parity objective on anchored words.
    \end{mdframed}
\end{restatable}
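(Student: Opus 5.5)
The plan is to fix a letter $a\in\Gamma$ and reduce to the edge-colored setting. We treat words in $a\Gamma^*$ as \emph{macro-colors}. The alphabet is $\Delta_a := a\Gamma^*$. Every word $w$ in which $a$ occurs infinitely often factors as $w = x\, (a w_0)(a w_1)\cdots$, so it determines a word $\hat w = (aw_0)(aw_1)\dots \in \Delta_a^\omega$. We define $\widehat{W}_a := \{\hat w \in \Delta_a^\omega \mid \text{the flattening of } \hat w \text{ lies in } W\}$. Since $W$ is prefix-independent, $\widehat{W}_a$ is prefix-independent on $\Delta_a^\omega$. It is also independent of the choice of $x$ and of the factorization: different factorizations of $w$ at occurrences of $a$ differ only by regrouping finitely many blocks at the start, or by merging consecutive blocks into a longer block starting with $a$.

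The first key step is to show that $\widehat{W}_a$ is positionally determined on edge-colored one-player arenas. Given an edge-colored one-player arena $B$ over $\Delta_a$, I replace each edge $(v, a u, v')$ with $u = c_1\dots c_k$ by a fresh path of vertices. The first new vertex is colored $a$ and owned by nobody relevant, i.e. it has out-degree $1$; it is followed by vertices colored $c_1,\dots,c_k$ and then leads to $v'$. Each original vertex $v$ is given out-degree equal to its choices. To make this vertex-colored, I make the original vertices colorless by merging: $v$ itself is identified with the shared entry point, but different outgoing edges start with different words. So I instead let the original vertex $v$ carry color $a$ and let the path for $a u$ carry only $c_1\dots c_k$. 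All out-edges of $v$ then share the color $a$, which is exactly why anchoring at $a$ is needed. Plays in the new vertex-colored arena correspond bijectively to plays of $B$, and their traces are the flattenings. The new vertices have out-degree $1$, so positional strategies coincide. Hence positional determinacy transfers, provided the hypothesis covers these arenas.

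The hypothesis only covers hub-cycle arenas, so I do not aim for full one-player positionality. Instead I invoke the arguments behind Proposition~\ref{prop:1p-pos-edge} and the Colcombet--Niwiński/Kopczyński construction of the priority function, noting that they only ever use hub-cycle arenas: a single hub vertex with cycles labelled by words, plus strongly connected gadgets. This yields, for $\widehat W_a$, the two exchange properties of Proposition~\ref{prop:1p-pos-edge} for languages $L,L'\subseteq \Delta_a^*$. It also yields the usual consequences: $\widehat{W}_{a,f}$ is closed under the relevant "mixing" operations, namely if $u^\omega, v^\omega\in W$ then $(uv)^\omega\in W$, and similarly for losing cycles. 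Following Kopczyński's proof, I then build the priority function $p_a$ on $\Delta_a$ by the standard inductive stratification. The top level is the set of macro-colors $c$ such that for every $d$ the outcome of $(cd)^\omega$ equals that of $c^\omega$. I remove that set and recurse, assigning even or odd priorities according to whether $c^\omega\in W$.

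The main obstacle is twofold. First, one must check that the Kopczyński argument really needs only hub-cycle arenas. Second, and most delicate, the number of levels must be finite even when $\Gamma$ (hence $\Delta_a$) is infinite. For finiteness I would try to reuse the edge-colored argument showing that an infinite alternating chain of strata yields a word on which a hub with infinitely many cycles has no positional winner. This word is itself realizable as a single vertex-colored hub-cycle arena, possibly with infinitely many cycles. Finally, verifying the defining property of Definition~\ref{def:parity-words} requires that $\limsup p_a$ decides membership for arbitrary infinite sequences of blocks, not only ultimately periodic ones. I would handle this as in the edge case, via a hub-cycle arena with one cycle per block-type occurring infinitely often, combined with positional determinacy and prefix-independence.
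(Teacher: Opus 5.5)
Your route is correct in outline but differs from the paper's. You turn each anchored block $au\in a\Gamma^*$ into a macro-letter, so Definition~\ref{def:parity-words} says exactly that the flattening-preimage $\widehat W_a\subseteq\Delta_a^\omega$ is a generalized parity objective over $\Delta_a$. Your gadget (every original vertex colored $a$, each edge $au$ unfolded into a path colored $u$) correctly sends edge-colored hub-cycle arenas over $\Delta_a$ to vertex-colored hub-cycle arenas. The theorem thus becomes an instance of the edge-colored Colcombet--Niwi\'nski/Kopczy\'nski result. This packaging also explains why anchoring is what vertex coloring forces: the hub has a single color, so every cycle through it starts with $a$.

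The paper never leaves $\Gamma$ and redoes that machinery directly on anchored words:
\begin{itemize}
\item mixing of $W_{a,f}$ and $\overline{W}_{a,f}$ (Lemma~\ref{lem:pos-properties-mix});
\item quantifier exchange and absorption (Lemmas~\ref{lem:1p-pos}--\ref{lem:absorb-one});
\item the finite-index preorder induced by $g_b$ (Lemma~\ref{lem:preord-eq});
\item the blocks of Lemma~\ref{post-det-full-conseq}, with $p_a=p_{a,a}$.
\end{itemize}

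What the paper's route buys is self-containedness. As you anticipate, Proposition~\ref{prop:1p-pos-edge} and Kopczy\'nski's theorem assume positionality on \emph{all} one-player edge-colored arenas, and your gadget sends general such arenas outside the hypothesis. So they cannot simply be cited. You must re-run the edge-colored proof and check two things:
\begin{itemize}
\item it uses only mixing of infinitely many winning (resp.\ losing) cycles at a single hub, plus prefix-independence;
\item it still yields finitely many priorities over $\Delta_a$, which is infinite even when $\Gamma$ is finite.
\end{itemize}
That check is precisely the paper's lemma chain, so the reduction clarifies the picture but saves no work.

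Two points in your sketch need sharpening when you carry this out.
\begin{itemize}
\item Your top-down stratification peels off letters $c$ such that $(cd)^\omega$ and $c^\omega$ have the same outcome for every remaining letter $d$. This only works once single-letter domination is upgraded to words: $(cu)^\omega$ must have the outcome of $c^\omega$ for every word $u$ over the remaining letters. That upgrade is what an absorption result like Lemma~\ref{lem:absorb-one} provides.
\item The final $\limsup$ verification cannot rely on ``one cycle per block-type occurring infinitely often'', because a sequence may have no recurring macro-letter at all. Instead, cut the sequence into segments headed by top-priority letters, absorb each segment, and apply infinite mixing, as in item~\ref{Q9.6} of Lemma~\ref{post-det-full-conseq}.
\end{itemize}
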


We start with general results on languages, then show properties implied by prefix-independence and positional determinacy.

\begin{lemma}\label{lem:na-Wf} Let $W \subseteq \Gamma^\omega$.
	\begin{enumerate}
		\item\label{lem:na-Wf1} $(\overline{W})_f = \Gamma^+ \setminus W_f$.
		\item\label{lem:na-Wf2} Assume $\forall u,v \in \Gamma^+, uv \in W_f \Rightarrow vu \in W_f$. Then $\forall u,v \in \Gamma^+,uv \in (\overline{W})_f \Leftrightarrow vu \in (\overline{W})_f$.
	\end{enumerate}	
\end{lemma}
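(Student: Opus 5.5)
Both items follow by unfolding the definitions of $W_f$ and of the complement; no property of $W$ is needed beyond the hypothesis stated in item~\ref{lem:na-Wf2}.

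For item~\ref{lem:na-Wf1}, I fix $u \in \Gamma^+$. By definition, $u \in (\overline{W})_f$ iff $u^\omega \in \overline{W} = \Gamma^\omega \setminus W$. Since $u$ is non-empty, $u^\omega$ is a well-defined element of $\Gamma^\omega$, so this holds iff $u^\omega \notin W$, that is, iff $u \notin W_f$. Both sets are subsets of $\Gamma^+$, so this gives $(\overline{W})_f = \Gamma^+ \setminus W_f$. The only point to watch is that the empty word is excluded on both sides, which is why $\Gamma^+$ appears rather than $\Gamma^*$.

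For item~\ref{lem:na-Wf2}, I take $u,v \in \Gamma^+$ and prove only the forward implication $uv \in (\overline{W})_f \Rightarrow vu \in (\overline{W})_f$. The converse is the same statement with the roles of $u$ and $v$ exchanged. I argue by contraposition using item~\ref{lem:na-Wf1}. Suppose $vu \notin (\overline{W})_f$. Since $vu \in \Gamma^+$, item~\ref{lem:na-Wf1} gives $vu \in W_f$. The hypothesis, applied to the pair $(v,u)$, yields $uv \in W_f$. Applying item~\ref{lem:na-Wf1} again, and noting $uv \in \Gamma^+$, gives $uv \notin (\overline{W})_f$. This is the contrapositive of the forward implication.

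There is no real obstacle here. The only care needed is to check that each word to which item~\ref{lem:na-Wf1} is applied lies in $\Gamma^+$, which holds because $u$ and $v$ are non-empty. One should also note that the hypothesis of item~\ref{lem:na-Wf2} is used with its arguments swapped, which is legitimate because it is quantified over all pairs.
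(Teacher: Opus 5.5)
Your proof is correct and follows the paper's argument. Item 1 is proved by unfolding the definitions on $\Gamma^+$, and item 2 by two applications of item 1 around the hypothesis applied to the swapped pair, with symmetry giving the converse. The only cosmetic difference is that you contrapose the whole implication, whereas the paper applies the contrapositive of the hypothesis directly.
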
	

\begin{proof}
	\begin{enumerate}
		\item Let $u \in \Gamma^+$. Then $u \in (\overline{W})_f$ iff $u^\omega \in \overline{W}$ iff $u^\omega \notin W$ iff $u \notin W_f$ iff $u \in \Gamma^+ \setminus W_f$. (Note that $\varepsilon \notin (\overline{W})_f \cup (\Gamma^+ \setminus W_f)$.) 
		\item Let $u,v \in \Gamma^+$ be such that $uv \in (\overline{W})_f$. So $uv \notin W_f$ by Lemma~\ref{lem:na-Wf}.\ref{lem:na-Wf1}. By contraposition of the assumption, $vu \notin W_f$. Again by Lemma~\ref{lem:na-Wf}.\ref{lem:na-Wf1}, $vu  \in (\overline{W})_f$ (since $vu \in \Gamma^+$). This implication is an equivalence by symmetry. \qedhere
	\end{enumerate}	
\end{proof}

\begin{restatable}{lemma}{faBasic}\label{lem:fa-basic}
Consider $W\subseteq\Gamma^\omega$ prefix-independent.	 Let $K,L \subseteq \Gamma^*$. 
\begin{enumerate}
	\item\label{lem:fa-basic1}  $K L \subseteq W_f$ iff $L K \subseteq W_f$.
	\item\label{lem:fa-basic2}  $ K L \subseteq (\overline{W})_f$ iff $ L K \subseteq (\overline{W})_f$.	
\end{enumerate}
\end{restatable}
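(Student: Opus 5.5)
The plan is to reduce both items to the single-word fact that, for a prefix-independent $W$, a cyclic rotation does not change membership in $W_f$. Concretely, the first step is to show that for all $u,v \in \Gamma^*$ with $uv \in \Gamma^+$ we have $uv \in W_f \iff vu \in W_f$. Indeed, $(vu)^\omega = v(uv)^\omega$. By prefix-independence ($\Gamma W = W$, hence $\Gamma^* W = W$ by induction on the length of the prefix, and also removal of a finite prefix preserves membership), $(uv)^\omega \in W$ iff $v(uv)^\omega \in W$, that is, iff $(vu)^\omega \in W$. Both $uv$ and $vu$ are nonempty simultaneously, so membership in $\Gamma^+$ matches. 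If $uv=\varepsilon$, then neither side lies in $W_f \subseteq \Gamma^+$, and the equivalence holds trivially.

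For item~\ref{lem:fa-basic1}, assume $KL \subseteq W_f$ and take $l \in L$, $k \in K$. Then $kl \in W_f$, so by the rotation fact $lk \in W_f$; hence $LK \subseteq W_f$. The converse follows by exchanging the roles of $K$ and $L$.

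For item~\ref{lem:fa-basic2}, I would proceed in one of two ways. The first is to note that $\overline{W}$ is also prefix-independent: if $w \notin W$ then $aw \notin W$, since otherwise $aw \in W=\Gamma W$ gives $w\in W$ after removing a letter, which prefix-independence as a tail property allows. One then applies item~\ref{lem:fa-basic1} to $\overline{W}$. The second is to combine the rotation fact with Lemma~\ref{lem:na-Wf}.\ref{lem:na-Wf2}.

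The only subtle point is justifying that prefix-independence in the form $\Gamma W = W$ permits both adding and removing a prefix. The inclusion $\Gamma W \subseteq W$ gives adding a letter. The inclusion $W \subseteq \Gamma W$ alone does not directly give removal, so I would read $\Gamma W=W$ in the standard sense of the paper's informal description, "unaffected by the addition or removal of any finite initial sequence". Under that reading, $aw\in W \iff w\in W$ for every letter $a$. A second minor point is the empty word, which is handled by the convention $W_f\subseteq\Gamma^+$ noted above.
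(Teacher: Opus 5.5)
Your proposal is correct and takes essentially the paper's route. The paper also proves item~\ref{lem:fa-basic1} word by word, using the rotation identity $(uv)^\omega = u(vu)^\omega$ together with prefix-independence, and it obtains item~\ref{lem:fa-basic2} by applying Lemma~\ref{lem:na-Wf}.\ref{lem:na-Wf2} to item~\ref{lem:fa-basic1}, which is your second option. Your closing caveat is unnecessary, though: an infinite word $aw$ decomposes uniquely into its first letter and its tail, so the inclusion $W \subseteq \Gamma W$ already gives $aw \in W \Rightarrow w \in W$, and no appeal to the informal description is needed. You in fact use exactly this step when you argue that $\overline{W}$ is prefix-independent.
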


\begin{proof}
	\begin{enumerate}
		\item By symmetry it suffices to prove one implication. Let us assume that $K L \subseteq W_f$. Consider a word in $L K$, so the word can be written as $v u$ with $v \in L$ and $u \in K$. By assumption $u v \in W_f$, so $(u v)^\omega \in W$ by definition of $W_f$, so $u (v u)^\omega \in W$, so $(v u)^\omega \in W$ by prefix independence of $W$, so $v u \in W_f$. This shows that $L K \subseteq W_f$.
		\item The second result follows by applying Lemma~\ref{lem:na-Wf}.\ref{lem:na-Wf2} on Lemma~\ref{lem:fa-basic}.\ref{lem:fa-basic1}. \qedhere
		\end{enumerate}
\end{proof}

We use the following definition in the reminder of the section:

\begin{definition}
For $W \subseteq \Gamma^\omega$, if for all $a \in \Gamma$, we have $(W_{a,f})^\omega \subseteq W$ then we say that $W_{a,f}$ mixes well.
\end{definition}

\begin{restatable}{lemma}{propMix}\label{lem:pos-properties-mix}
	If $W \subseteq \Gamma^\omega$ is positionally determined on vertex-colored hub-cycle arenas, then $W_{a,f}$ mixes well. Symmetrically, for all $a \in \Gamma$, we have that $(\overline{W}_{a,f})$ mixes well.
\end{restatable}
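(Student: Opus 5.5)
The idea is to realise an arbitrary word of $(W_{a,f})^\omega$ as a play in a single vertex-colored hub-cycle arena whose hub has color $a$. We give the hub to the opponent of the player we want to help, and then use positional determinacy to obtain a contradiction. Neither prefix-independence nor finiteness of $\Gamma$ is needed, since arenas may be infinite.

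\textbf{Construction.} Fix $a \in \Gamma$ and $w = u_0u_1u_2\dots$ with every $u_i \in W_{a,f}$. Then $u_i = a x_i$ for some $x_i \in \Gamma^*$ and $u_i^\omega \in W$. Build an arena with a hub vertex $h$ colored $a$. For each $i \in \N$ with $x_i = c_1\dots c_k$, add fresh vertices $v_{i,1},\dots,v_{i,k}$ colored $c_1,\dots,c_k$, and add edges $h \to v_{i,1} \to \dots \to v_{i,k} \to h$. If $x_i = \varepsilon$, simply add a self-loop on $h$. Each edge carries the color of its source vertex, so the arena belongs to $\mathcal{A}_v$. It is strongly connected, and only $h$ has out-degree greater than one, so it is a hub-cycle arena once we make it one-player. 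From $h$, the traces of plays are exactly the words $u_{j_0}u_{j_1}u_{j_2}\dots$ for arbitrary choices $j_0, j_1, \dots$ of cycles at each visit to $h$. In particular $w$ is a trace, obtained by taking cycle $i$ on the $i$-th visit.

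\textbf{Argument for $W_{a,f}$.} Let all vertices belong to Adam, so $V_E=\varnothing$ and Eve has only the trivial strategy. Suppose, for contradiction, that $w \notin W$. Then Eve's trivial strategy is not winning from $h$, because the play with trace $w$ is consistent with it. By positional determinacy of $W$ on this arena, Adam has a positional winning strategy from $h$. Such a strategy chooses one fixed cycle $i$ at $h$, and the resulting play has trace $u_i^\omega \in W$, which is a win for Eve. This contradicts that Adam's strategy is winning, so $w \in W$, and hence $(W_{a,f})^\omega \subseteq W$.

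\textbf{Argument for the complement.} The statement for $\overline{W}_{a,f}$ is symmetric. Take $u_i$ with $u_i^\omega \notin W$ and give all vertices to Eve. If $w \in W$, Adam's trivial strategy is not winning, so Eve has a positional winning strategy. That strategy fixes some cycle $i$ and produces $u_i^\omega \notin W$, a contradiction. The only point needing care is verifying that the constructed arena meets the definition of a vertex-colored hub-cycle arena. This includes the degenerate case $u_i = a$, which gives a self-loop at $h$, and the possibly infinite out-degree of $h$, which the definition allows.
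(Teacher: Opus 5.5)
Your proof is correct and follows the paper's argument: you give Adam an arena with one cycle per $u_i$ at a hub colored $a$, so any positional winning strategy for Adam would produce some $u_i^\omega \in W$, and hence every play, in particular $u_0u_1u_2\dots$, is won by Eve. For the complement you rerun the argument with Eve owning the vertices, which is what the paper's duality remark (swap Eve and Adam, replace $W$ by $\overline{W}$) amounts to. Your explicit construction (fresh vertices per cycle, a self-loop when $u_i = a$, infinite out-degree of the hub) fills in details the paper leaves implicit.
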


\begin{proof}
Let $(u_n)_{n \in \N}$ be a family with members in $W_{a,f} = W_f \cap a\Gamma^*$. We want to prove that $u_0u_1u_2\dots \in W$. Let $(v_n)_{n \in \N}$ be a family such that $\forall n \in \N, u_n = av_n$.

Consider a hub-cycle game controlled by Adam with hub $a$. For each $n\in\mathbb{N}$, there is a cycle starting and ending at $a$ that traverses a sequence of states, resulting in $v_n$.

We first argue that Adam does not have any winning strategies: otherwise he would have a positional winning strategy. It would yield a play of the form $av_n = u_n$ for some $n$, contradiction since $u_n^\omega \in W$. Since Adam could yield the (non-positional) play $u_0u_1u_2\dots \in \Gamma^\omega$, it follows that $u_0u_1u_2\dots \in W$.

The second result holds directly since $\overline{W}$ also guarantees positional determinacy. 
Indeed, for any hub-cycle game $G$ and objective $W$, Adam has a winning strategy for $W$ in $G$ iff Eve has a winning strategy for $\overline{W}$ in the dual hub-cycle game $G'$ where Eve and Adam are inverted. It follows that  $\overline{W}$ is positionally determined on $G$ iff $W$ is positionally determined on $G'$, and this holds by assumption. 
\end{proof}

\begin{restatable}{lemma}{onePos}\label{lem:1p-pos}
Let $W \subseteq \Gamma^\omega$ be a prefix-independent objective that is positionally determined on vertex-colored hub-cycle arenas. Then, for any $a, b \in \Gamma$ and $L, L' \subseteq \Gamma^*$:
\begin{align*}
&\forall u \in L, \exists v \in L', aubv \in W_f \iff \exists v \in L', \forall u \in L, aubv \in W_f \\
&\forall u \in L, \exists v \in L', aubv \in \overline{W}_f \iff \exists v \in L', \forall u \in L, aubv \in \overline{W}_f
\end{align*}
\end{restatable}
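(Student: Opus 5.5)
The right-to-left implications are immediate: if a single $v\in L'$ satisfies $aubv\in W_f$ for every $u\in L$, it witnesses the existential for each $u$. So the work is in the left-to-right direction. I will prove it for $W$ first, and then obtain the second line by applying the same argument to $\overline{W}$. This is legitimate because $\overline{W}$ is also prefix-independent and positionally determined on vertex-colored hub-cycle arenas, by the duality argument already used in Lemma~\ref{lem:pos-properties-mix}.

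Assume $\forall u\in L,\exists v\in L', aubv\in W_f$. I will build a vertex-colored arena modelled on Kopczy\'nski's construction behind Proposition~\ref{prop:1p-pos-edge}, in which Adam picks $u$ and Eve then picks $v$. It consists of the following parts.
\begin{itemize}
\item A hub vertex colored $a$, owned by Adam.
\item For each $u\in L$, a path of fresh vertices spelling $u$, leading from the hub to a single shared vertex colored $b$.
\item The shared $b$-vertex, owned by Eve.
\item For each $v\in L'$, a path of fresh vertices spelling $v$, leading from the $b$-vertex back to the hub.
\end{itemize}
The anchors $a$ and $b$ are exactly what make a vertex-colored version possible. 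Both choice points carry a single fixed color, so the outgoing edges of a vertex all have the same color, as vertex-coloring requires. This is why the lemma is stated for words of the form $aubv$ rather than arbitrary $uv$. If $L$ or $L'$ is empty, the statement is trivial or vacuous and I treat it separately.

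The main obstacle is that this arena has two branching vertices, so it is not a hub-cycle arena, and the hypothesis applies only to one-player hub-cycle arenas. I therefore plan to avoid the two-player arena altogether and argue directly with one-player hub-cycle games whose hub is the $b$-vertex.

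Suppose, for contradiction, that for every $v\in L'$ there is some $u_v\in L$ with $au_vbv\notin W_f$. Equivalently, $au_vbv\in\overline{W}_f$, and by Lemma~\ref{lem:fa-basic} (rotation) also $bvau_v\in\overline{W}_f$. Now consider the hub-cycle arena with hub $b$, owned by Eve, in which each cycle spells $bvau_v$ for $v\in L'$. A positional strategy uses a single cycle forever and produces $(bvau_v)^\omega\notin W$, so Eve has no positional winning strategy. By positional determinacy, Adam wins from the hub, and since Adam owns no vertex this means every play loses for Eve.

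The contradiction has to come from a play that is winning. By the hypothesis, for each $v$ the cycle's word $bvau_v$ is a rotation of $au_vbv$, and there is some $v'\in L'$ with $au_vbv'\in W_f$. I would like to exhibit an infinite play $\cdots bv_1au_{v_1}bv_2au_{v_2}\cdots$ that lies in $W$. My plan is to choose the sequence greedily, setting $v_{k+1}$ to be a witness for $u_{v_k}$, so that every block $au_{v_k}bv_{k+1}$ lies in $W_{a,f}$. Then Lemma~\ref{lem:pos-properties-mix} (mixing) together with prefix-independence gives a play in $W$, contradicting the previous paragraph.

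The delicate point is that this greedy sequence uses cycles $bv_{k+1}au_{v_{k+1}}$, whose $u$-part is forced to be $u_{v_{k+1}}$. Regrouping the play as $\cdots a u_{v_k} b v_{k+1}\, a u_{v_{k+1}} b v_{k+2}\cdots$ shows that the pairing is consistent: each $a$-anchored block is exactly $au_{v_k}bv_{k+1}$, which is winning. So the regrouping should go through. I expect verifying this bookkeeping, and the degenerate empty cases, to be where the care is needed.
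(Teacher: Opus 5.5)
Your argument is correct and is essentially the paper's proof. It uses the same alternating choice of witnesses $v$ (from the hypothesis) and losing partners $u_v$ (from the negated conclusion): one grouping of the resulting word gives winning blocks $au_{v_k}bv_{k+1}\in W_{a,f}$, and the other gives losing blocks $bv_kau_{v_k}$. There are two differences. You rederive the mixing of $\overline{W}_{b,f}$ inline via a hub-cycle game with hub $b$ instead of citing Lemma~\ref{lem:pos-properties-mix}. You also handle $a\neq b$ directly, where the paper writes ``take $a=b$''.

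One correction to your remark on degenerate cases. For $L=L'=\emptyset$ the left-hand side holds vacuously but the right-hand side fails, so this case is an actual exception to the statement rather than a trivial one. The paper's proof, which starts from ``let $u_0\in L$'', also overlooks it. Your construction needs $L'\neq\emptyset$ both to pick $v_1$ and to give the hub an outgoing edge.
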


The statement and the proof of this lemma are similar to the ones of Proposition~\ref{prop:1p-pos-edge}

\begin{proof}
Consider the first equivalence $\forall u \in L, \exists v \in L', aubv \in W_f \iff \exists v \in L', \forall u \in L, aubv \in W_f$. The second equivalence follows by symmetry on $\overline{W}$, which is also prefix-independent and also guarantees positional determinacy by inverting Eve and Adam.

The right-to-left implication clearly holds. For the left-to-right implication, we proceed by contradiction. Assume:
$\forall u \in L, \exists v \in L', aubv \in W_f \quad \text{and} \quad \forall v \in L', \exists u \in L, aubv \notin W_f$.

Take $a=b$ and construct two sequences $(u_i)_{i \geq 0} \in L^\mathbb{N}$ and $(v_i)_{i \geq 0} \in (L')^\mathbb{N}$ by induction:
\begin{itemize}
    \item Let $u_0 \in L$. By the first part of the assumption, $\exists v_0 \in L'$ such that $au_0av_0 \in W_f$.
    \item Given $v_i\in L'$, by the second part of the assumption, $\exists u_{i+1} \in L$ such that $au_{i+1}av_i \notin W_f$.
    \item Given $u_{i+1}\in L$, by the first part of the assumption, $\exists v_{i+1} \in L'$ such that $au_{i+1}av_{i+1} \in W_f$.
\end{itemize}

Consider the infinite word $w = au_0av_0au_1av_1au_2av_2 \dots$. Since $W$ is prefix-independent, $w \in W$ iff any of its suffixes is in $W$.
By construction, every block $au_iav_i$ is in $W_f$. Since $W$ is positionally determined in one-player arenas, Lemma~\ref{lem:pos-properties-mix} implies that the infinite concatenation of these winning blocks remains in the winning set:
$(au_0av_0)(au_1av_1)(au_2av_2) \dots \in W$

Now consider the word obtained by removing prefix $au_0$ from $w$: $w' = av_0au_1av_1au_2av_2 \dots$. By prefix-independence, $w \in W \iff w' \in W$. 
However,  for all $i \geq 0$, by prefix-independence and by Lemma~\ref{lem:fa-basic}.\ref{lem:fa-basic2}, $av_iau_{i+1} \in \overline{W}_f$ iff $au_{i+1}av_i \in \overline{W}_f$, which holds. Since $\overline{W}$ is also positionally determined in one-player arenas, applying Lemma~\ref{lem:pos-properties-mix} to these blocks gives
$(av_0au_1)(av_1au_2)(av_2au_3) \dots \in \overline{W}$.

This implies $w \in \overline{W}$ by prefix-independence. We have reached a contradiction where $w \in W$ and $w \in \overline{W}$. Therefore, the quantifier exchange must hold.
\end{proof}

\begin{lemma}\label{lem:fa-basic-bis}Let $W \subseteq \Gamma^\omega$ be a prefix-independent language that guarantees positional determinacy on all vertex-colored hub-cycle arenas. 
Let $a \in \Gamma$ and $K,L \subseteq W_{a,f}$. Then $K L \subseteq W_{a,f}$. 
\end{lemma}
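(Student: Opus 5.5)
Fix $u \in K$ and $v \in L$, both in $W_{a,f}$. We must show $uv \in W_{a,f}$. Membership in $a\Gamma^*$ is immediate, since $u$ starts with $a$. So the whole task is to show $(uv)^\omega \in W$.

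The plan is to apply Lemma~\ref{lem:pos-properties-mix} directly. It states that $(W_{a,f})^\omega \subseteq W$: any infinite concatenation of words from $W_{a,f}$ lies in $W$. Take the sequence $u, v, u, v, \dots$, that is, $u_{2n} = u$ and $u_{2n+1} = v$. Every member lies in $W_{a,f}$, so Lemma~\ref{lem:pos-properties-mix} gives $u_0u_1u_2\dots = uvuvuv\dots = (uv)^\omega \in W$. Hence $uv \in W_f$, and since $uv \in a\Gamma^*$, we get $uv \in W_{a,f}$.

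This works for every choice of $u$ and $v$, so $KL \subseteq W_{a,f}$. Prefix-independence is not needed beyond what Lemma~\ref{lem:pos-properties-mix} already uses; it is only a hypothesis carried over from that lemma's setting.

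The only point needing care is that Lemma~\ref{lem:pos-properties-mix} allows an arbitrary sequence of members of $W_{a,f}$, not just a constant one. Its proof does allow this: Adam's hub-cycle arena with hub $a$ has one cycle per word in the family. Two distinct words suffice here, so the arena has just two cycles. One should also handle $u$ or $v$ equal to $a$ alone, which gives an empty cycle body, i.e., a self-loop at the hub. That is allowed in the arenas of Definition~\ref{def:arena}.

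Beyond this, I expect no real obstacle: the statement is essentially the finite, periodic special case of ``mixing well''.
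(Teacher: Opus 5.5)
Your proof is correct and essentially identical to the paper's: both apply Lemma~\ref{lem:pos-properties-mix} to the alternating family $u,v,u,v,\dots$ to obtain $(uv)^\omega \in W$. You also make explicit that $uv \in a\Gamma^*$, so that $uv \in W_{a,f}$ rather than merely $uv \in W_f$; the paper leaves this step implicit.
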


\begin{proof}
Let $u \in K$ and $v \in L$. By Lemma~\ref{lem:pos-properties-mix}, $W_{a,f}$ mixes well and we have $(uv)^\omega \in W$, so $uv \in W_f$.
\end{proof}

\begin{lemma}\label{lem:absorb-AE}
Let $W \subseteq \Gamma^\omega$ be a prefix-independent language that is positionally determined on vertex-colored hub-cycle arenas. 
Let $u \in a\Gamma^*$ and $L \subseteq a\Gamma^*$ be such that $uL \subseteq W_f$. Then $\forall n \in \N\forall v \in L^{n+1}, u^{n+1}v \in W_f$.	
\end{lemma}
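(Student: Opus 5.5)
Since $u$ and every element of $L$ begin with $a$, every word in $u^{n+1}v$ with $v\in L^{n+1}$ lies in $a\Gamma^*$. So it suffices to show that $u^{n+1}v$ can be rearranged, up to rotation, into a product of $n+1$ factors each in $uL$. Then Lemma~\ref{lem:fa-basic-bis} (closure of $W_{a,f}$ under concatenation) and Lemma~\ref{lem:fa-basic}.\ref{lem:fa-basic1} (invariance of $W_f$ under rotation, by prefix-independence) finish the argument. Write $v=\ell_0\ell_1\cdots\ell_n$ with each $\ell_i\in L$.

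The plan is an induction on $n$. For $n=0$ the claim is $u\ell_0\in W_f$, which is exactly the hypothesis $uL\subseteq W_f$. For the inductive step, I will avoid shuffling factors inside the word directly. Instead I will use that membership in $W_f$ only depends on the word $w^\omega$, together with the mixing property of Lemma~\ref{lem:pos-properties-mix}. Consider the infinite word $(u^{n+1}\ell_0\cdots\ell_n)^\omega$. The aim is to show it lies in $W$.

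Each $u\ell_i$ is in $W_{a,f}$ by hypothesis. Hence every finite product of these blocks is in $W_{a,f}$ by Lemma~\ref{lem:fa-basic-bis}, and in particular $u\ell_0u\ell_1\cdots u\ell_n\in W_{a,f}$. The difficulty is that $u^{n+1}\ell_0\cdots\ell_n$ is not a cyclic rotation of this interleaved word when $n\ge1$. The rotation lemma only lets me move a prefix to the end, not interleave. This is the main obstacle.

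To overcome it, I will peel off one copy of $u$ at a time using the inductive hypothesis combined with Lemma~\ref{lem:fa-basic}. The inductive hypothesis, applied to the tail $\ell_1\cdots\ell_n\in L^n$, gives $u^n\ell_1\cdots\ell_n\in W_f$. Rotating yields $\ell_1\cdots\ell_n u^n\in W_f$, and it is in $a\Gamma^*$, hence in $W_{a,f}$ (when $n\ge1$). Also $u\ell_0\in W_{a,f}$. Lemma~\ref{lem:fa-basic-bis} then gives $u\ell_0\,\ell_1\cdots\ell_nu^n\in W_{a,f}$. Rotating by moving the suffix $u^n$ to the front gives $u^{n+1}\ell_0\ell_1\cdots\ell_n\in W_f$ by Lemma~\ref{lem:fa-basic}.\ref{lem:fa-basic1}. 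This completes the induction.

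The subtle points to check are the following. First, the rotations are applied to singleton languages $K=\{x\}$ and $L=\{y\}$ in Lemma~\ref{lem:fa-basic}. Second, every word fed to Lemma~\ref{lem:fa-basic-bis} genuinely starts with $a$, which holds because both $\ell_1$ and $u$ lie in $a\Gamma^*$.
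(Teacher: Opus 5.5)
Your proof is correct and is essentially the paper's argument: induction on $n$, applying the induction hypothesis to the tail $\ell_1\cdots\ell_n\in L^n$, rotating to get $\ell_1\cdots\ell_nu^n\in W_{a,f}$, concatenating with $u\ell_0$ via Lemma~\ref{lem:fa-basic-bis}, and rotating back with Lemma~\ref{lem:fa-basic}.\ref{lem:fa-basic1}. Your opening discussion, about rearranging into factors from $uL$ and about the infinite word $(u^{n+1}\ell_0\cdots\ell_n)^\omega$, is never used and can be dropped.
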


\begin{proof}
	We proceed by induction on $n$.
	\begin{itemize}
		\item Case $n=0$: let $v \in L$. Then $u^{n+1}v = uv \in W_f$ by assumption of the lemma. 
		\item Let us assume that the claim holds for $n-1 \geq 0$, and consider $v \in L^{n+1}$, so $v$ can be written $wv'$ with $w \in L$ and $v' \in L^n$. By IH, $u^nv' \in W_f$, so $v'u^n \in W_f$ by Lemma~\ref{lem:fa-basic}.\ref{lem:fa-basic1}. By assumption of the lemma, $uw \in W_f$, so $uwv'u^n \in W_f$ by Lemma~\ref{lem:fa-basic-bis}, so $ u^{n+1}v = uu^nwv' \in W_f$ by Lemma~\ref{lem:fa-basic}.\ref{lem:fa-basic1} again.	Therefore the claim holds for $n$.	 \qedhere
	\end{itemize}	
\end{proof}

\begin{lemma}\label{lem:assum-commute}
Let $W \subseteq \Gamma^\omega$ be a prefix-independent language that is positionally determined on vertex-colored hub-cycle arenas. Then the following holds: 
$\forall a,b \in \Gamma, K,L \subseteq \Gamma^*, (\forall v \in L, \exists u \in K, aubv \in W_f) \Rightarrow \exists u \in K, \forall v \in L, aubv \in W_f$.

\begin{proof}
	Let $a,b \in \Gamma$ and $K,L \subseteq \Gamma^*$ be such that $\forall v \in L, \exists u \in K, aubv \in W_f$. So $\forall v \in L, \exists u \in K, bvau \in W_f$ by Lemma~\ref{lem:fa-basic}.\ref{lem:fa-basic1}. So by Lemma~\ref{lem:1p-pos}, we have $\exists u \in K,\forall v \in L, bvau \in W_f$. Hence $\exists u \in K,\forall v \in L, aubv \in W_f$ by Lemma~\ref{lem:fa-basic}.\ref{lem:fa-basic1} again.
\end{proof}
\end{lemma}

\begin{lemma}\label{lem:absorb-EA}
Let $W \subseteq \Gamma^\omega$ be a prefix-independent language that is positionally determined on vertex-colored hub-cycle arenas.
Let $u \in a\Gamma^*$ and $L \subseteq a\Gamma^*$ be such that $uL \subseteq W_f$. Then $\exists n > 0,\ \forall v \in L^+, u^nv \in W_f$.	
\end{lemma}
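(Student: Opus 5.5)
The plan is to combine Lemma~\ref{lem:absorb-AE}, which gives a ``$\forall v\,\exists n$'' statement, with the quantifier exchange of Lemma~\ref{lem:assum-commute}, which turns it into the required ``$\exists n\,\forall v$'' statement.

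\textbf{Step 1 (pointwise bound).} Let $v \in L^+$, say $v \in L^{k}$ with $k \geq 1$. Applying Lemma~\ref{lem:absorb-AE} with $n = k-1$ gives $u^{k}v \in W_f$. Hence
\[
\forall v \in L^+,\ \exists m \geq 1,\ u^m v \in W_f .
\]

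\textbf{Step 2 (matching the shape of Lemma~\ref{lem:assum-commute}).} That lemma needs words of the form $a x b y$. Since $u \in a\Gamma^*$, write $u = a u_0$ with $u_0 \in \Gamma^*$. Every element of $L^+$ lies in $a\Gamma^*$, because $L \subseteq a\Gamma^*$; so write each $v \in L^+$ as $v = a v'$. Now set $b := a$ and define
\[
K := \{u_0 u^{m-1} \mid m \geq 1\} \subseteq \Gamma^*, \qquad L' := \{v' \in \Gamma^* \mid a v' \in L^+\} .
\]
Then for $x = u_0u^{m-1} \in K$ and $v' \in L'$ we have $a x a v' = u^m v$, where $v = av'$. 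So Step~1 reads exactly $\forall v' \in L',\ \exists x \in K,\ a x a v' \in W_f$.

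\textbf{Step 3 (exchange).} Lemma~\ref{lem:assum-commute} applies to $a$, $b=a$, $K$ and $L'$. It yields some $x = u_0u^{n-1} \in K$, with $n \geq 1$, such that $a x a v' \in W_f$ for all $v' \in L'$. Translating back, $u^n v \in W_f$ for every $v \in L^+$, with $n > 0$, as required.

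The only delicate point is the bookkeeping in Step~2: $K$ must be allowed to contain the empty word (when $u = a$), which is fine since Lemma~\ref{lem:assum-commute} allows $K, L' \subseteq \Gamma^*$. Also, the fact that $L^+$ is infinite and has unbounded lengths is exactly what Lemma~\ref{lem:assum-commute} handles, and no further pumping argument should be needed.
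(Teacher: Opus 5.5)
Your proof is correct and follows the paper's argument essentially verbatim. The paper also derives $\forall v \in L^+\ \exists n>0,\ u^nv \in W_f$ from Lemma~\ref{lem:absorb-AE}, rewrites it using $K := \{u'u^n \mid n \in \N\}$ with $u = au'$ and $L'' := L'L^*$ with $L = aL'$ (exactly your $K$ and $L'$), and concludes with the quantifier exchange of Lemma~\ref{lem:assum-commute} taking $b = a$.
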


\begin{proof}
Let $u' \in \Gamma^*$ be such that $u = au'$ and let $K := \{u'u^n \mid n \in \N\}$. Also let $L' \subseteq \Gamma^*$ be such that $L = aL'$, and let $L'' := L'L^*$. 

By Lemma~\ref{lem:absorb-AE} we have $\forall v \in L^+ \exists n > 0, u^nv \in W_f$, i.e. $\forall y \in L'' \exists x \in K, axay \in W_f$. So $\exists x \in K \forall y \in L'', axay \in W_f$ by Lemma~\ref{lem:assum-commute}. Said otherwise, $\exists n > 0 \forall v \in L^+, u^nv \in W_f$.
\end{proof}

\begin{restatable}{lemma}{absorbOne}\label{lem:absorb-one}
Assume that $W \subseteq \Gamma^\omega$ is positionally determined on vertex-colored hub-cycle arenas. 
Let $u \in W_{a,f}$ and $L \subseteq a\Gamma^*$ be such that $uL \subseteq W_f$. Then $uL^* \subseteq W_f$.
\end{restatable}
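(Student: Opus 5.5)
The case of $L^0=\{\varepsilon\}$ is handled by the hypothesis $u\in W_{a,f}$: indeed $u\varepsilon=u\in W_f$. So it remains to show $uv\in W_f$ for every $v\in L^+$. The plan is to argue by contradiction. We play Lemma~\ref{lem:absorb-EA} applied to $W$ against the same lemma applied to the complement $\overline{W}$. This is legitimate because $\overline{W}$ is also prefix-independent and positionally determined on vertex-colored hub-cycle arenas (swap Eve and Adam, as in the proof of Lemma~\ref{lem:pos-properties-mix}). Hence every lemma above also holds with $W_f$ replaced by $\overline{W}_f$.

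First, since $u\in a\Gamma^*$, $L\subseteq a\Gamma^*$ and $uL\subseteq W_f$, Lemma~\ref{lem:absorb-EA} yields some $n>0$ such that $u^n y\in W_f$ for all $y\in L^+$. Now suppose, towards a contradiction, that some $v\in L^+$ satisfies $uv\notin W_f$. Then $uv\in\overline{W}_f$ by Lemma~\ref{lem:na-Wf}.\ref{lem:na-Wf1}, and so $vu\in\overline{W}_f$ by Lemma~\ref{lem:fa-basic}.\ref{lem:fa-basic2}.

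Next, note that $v\in a\Gamma^*$, because $v$ is a nonempty concatenation of words of $L\subseteq a\Gamma^*$, and that $\{u\}\subseteq a\Gamma^*$. So we may apply Lemma~\ref{lem:absorb-EA} to $\overline{W}$ with the word $v$ in the role of $u$ and the language $\{u\}$ in the role of $L$. The hypothesis of that lemma is exactly $v\{u\}\subseteq\overline{W}_f$. It yields some $m>0$ such that $v^m z\in\overline{W}_f$ for all $z\in\{u\}^+$. Taking $z=u^n$ gives $v^m u^n\in\overline{W}_f$, and hence $u^n v^m\in\overline{W}_f$ by Lemma~\ref{lem:fa-basic}.\ref{lem:fa-basic2}. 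On the other hand, $v^m\in L^+$, so the first step gives $u^n v^m\in W_f$. These two conclusions contradict Lemma~\ref{lem:na-Wf}.\ref{lem:na-Wf1}. Hence $uv\in W_f$ for all $v\in L^+$, and $uL^*\subseteq W_f$.

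The only delicate point is the transfer of Lemma~\ref{lem:absorb-EA} (together with Lemmas~\ref{lem:fa-basic-bis}, \ref{lem:absorb-AE} and \ref{lem:assum-commute}, on which it rests) to $\overline{W}$. Each of these lemmas uses only prefix-independence, Lemma~\ref{lem:fa-basic}, Lemma~\ref{lem:1p-pos} and the mixing property of Lemma~\ref{lem:pos-properties-mix}. All of these are available for $\overline{W}$, either directly by their statements or by the player-swapping duality. I would therefore state explicitly that those proofs go through verbatim with $W$ replaced by $\overline{W}$. The key choice in the argument is to order the quantifiers as follows: fix $n$ from $W$ first, then choose the bad $v$, and only then obtain $m$ from $\overline{W}$. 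This ordering is what makes the word $u^n v^m$ lie simultaneously in $W_f$ and in $\overline{W}_f$.
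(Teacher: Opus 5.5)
Your proof is correct, but it takes a different route from the paper's.

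\textbf{The paper's route.} It also starts from Lemma~\ref{lem:absorb-EA} applied to $W$. It then sets $E=\{n\geq 1 \mid u^nL^+\subseteq W_f\}$ and shows that $E$ is upward closed; this is the step that uses $u\in W_{a,f}$ and the mixing of $W_{a,f}$. Finally it refutes $k=\min E>1$ by a doubling trick. A bad $v$ with $u^{k-1}v\in\overline{W}_f$ gives $vu^{k-1}u^{k-1}v\in\overline{W}_f$ by mixing of $\overline{W}_{a,f}$, hence $u^{2k-2}vv\in\overline{W}_f$. But $2k-2\in E$ and $vv\in L^+$ force $u^{2k-2}vv\in W_f$.

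\textbf{Your route.} You apply Lemma~\ref{lem:absorb-EA} a second time, to $\overline{W}$, with roles swapped: $v$ is the pumped word and $\{u\}$ is the language. This directly produces a single word $u^nv^m$ lying in both $W_f$ and $\overline{W}_f$. You do not need to re-check the proofs of the earlier lemmas for this. Their hypotheses (prefix-independence and positional determinacy on vertex-colored hub-cycle arenas) are invariant under complementation, and $\overline{\overline{W}}=W$, so you may simply instantiate them at $\overline{W}$.

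\textbf{What each buys.} Your argument avoids the minimal-exponent and upward-closure reasoning. It also never uses $u\in W_f$ except for the empty word. So it actually proves the stronger fact that $uL\subseteq W_f$ implies $uL^+\subseteq W_f$ for every $u\in a\Gamma^*$, which is consistent with Proposition~\ref{prop:3red2}. The paper's argument, on the complement side, only needs rotation and the mixing property of $\overline{W}_{a,f}$, rather than the quantifier-exchange machinery behind Lemma~\ref{lem:absorb-EA}. Your second step could be lightened the same way. Lemma~\ref{lem:absorb-AE} for $\overline{W}$ already gives $v^nu^n\in\overline{W}_f$ with the same $n$ as in your first step. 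Since $v^n\in L^+$, the contradiction with $u^nv^n\in W_f$ follows at once.

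Both proofs tacitly use prefix-independence of $W$, which the lemma statement omits but Lemma~\ref{lem:absorb-EA} requires.
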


Our proof of this lemma involves a sequence of technical lemmas.

\begin{proof}
Let $E := \{n \geq 1 \mid u^nL^+ \subseteq W_f\}$, which is non-empty by Lemma~\ref{lem:absorb-EA}.
\begin{itemize}
	\item For all $n \in E$, we have $u^nL^+ \subseteq W_f$ by definition, so $uu^nL^+ \subseteq W_f$ since $u \in W_{a,f}$ and $W_{a,f}$ mixes well by Lemma~\ref{lem:pos-properties-mix}, so $n+1 \in E$.
	\item It suffices to show that $1 \in E$. Towards a contradiction, let us assume that $k := \min E > 1$. So let $v \in L^+$ be such that $u^{k-1}v \notin W_f$. 
    	\begin{itemize}
    		\item So $u^{k-1} v \in (\overline{W})_f$ by Lemma~\ref{lem:na-Wf}.\ref{lem:na-Wf1}, so $v u^{k-1} \in (\overline{W})_f$ by Lemma~\ref{lem:1p-pos}, so $vu^{k-1} u^{k-1}v \in (\overline{W})_f$ by Lemma~\ref{lem:pos-properties-mix}. So $u^{2k-2}vv \in (\overline{W})_f$.
    		\item Moreover, on the one hand $vv \in L^+$ since $v \in L^+$, and on the other hand $2 \leq k = \min E$, so $k \leq 2k - 2$, so $2k-2 \in E$, so $u^{2k-2}vv \in W_f$ by definition of $E$, contradiction.
    	\end{itemize}
So $\min E = 1$, i.e. $uL^+ \subseteq W_f$.
\end{itemize}		
Therefore $uL^* \subseteq W_f$ since $u \in W_f$.	
\end{proof}

We now build our generalized parity structure over the words in $\Gamma^*$.

Let $b \in \Gamma$. For all $u \in W_f$, let $g_b(u) := \{v \in (\overline{W})_{b,f}  \mid uv \in W_f\}$. For all $u,u' \in W_f$, let us write $u \sqsubseteq_b u'$ if $g_b(u) \subseteq g_b(u')$, and $u \sqsubset_b u'$ if $u \sqsubseteq_b u'$ and $g_b(u) \neq g_b(u')$, and $u \sim_b u'$ if $g_b(u) = g_b(u')$.

Let $a \in \Gamma$. Let $\sqsubseteq_{a,b}$, $\sqsubset_{a,b}$, $\sim_{a,b}$ be the restrictions of $\sqsubseteq_b$, $\sqsubset_b$, $\sim_b$, respectively, to $W_{a,f}$.

\begin{lemma}\label{lem:preord-eq}
Let $W \subseteq \Gamma^\omega$ be a prefix-independent objective that is positionally determined on vertex-colored hub-cycle arenas. For all $a,b\in\Gamma$, the following holds:
\begin{enumerate}
	\item\label{lem:preord-eq1} $\sqsubseteq_{a,b}$ is a total preorder, i.e. a binary relation that is reflexive, transitive, and total.
	\item\label{lem:preord-eq2} $\sim_{a,b}$ is an equivalence relation with finite index, i.e. finitely many equivalence classes.
\end{enumerate}
\end{lemma}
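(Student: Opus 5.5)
The plan is to get reflexivity and transitivity of $\sqsubseteq_{a,b}$ for free from inclusion, and to obtain both totality and finiteness of the index from the quantifier-exchange principle. That principle comes in two orientations. Lemma~\ref{lem:1p-pos} turns $\forall u\exists v$ into $\exists v\forall u$ for words $aubv$. Lemma~\ref{lem:assum-commute} turns $\forall v\exists u$ into $\exists u\forall v$, again for words $aubv$. Throughout, every $u\in W_{a,f}$ is written $u=ax$, and every $v\in(\overline{W})_{b,f}$ is written $v=by$.

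\emph{Item 1.} Since $\sqsubseteq_{a,b}$ is defined by $g_b(u)\subseteq g_b(u')$, it is reflexive and transitive. For totality I argue by contradiction. Suppose $u=ax$ and $u'=ax'$ in $W_{a,f}$ are incomparable. Pick $v=by\in g_b(u)\setminus g_b(u')$ and $v'=by'\in g_b(u')\setminus g_b(u)$. Take $L=\{x,x'\}$ and $L'=\{y,y'\}$.
\begin{itemize}
\item Every element of $L$ has a partner in $L'$ with $a\,\cdot\,b\,\cdot\in W_f$: $x$ pairs with $y$, and $x'$ pairs with $y'$.
\item By Lemma~\ref{lem:1p-pos}, a single element of $L'$ then works for both $x$ and $x'$.
\item But $y$ fails for $x'$, because $u'v\notin W_f$, and $y'$ fails for $x$, because $uv'\notin W_f$.
\end{itemize}
This contradiction gives totality. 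Once $\sqsubseteq_{a,b}$ is a total preorder, it follows at once that $\sim_{a,b}$ is an equivalence relation.

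\emph{Item 2.} Suppose $\sim_{a,b}$ has infinitely many classes. The classes are linearly ordered by $\sqsubset_{a,b}$. By Ramsey's theorem (equivalently, every infinite linear order contains a copy of $\omega$ or of $\omega^*$), we obtain either a strictly increasing chain $u_0\sqsubset u_1\sqsubset\cdots$ or a strictly decreasing chain $u_0\sqsupset u_1\sqsupset\cdots$ in $W_{a,f}$. Write $u_j=ax_j$.

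\emph{Increasing case.} Choose $v_i=by_i\in g_b(u_{i+1})\setminus g_b(u_i)$. Monotonicity of the chain gives
\[u_jv_i\in W_f \iff j>i:\]
for $j>i$ we have $g_b(u_j)\supseteq g_b(u_{i+1})\ni v_i$, and for $j\le i$ we have $g_b(u_j)\subseteq g_b(u_i)\not\ni v_i$.
\begin{itemize}
\item For every $y_i$ there is an $x_j$ with $ax_jby_i\in W_f$, namely $j=i+1$.
\item Lemma~\ref{lem:assum-commute} then yields one $x_j$ that works for all $y_i$.
\item This fails at $i=j$.
\end{itemize}

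\emph{Decreasing case.} Choose $v_i\in g_b(u_i)\setminus g_b(u_{i+1})$. Now
\[u_jv_i\in W_f\iff j\le i.\]
\begin{itemize}
\item For every $x_j$ there is a $y_i$ that works, namely $i=j$.
\item Lemma~\ref{lem:1p-pos} then yields one $y_i$ that works for all $x_j$.
\item This fails at $j=i+1$.
\end{itemize}
So $\sim_{a,b}$ has finite index.

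The main obstacle is not the combinatorics but the bookkeeping of the two directions of quantifier exchange. The naive choice in the increasing case, applying Lemma~\ref{lem:1p-pos} with the $x_j$ universally quantified, does not yield a contradiction. Indeed, $y_0$ succeeds against every $x_j$ with $j\ge1$, so the exchange holds trivially there. One must therefore pick the orientation that matches the direction of the chain, as done above.
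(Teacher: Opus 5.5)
Your argument is correct. It shares with the paper the reduction of finite index to the absence of infinite strictly monotone chains, but the two proofs are driven by different lemmas.

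The paper does not call the quantifier-exchange lemmas here. For totality, it turns the incomparable witnesses into the cycle $vuv'u'$. This cycle is in $W_f$ because $W_{b,f}$ mixes well, while its rotation $uv'u'v$ is in $(\overline{W})_f$ because $(\overline{W})_{a,f}$ mixes well (Lemmas~\ref{lem:pos-properties-mix} and~\ref{lem:fa-basic}). For a strict chain, it picks separating witnesses $v_n$ and compares $(u_0v_0)(u_1v_1)\cdots$ with its suffix $(v_0u_1)(v_1u_2)\cdots$. One of these lies in $W$ and the other in $\overline{W}$, and increasing and decreasing chains are handled by the same symmetric argument.

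You instead derive everything from the exchange principle. Totality comes from a two-element instance of Lemma~\ref{lem:1p-pos}. For chains, the threshold pattern $u_jv_i\in W_f\iff j>i$ (resp.\ $j\le i$) is fed into Lemma~\ref{lem:assum-commute} (resp.\ Lemma~\ref{lem:1p-pos}). Since the proof of Lemma~\ref{lem:1p-pos} is itself exactly that interleave-and-shift construction, both proofs run on the same mechanism.

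Your version is more modular: it shows the total preorder and its finite index as consequences of turning $\forall\exists$ into $\exists\forall$. The cost is the orientation bookkeeping you point out. The paper's version is self-contained and needs only mixing and rotation.

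Your route also needs Lemma~\ref{lem:1p-pos} for distinct anchors $a\neq b$ and for infinite nonempty $L,L'$. Its printed proof only writes out the case $a=b$. The same construction goes through verbatim for $a\neq b$, with blocks $au_ibv_i\in W_{a,f}$ against $bv_iau_{i+1}\in(\overline{W})_{b,f}$, so this is not a gap.
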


\begin{proof}
\begin{enumerate}
	\item $\sqsubseteq_{a,b}$ is a preorder since $\subseteq$ is a preorder. Let us show that it is total. Towards a contradiction, let $u,u' \in W_{a,f}$ be incomparable, i.e.  there exists $v \in g_b(u) \setminus g_b(u')$ and $v' \in g_b(u') \setminus g_b(u)$. So, on the one hand $uv, u'v' \in W_f$, and on the other hand $uv',u'v \notin W_f$. Thus, $vu,v'u' \in W_f$ by Lemma~\ref{lem:fa-basic}.\ref{lem:fa-basic1}, so $vuv'u' \in W_f$ since $W_{b,f}$ mixes well by Lemma~\ref{lem:pos-properties-mix}, so $uv'u'v$ by Lemma~\ref{lem:fa-basic}.\ref{lem:fa-basic2}. But $uv'u'v \notin W_f$ since $(\overline{W})_{a,f}$ mixes well, again by Lemma~\ref{lem:pos-properties-mix}, and we have a contradiction.
	\item By Lemma~\ref{lem:preord-eq}.\ref{lem:preord-eq1} the classes of $\sim$ are totally ordered by the quotient order. So it suffices to show that there are no infinite strictly monotone sequences for $\sqsubseteq_{a,b}$. Towards a contradiction, let $u_0 \sqsubset_{a,b} u_1 \sqsubset_{a,b} u_2 \dots$ be such a sequence. For all $n \in \N$, let $v_n \in g_b(u_{n+1}) \setminus g_b(u_n)$, i.e. $u_nv_n \notin W_f$ and $v_nu_{n+1} \in W_f$. So on the one hand $(u_0v_0)(u_1v_1)(u_2v_2)\dots \notin W$ by Lemma~\ref{lem:pos-properties-mix}, and on the other hand $(v_0u_1)(v_1u_2)(v_2u_3)\dots \in W$ again by Lemma~\ref{lem:pos-properties-mix}, contradicting prefix-independence. 
	
	In a similar way one can show that there are no infinite decreasing sequences. \qedhere
\end{enumerate}	
\end{proof}

\begin{lemma}\label{post-det-full-conseq}
Let $W \subseteq \Gamma^\omega$ be a prefix-independent objective that guarantees positional determinacy on all vertex-colored hub-cycle arenas. For all $a,b\in\Gamma$, there exists $n \in N$ and $B_1,\dots, B_{2n+1} \subseteq \Gamma^+$ such that :
\begin{enumerate}
	\item\label{Q9.1} The $B_{2i}$ form a partition of $W_{a,f}$, and $B_{2i} \neq \emptyset$ for all $i < n$..
	
	\item \label{Q9.1b}The $B_{2i+1}$ are pairwise disjoint, their union is $(\overline{W})_{b,f}$, and $B_{2i+1} \neq \emptyset$ for all $i < n$.
	
	\item\label{Q9.2} For all $i \in \{1,\dots,n\}$ we have $B_{2i} \subseteq W_f$.
	
	\item\label{Q9.3} For all $i \in \{0,\dots,n\}$ we have $B_{2i+1} \subseteq (\overline{W})_f$.
	
	\item\label{Q9.4} For all $i \in \{1,\dots,n\}$ and $k \leq 2i$ we have $B_{2i} B_{k} \subseteq W_f$.
	
	\item\label{Q9.5} For all $i \in \{0,\dots,n\}$ and $k \leq 2i+1$ we have $B_{2i+1} B_{k} \subseteq (\overline{W})_f$.
	
	\item\label{Q9.6} Let $(u_n)_{n\in \N}$ be a family of members in $W_{a,f} \cup \overline{W}_{b,f}$. For all $n \in \N$ let $p_{a,b}(n) \in \N$ such that $u_n \in B_{p_{a,b}(n)}$. Then $u_0u_1\dots \in W$ iff $\limsup_{n \in \N}p_{a,b}(n)$ is even. 
\end{enumerate}
\end{lemma}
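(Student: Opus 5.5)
Following the edge-colored construction of~\cite{CN06}, the $B_k$ will be built as an alternating layered decomposition, using the finite total preorders of Lemma~\ref{lem:preord-eq}. By Lemma~\ref{lem:preord-eq}, $\sqsubseteq_{a,b}$ is a total preorder on $W_{a,f}$ with finitely many classes. The symmetric relation on $(\overline{W})_{b,f}$ (comparing $v$ by $h_a(v):=\{u\in W_{a,f}\mid vu\in \overline{W}_f\}$) is also a total preorder with finitely many classes; this follows by running the proof of Lemma~\ref{lem:preord-eq} for $\overline{W}$ with the roles of $a,b$ swapped. Moreover the two orders are dual: for $u\in W_{a,f}$ and $v\in(\overline{W})_{b,f}$ we have $uv\in W_f$ iff $v\in g_b(u)$ iff $u\notin h_a(v)$. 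Hence the \emph{winning relation} $R=\{(u,v)\mid uv\in W_f\}$ is a ``staircase'': the sets $g_b(u)$ form a finite chain of down-closed (in the $v$-order) subsets of $(\overline{W})_{b,f}$.

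The layers are then defined by interleaving the two chains. Order the classes of $W_{a,f}$ increasingly by $g_b$ and those of $(\overline{W})_{b,f}$ by $h_a$. Let $B_0$ be the set of $u$ with the smallest $g_b(u)$; by totality this is consistent with the class description. Let $B_1$ be the set of $v$ beating all of $B_0$ (i.e.\ $uv\in\overline{W}_f$ for $u\in B_0$), but not in $g_b(u)$ for the next class. Continue alternately, so that $B_{2i}$ is the $i$-th $\sim_{a,b}$-class region and $B_{2i+1}$ is the set of $v$ lying in $g_b(u)$ for $u\in B_{2i+2}$ but not for $u\in B_{2i}$. Finiteness gives $n$, and allowing $B_0$ or the last odd block to be empty handles the boundary cases. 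Items \ref{Q9.1}--\ref{Q9.3} are then immediate, and \ref{Q9.3} holds since the odd blocks lie in $(\overline{W})_{b,f}$.

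For items \ref{Q9.4}--\ref{Q9.5} there are three cases. Cross products $B_{2i}B_{2j+1}$ with $2j+1<2i$ lie in $W_f$ by construction of the staircase, and dually $B_{2i+1}B_{2j}$ with $2j\le 2i$ lies in $\overline{W}_f$ (using Lemma~\ref{lem:fa-basic} to rotate). Products of two even blocks lie in $W_f$ by Lemma~\ref{lem:fa-basic-bis}. Products of two odd blocks lie in $\overline{W}_f$ by the dual of Lemma~\ref{lem:fa-basic-bis} for $(\overline{W})_{b,f}$, justified by Lemma~\ref{lem:pos-properties-mix}.

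The main obstacle is item \ref{Q9.6}. The plan is as follows. Let $m=\limsup p_{a,b}(n)$ and drop a finite prefix by prefix-independence, so that all indices are $\le m$ and $m$ occurs infinitely often. Group the sequence into consecutive blocks, each starting with an occurrence of index $m$ and continuing up to the next one. If $m=2i$ is even, each block has the form $xy$ with $x\in B_{2i}$ and $y$ a product of pieces from $B_k$, $k\le 2i$. I claim each block is in $W_{a,f}$: by \ref{Q9.4}, $xB_k\subseteq W_f$ for each $k\le 2i$, so Lemma~\ref{lem:absorb-one} applied with $L=\bigcup_{k\le 2i}B_k$ gives $xL^*\subseteq W_f$. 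This needs $L\subseteq a\Gamma^*$, which is false for the odd pieces starting with $b$. To repair this I would regroup the odd pieces so that each maximal run is absorbed together with the preceding or following even piece, or alternatively reprove Lemma~\ref{lem:absorb-one} by a rotation trick via Lemma~\ref{lem:1p-pos}, whose statement allows the mixed $a\dots b\dots$ patterns. Once every block lies in $W_{a,f}$, Lemma~\ref{lem:pos-properties-mix} gives $u_0u_1\dots\in W$. The odd case is symmetric, using the anchoring at $b$ and $(\overline{W})_{b,f}$ mixing well. The anchoring mismatch between $a$-words and $b$-words in the absorption step is where I expect the real difficulty to lie.
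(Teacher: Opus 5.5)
Your layer construction and your handling of items 1--6 match the paper's. One exception is your base case. As written, your $B_1$ consists of the $v$ outside both $g_b(B_0)$ and $g_b(B_2)$, which contradicts your general rule. Also, $g_b$ of the lowest class needs an odd layer below it; this is why the paper calls the lowest class $B_2$ and sets $B_1:=g_b(B_2)$.

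For item 7 you again follow the paper, and you correctly spot the problem. Lemma~\ref{lem:absorb-one} requires $L\subseteq a\Gamma^*$, but $L=\bigcup_{m\le 2i}B_m$ contains the $b$-anchored odd layers. The paper's proof applies the lemma there without comment. Neither of your repairs can work, however, because item 7 is false when $a\neq b$.

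Take $\Gamma=\{a,b\}$ and $W=\{w \mid bb \text{ occurs only finitely often in } w\}$. This is the generalized parity objective on pairs with $p(b,b)=1$ and $p=0$ elsewhere, so it meets the hypotheses by Theorem~\ref{thm:2parity-pos}. Here $W_{a,f}$ is the set of $a$-words without factor $bb$. Moreover $g_b(u)=\emptyset$ when $u$ ends with $b$, and otherwise $g_b(u)$ is the set of $b$-words without factor $bb$ that end with $b$. Hence $n=2$, $a\in B_4$ and $b\in B_3$.

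The family $a,b,b,a,b,b,\dots$ then has $\limsup p_{a,b}=4$, yet $u_0u_1u_2\cdots=(abb)^\omega\notin W$. In particular $aL\subseteq W_f$ for $L=B_2\cup B_3\cup B_4$, but $abb\notin W_f$. The reason is that consecutive $b$-anchored pieces $y\,y'$ create the junction pair $(\mathrm{last}(y),b)$. That pair belongs to the losing word $y^\omega$ and need not occur in $(xy)^\omega$. So no regrouping of odd runs, and no rotation via Lemma~\ref{lem:1p-pos}, can absorb it.

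What you need instead is to restrict the statement to $a=b$. That is the only instance the paper uses: it takes $p_a=p_{a,a}$ in the proof of Theorem~\ref{thm:pose-parity-words}. With $a=b$ the argument goes through:
\begin{itemize}
\item $L\subseteq a\Gamma^*$, and item 5 gives $uL\subseteq W_f$ for every $u\in B_{2i}$.
\item Lemma~\ref{lem:absorb-one} then gives $uL^*\subseteq W_f$, so each segment lies in $W_{a,f}$.
\item Lemma~\ref{lem:pos-properties-mix} yields $u_0u_1\cdots\in W$.
\item The odd case is the same argument for $\overline{W}$, using item 6 and the dual of Lemma~\ref{lem:absorb-one}. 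This applies because $\overline{W}$ is also prefix-independent and positionally determined on hub-cycle arenas (exchange the players).
\end{itemize}
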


	\begin{proof}
		Let us define the $B_i$ below.
		\begin{itemize}
			\item Let $B_2 \sqsubset_{a,b} B_4 \sqsubset_{a,b} \dots \sqsubset_{a,b} B_{2n}$ be the equivalence classes of $\sim_{a,b}$. They are finitely many by Lemma~\ref{lem:preord-eq}.\ref{lem:preord-eq2}.
			
			\item Let $B_1 :=g_b(B_2)$. (Note that $\forall u,v \in B_2,g_b(u) =g_b(v)$, hence the notation $g(B_2)$ used in the remainder.)
			
			\item For all $i \in \{1,\dots,n-1\}$ let $B_{2i+1} :=g_b(B_{2i+2}) \setminus g_b(B_{2i})$
			
			\item Let $B_{2n+1} := (\overline{W})_{b,f}\setminus g_b(B_{2n})$.
		\end{itemize}

		\begin{enumerate}
			\item The $B_{2i}$ form a partition of $W_{a,f}$ by definition of $\sim_{a,b}$. 
			
			\item By construction the $B_{2i+1}$ are pairwise disjoint and their union is $(\overline{W})_{b,f}$. For all $i < n$, each  $B_{2i+1} =g_b(B_{2i+2}) \setminus g_b(B_{2i})$, which is non-empty since $B_{2i+2}$ and $B_{2i}$ are two different equivalence classes.
			
			\item The domain of $\sim_{a,b}$ is $W_{a,f}$, so for all $i \in \{1,\dots,n\}$ we have $B_{2i} \subseteq W_{a,f}$.
			
			\item The co-domain of $g_b$ is $(\overline{W})_{b,f}$, so for all $i \in \{0,\dots,n\}$ we have $B_{2i+1} \subseteq (\overline{W})_{b,f}$.
			
			\item Let $i \in \{1,\dots,n\}$ and $k \leq 2i$ (so $k \leq 2n$). Let us show that $B_{2i}B_{k} \subseteq W_f$.
			\begin{itemize}
				\item If $k$ is even, $B_{k} \subseteq W_{a,f}$. Since $B_{2i} \subseteq W_{a,f}$ too, $B_{2i} B_{k} \subseteq W_{a,f}$ by Lemma~\ref{lem:fa-basic}.\ref{lem:fa-basic1}.
				
				\item If $k$ is odd, $k \leq 2n$ implies $B_{k} \subseteq g_b(B_{k+1}) \subseteq g_b(B_{2i})$ (since $k+1 \leq 2i$), so $B_{2i} B_{k} \subseteq W_f$ by definition of $g(B_{2i})$.
			\end{itemize}
			
			\item Let $i \in \{0,\dots,n\}$ and $k \leq 2i+1$. Let us show that $B_{2i+1}B_{k} \subseteq (\overline{W})_f$.
			\begin{itemize}
				\item If $k$ is odd, $B_{k} \subseteq (\overline{W})_f$. Since $B_{2i+1} \subseteq (\overline{W})_f$ too, $B_{2i+1} B_{k} \subseteq (\overline{W})_f$ by Lemma~\ref{lem:fa-basic}.\ref{lem:fa-basic2}.
				
				\item If $k$ is even, then $k < 2i+1$ and $g(B_k) \subseteq g_b(B_{2i})$. So $B_{2i+1} \cap g_b(B_k) = \emptyset$, since $B_{2i+1} :=g_b(B_{2i+2}) \setminus g_b(B_{2i})$. Thus $B_{2i} B_{k} \subseteq (\overline{W})_f$ by definition of $g(B_{k})$.
			\end{itemize}
			\item 
Let $k = \limsup_{n \to \infty} p_{a,b}(n)$. By definition of the limit superior, there exists an index $N \in \mathbb{N}$ such that $p_{a,b}(n) \leq k$ for all $n \geq N$, and $p_{a,b}(n) = k$ for infinitely many $n \geq N$.

Because $W$ is prefix-independent, the finite prefix $u_0 \dots u_{N-1}$ does not affect membership in $W$. Let $i_0 < i_1 < i_2 < \dots$ be the strictly increasing sequence of indices $\geq N$ where the maximum priority is reached, meaning $p_{a,b}(i_j) = k$. 

We factor the infinite suffix $w = u_{i_0} u_{i_0+1} \dots$ into an infinite sequence of finite segments $w_0, w_1, w_2, \dots$ defined by $w_j = u_{i_j} u_{i_j+1} \dots u_{i_{j+1}-1}$

For each $j \in \mathbb{N}$, the segment $w_j$ consists of exactly one element from $B_k$ (which is $u_{i_j}$) followed by a finite, possibly empty, sequence of elements from blocks $B_m$ with $m < k$. Let $B_{<k} = \bigcup_{m \leq k} B_m$. Then $w_j \in B_k (B_{<k})^*$.

We have two cases, depending on the parity of $k$. Theses cases are symmetric, so we only present the case for even $k$ below.

If $k$ is even, let $k = 2i$. 
By Point \ref{Q9.4}, we know that $B_{2i} B_m \subseteq W_f$ for all $m \leq 2i$. Consequently, $B_{2i} L \subseteq W_f$. 
Applying Lemma~\ref{lem:absorb-one}, we extend this to $B_{2i} L^* \subseteq W_f$. 
Therefore, $w_j \in W_f$. Since $u_{i_j} \in B_{2i} \subseteq W_{a,f}$, $w_j$ starts with $a$, so $w_j \in W_{a,f}$. 
Because $W_{a,f}$ mixes well (by Lemma~\ref{lem:pos-properties-mix}), the infinite concatenation $w=w_0 w_1 w_2 \dots$ belongs to $W$. Thus, $u_0 u_1 \dots \in W$.

Hence $u_0 u_1 \dots \in W$ if and only if $\limsup_{n \to \infty} p_{a,b}(n)$ is even.  \qedhere
		\end{enumerate}	
		
	\end{proof}

We can now conclude, and introduce an additional result that we use later.

\begin{proof}[Proof of Theorem~\ref{thm:pose-parity-words}]
To obtain a generalized parity objective on anchored words as given in Definition~\ref{def:parity-words}, it suffices to take for every $a\in \Gamma$ the priority function $p_a = p_{a,a}$ built in Lemma~\ref{post-det-full-conseq}. 
\end{proof}

\begin{lemma}\label{lem:par-word-posdet}
Let $\Gamma$ be a non-empty set of colors, and $W \subseteq \Gamma^\omega$ be a generalized parity objective on anchored words. Then $W$ is positionally determined on vertex-colored hub-cycle arenas.
\end{lemma}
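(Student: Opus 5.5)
The plan is to analyse a vertex-colored hub-cycle arena directly. Let the arena have hub $h$ with color $a$, and suppose it is owned by Eve; the Adam case follows by the same argument applied to $\overline{W}$. Note that $\overline{W}$ is also a generalized parity objective on anchored words: replace each $p_a$ by $p_a+1$.

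\textbf{Structure of the arena.} Strong connectivity and the out-degree constraint force every non-hub vertex to have out-degree one. Each edge leaving $h$ therefore starts a deterministic path, which either returns to $h$ or falls into a cycle not containing $h$. The latter is impossible: a vertex on such a cycle could not reach $h$, contradicting strong connectivity. So every edge $e$ out of $h$ determines a finite cycle $h \to \dots \to h$, whose trace is a word $c_e \in a\Gamma^*$. Any non-hub vertex is either the hub itself or lies on one of these cycles. Since no choice is made off the hub, any play from any vertex is a finite deterministic prefix, followed (once $h$ is reached) by a concatenation $c_{e_0}c_{e_1}c_{e_2}\dots$ of cycle words.

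\textbf{Key step.} Every play visits $h$, hence color $a$, infinitely often. We decompose its trace as $x\,c_{e_0}c_{e_1}\dots$, with each $c_{e_i} = a w_i$ a factor of the required form. By Definition~\ref{def:parity-words}, the play is won by Eve iff $\limsup_i p_a(c_{e_i})$ is even. The cycles come from finitely or infinitely many edges, but the codomain of $p_a$ is the finite set $\{0,\dots,n\}$.

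\textbf{Case analysis.} First suppose some edge $e$ out of $h$ has $p_a(c_e)$ even. Then the positional strategy always choosing $e$ yields trace $x\,c_e^\omega$, with constant even limsup, so Eve wins from every vertex. Now suppose no such edge exists, so every $c_e$ has odd priority. Then every play has an odd limsup, since it is a limsup of odd values over a finite range. Hence Eve wins from nowhere and Adam trivially wins, as he has no choices. For a game owned by Adam, the symmetric argument applies: Adam wins positionally iff some cycle has odd priority, and otherwise every play has an even limsup and Eve wins.

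\textbf{Expected obstacle.} The only delicate point is that Definition~\ref{def:parity-words} applies to \emph{any} anchored factorization, so we must check that the cycle decomposition qualifies. It does: each $c_{e_i}$ begins with $a$ and the concatenation is exactly the suffix after $x$. We must also handle starting vertices off the hub, which prefix-independence, or just the finite prefix $x$, absorbs.
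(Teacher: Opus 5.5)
Your proof is correct and follows essentially the same route as the paper's. Both decompose every play at the hub into cycle words $c_e$ and use the finite codomain of $p_a$ to ensure the $\limsup$ is attained, so that the hub owner can positionally repeat a single cycle of the right parity; the paper extracts a maximal-priority cycle from a winning play, whereas you argue by a dichotomy on whether an even cycle exists, and you are more explicit about the arena structure (in an infinite arena the deterministic path from a hub edge could a priori also be an infinite ray, but your strong-connectivity argument excludes it in exactly the same way).
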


\begin{proof}
Let $G$ be a hub-cycle game where the vertices belong to Eve, with objective $W$, as in Definition~\ref{def:hub-cycle}. We assume that the vertices of the arena belong to Eve, in particular the only state with out-degree strictly greater than one, which we call the hub. The hub has some color $a\in \Gamma$. The case where it belongs to Adam can be done symetrically by replacing Adam and Eve in the remainder of the proof.

First, if Adam has a winning strategy, this strategy has to be positional since Adam does not control the hub, and it's the only vertex with more than one outgoing edge. 

Second, assume Eve has a winning strategy in $G$. Then there exists $w\in W$ resulting from a play on $G$. Since $G$ is strongly connected, it visits the hub colored with $a$ infinitely often, hence $w = a w_0 a w_1 a w_2\dots$ where the finite words $w_i$ are exactly the sequences that do not visit the hub, while every $a$ is seen at the hub. For $p_a$ the priority function on words associated to $a$, we take  $k = \limsup_{i \to \infty} p_a(a w_i)$. Then there is some $w_i$ such that $p_a(a w_i) = k$. The positional strategy playing $(a w_i)^\omega$ is winning.
\end{proof}

Although the following two results are not necessary for any of our proof, they provide additional context for the studied objectives.

\begin{restatable}{proposition}{threeOne}\label{prop:3red2}
Let $W$ be a generalized parity objective on anchored words.
	For all $a \in \Gamma$ and $u,v,w \in \Gamma^*$, if $auav, auaw \in W_f$ then $auavaw \in W_f$. Corollary (by symmetry): For all $a \in \Gamma$ and $u,v,w \in \Gamma^*$, if $auav, auaw \in\overline{W_f}$ then $auavaw \in \overline{W}_f$.
\end{restatable}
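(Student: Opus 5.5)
The plan is to apply the anchored priority function $p_a$ of Definition~\ref{def:parity-words} directly to the three periodic words $(auav)^\omega$, $(auaw)^\omega$ and $(auavaw)^\omega$. Each is factored into the blocks $au$, $av$, $aw$ rather than into the maximal $a$-free segments.

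\textbf{The factorization is allowed.} Definition~\ref{def:parity-words} quantifies over \emph{any} sequence of factors with $w = x\,a w_0\, a w_1\, a w_2 \dots$. It does not require the $w_i$ to avoid the letter $a$. So for $(auav)^\omega$ I may take $x=\varepsilon$ and the sequence $w_0,w_1,w_2,\dots = u,v,u,v,\dots$. The letter $a$ occurs infinitely often in each of the three words, so the definition applies to all of them.

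\textbf{Translating the hypotheses.} From $auav \in W_f$ we get $(auav)^\omega\in W$. The limit superior over the blocks $au,av,au,av,\dots$ is $\max(p_a(au),p_a(av))$, so this maximum is even. In the same way, $auaw\in W_f$ gives that $\max(p_a(au),p_a(aw))$ is even.

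\textbf{The case analysis.} For $(auavaw)^\omega$, the factorization into $au,av,aw,au,av,aw,\dots$ shows that membership in $W$ is equivalent to $M:=\max(p_a(au),p_a(av),p_a(aw))$ being even. The maximum $M$ is attained by at least one of the three blocks.
\begin{itemize}
\item If $M=p_a(au)$ or $M=p_a(av)$, then $M=\max(p_a(au),p_a(av))$, which is even.
\item If $M=p_a(aw)$, then $M=\max(p_a(au),p_a(aw))$, which is even.
\end{itemize}
Hence $(auavaw)^\omega\in W$, that is, $auavaw\in W_f$.

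\textbf{The corollary.} By Lemma~\ref{lem:na-Wf}.\ref{lem:na-Wf1}, a word $z\in\Gamma^+$ is not in $W_f$ exactly when $z^\omega\notin W$, which here means the corresponding maximum is odd. Running the same case analysis with ``odd'' in place of ``even'' gives the corollary: if $auav$ and $auaw$ are not in $W_f$, then $auavaw\in\overline{W}_f$.

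The only delicate step is the first one, namely checking that Definition~\ref{def:parity-words} really permits factors $w_i$ that contain $a$. This is exactly what makes $p_a$ independent of how a word is chunked into $a$-anchored blocks. Once that is granted, the rest is the elementary max argument above.
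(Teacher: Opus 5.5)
Your proof is correct and follows the same idea as the paper's: assign priorities to the three anchored blocks $au$, $av$, $aw$, and observe that the maximum of the three equals one of the two pairwise maxima, which are known to be even (resp.\ odd). The only difference is where the priorities come from. You read them directly off $p_a$ in Definition~\ref{def:parity-words}, and you check that factors may contain $a$. The paper instead appeals to the block structure of Lemma~\ref{post-det-full-conseq}. Your route is the more self-contained one, since it needs no positional-determinacy or prefix-independence hypothesis.
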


\begin{proof}		
	Since the objective $W$ (or more specifically its finitary version $W_f$) is stuctured as described in Lemma~\ref{post-det-full-conseq}, each of the three words $au$, $av$, $aw$ can be assigned a priority. If $auav, auaw \in W_f$, then the highest among these three priorities is even, which allows us to conclude.
\end{proof}

\begin{restatable}{corollary}{threeTwo}\label{corol:3red2}
Let $\Gamma$ be a finite non-empty set of colors, and $W\subseteq \Gamma^\omega$ be a generalized parity objective on anchored words.
 The objective $W$ is fully characterized by $W_f \cap \Gamma^{2|\Gamma|}$.
\end{restatable}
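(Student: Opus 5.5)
The plan has two stages: first show that $W$ is determined by $W_f$, then show that $W_f$ is determined by its short words. Throughout, $W$ and $W'$ denote two generalized parity objectives on anchored words over the finite set $\Gamma$ with $W_f \cap \Gamma^{2|\Gamma|} = W'_f \cap \Gamma^{2|\Gamma|}$, and the goal is $W = W'$. I read $\Gamma^{2|\Gamma|}$ as words of length at most $2|\Gamma|$ (exactly $2|\Gamma|$ is my less likely reading). The argument does not use Lemmas~\ref{lem:1p-pos}--\ref{post-det-full-conseq}, which were proved from positional determinacy; it uses Lemma~\ref{lem:par-word-posdet}, Lemma~\ref{lem:fa-basic} (prefix-independence only) and Proposition~\ref{prop:3red2}.

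\emph{Stage 1: $W$ is determined by $W_f$.} Let $w \in \Gamma^\omega$. Since $\Gamma$ is finite, some color $a$ occurs infinitely often in $w$. Write $w = x\,aw_0\,aw_1\cdots$ and let $k = \limsup_i p_a(aw_i)$. Choose the indices $i_0<i_1<\cdots$ beyond the point after which all block priorities are at most $k$, with $p_a(aw_{i_j}) = k$. Group $w$ into segments $s_j = aw_{i_j}\cdots aw_{i_{j+1}-1}$. By Proposition~\ref{prop:3red2} and an induction on the number of blocks, each $s_j^\omega$ belongs to $W$ exactly when $k$ is even.
\begin{itemize}
\item If $w \in W$, then cofinitely many of the anchored segments $s_j$ lie in $W_f$.
\item If $w \notin W$, then cofinitely many $s_j$ lie in $\overline{W}_f$.
\end{itemize}
Every later segment beginning with a top-priority block behaves the same way. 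Hence $w \in W$ if and only if, for every $N$, there are factorisations of $w$ beyond position $N$ into $a$-anchored segments lying in $W_f$ whose priorities dominate those of all other segments. I will make this precise purely in terms of $W_f$ as follows: $w \in W$ iff there is $N$ such that every $a$-anchored factor of the suffix of $w$ from position $N$ that begins and ends at occurrences of a top block lies in $W_f$. The finiteness of $\Gamma$ is used exactly here, to guarantee that some anchor $a$ exists.

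\emph{Stage 2: $W_f$ is determined by its short words.} Here I argue by strong induction on the length of $u$ that $u \in W_f \iff u \in W'_f$.
\begin{itemize}
\item By Lemma~\ref{lem:fa-basic}, membership in $W_f$ is invariant under rotation, so I may rotate $u$ freely.
\item If $|u| > 2|\Gamma|$, then by pigeonhole some letter $a$ occurs at least three times in $u$. After rotation, $u = a x\, a y\, a z$.
\item Proposition~\ref{prop:3red2}, used with both $W$ and $\overline{W}$, decides $u$ whenever two of its $a$-blocks already agree. More precisely, if the two-block cycles $axay$ and $axaz$ lie on the same side, then $u$ lies on that side too. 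Both cycles are strictly shorter than $u$, so the induction hypothesis applies to them.
\item By rotating, the same applies to the pairs $(ay,az)$ and $(az,ax)$.
\end{itemize}

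The main obstacle is the residual case. There the three two-block cycles $axay$, $ayaz$, $azax$ do not pairwise agree, which in parity terms means the block priorities do not settle a common maximum. My first attempt is a consistency argument among the three priorities $p_a(ax), p_a(ay), p_a(az)$, whose maximum decides $u$.
\begin{itemize}
\item Each two-block cycle is decided by the larger of its two priorities.
\item With only three values, some pair must attain the overall maximum together with its parity.
\item So one of the shorter cycles $axay$, $ayaz$, $azax$ agrees with $u$.
\end{itemize}
One then has to check that it is always identifiable. For example, when exactly one two-block cycle disagrees with the other two, the maximum must lie in the block common to the agreeing pair. If this fails, I would fall back on a direct three-way split at the letter occurring three times and compare using the $2|\Gamma|$ bound, which allows each letter to appear twice.
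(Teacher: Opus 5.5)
Your Stage~2 follows the paper's argument: pigeonhole on a letter occurring three times, rotation invariance from Lemma~\ref{lem:fa-basic}, and Proposition~\ref{prop:3red2} for $W$ and $\overline{W}$, by induction on length. However, the ``residual case'' you single out as the main obstacle is empty, and closing it is exactly what the paper's one-line proof does.

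Each of the three shorter cycles $axay$, $ayaz$, $azax$ lies either in $W_f$ or in $\overline{W}_f$, so two of them lie on the same side. Any two of them share an $a$-block after rotation: $axay, axaz$ share $ax$; $ayaz, ayax$ share $ay$; $azax, azay$ share $az$. So Proposition~\ref{prop:3red2}, with the two remaining blocks ordered suitably, puts a rotation of $u$, hence $u$ itself, on that side. Thus $u\in W_f$ iff at least two of the three cycles are in $W_f$. This rule is phrased in terms of $W_f$ alone, which is what lets your induction transfer from $W$ to $W'$.

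Your priority-based discussion only re-derives this rule in its non-unanimous case, and it is not needed. The announced fallback (a three-way split using the $2|\Gamma|$ bound) is also unnecessary and is not an actual argument. As written, the proposal leaves its key step hedged when it is in fact immediate.

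Stage~1 goes beyond the paper, whose proof only shows that $W_f$ is determined by its words of length at most $2|\Gamma|$ and leaves implicit that $W_f$ determines $W$. Making this explicit is worthwhile, but your criterion is not ``purely in terms of $W_f$'' as claimed. ``Top block'' refers to $p_a$, whereas for $W'$ the top blocks come from a different priority function $p'_a$, so the criterion does not by itself give $W=W'$.

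A direct fix runs as follows.
\begin{itemize}
\item For $w$ with anchor $a$ and blocks $aw_0,aw_1,\dots$, let $k$ and $k'$ be the limsups of $p_a$ and $p'_a$ along this same factorisation.
\item Take $N$ after which $p_a\le k$ and $p'_a\le k'$.
\item Take $M\ge N$ such that $aw_N,\dots,aw_M$ contains a block of $p_a$-priority $k$ and a block of $p'_a$-priority $k'$.
\item Then $s=aw_N\cdots aw_M$ is in $W_f$ iff $k$ is even, and in $W'_f$ iff $k'$ is even. This follows directly from Definition~\ref{def:parity-words} applied to $s^\omega$; no induction via Proposition~\ref{prop:3red2} is needed, here or for your segments $s_j$.
\end{itemize}
Hence $W_f=W'_f$ gives $w\in W\iff w\in W'$.
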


\begin{proof}
Let $x \in \Gamma^*$ be such that $2|\Gamma| < |x|$. Since $\Gamma$ is finite, there exists $a \in \Gamma$ that occurs at least three times in $x$. Then $x$ is the circular permutation of some $auavaw$. Proposition~\ref{prop:3red2} implies the following:
\begin{itemize}
    \item If two or three words among $auav$, $auaw$, $avaw$ are in $W_f$, then $auavaw \in W_f$.
    \item Else $auavaw \notin W_f$.
\end{itemize}
Note that $|auav|, |auaw|, |avaw| < |x|$. 
\end{proof}

\section{Vertex-colored games: parity on pairs}
\label{sec:vertex-pairs}

Let $\Gamma$ be a non-empty alphabet.  We have shown that all objectives $W\subseteq\Gamma^*$ positionally determined on vertex-colored hub-cycle one-player arenas are generalized parity objectives on anchored words. We now consider objectives on pairs, and introduce the following. 
For a finite word $u = a_0 a_1 \dots a_{n-1} \in \Gamma^+$, the \textbf{cyclic 2-factor set} is the set of consecutive pairs of colors: 
$F_2(u) = \{ (a_i, a_{i+1 \pmod n}) \mid 0 \leq i < n \}$.
For an infinite word $w \in \Gamma^\omega$, the \textbf{limit 2-factor set} $F_2^\infty(w)$ consists of those consecutive pairs of colors $(x, y) \in \Gamma^2$ such that $xy$ appears infinitely often as factor in $w$. Note that in both cases, pairs are ordered.
We say $W\subseteq\Gamma^\omega$ is a Muller objective on pairs if there exists $M\subseteq 2^{\Gamma^2}$ such that for all $w \in \Gamma^\omega$, we have $w \in W$ if and only if $F_2^\infty(w)\in M$. 
    
In this section, we will consider a finite $\Gamma$, and show \ref{sec2:2c}~$ \Rightarrow $ \ref{sec2:2d} in two steps. First, in Lemma~\ref{lem:muller-like}, we show that if $W \subseteq \Gamma^\omega$ is a generalized parity objective on anchored words, then $W$ satisfies a Muller objective on pairs. Second, using that generalized parity objectives on anchored words are positionally determined on vertex-colored hub-cycle arenas, as seen in Lemma~\ref{lem:par-word-posdet}, we then conclude with Lemma~\ref{lem:mul-is-par}, showing that $W$ is a generalized parity objective on pairs of colors.

Finally, for \ref{sec2:2d} $ \Rightarrow $ \ref{sec2:2b}, we show Theorem~\ref{thm:2parity-pos}: we consider any generalized parity objective on pairs. A vertex-colored game with this objective can be transformed into an edge-colored parity game by assigning to each edge $q_1 q_2$ the priority $p(a, b)$, where $a$ and $b$ are the original vertex colors. Since the resulting edge-colored game is a parity game, it is positionally determined, and its winning strategies transfer back to the original vertex-colored setting.

\begin{example}
We have shown in Example~\ref{ex:def-ab} that $W = (a+b)^*(ab)^\omega$ on $\Gamma=\{a,b\}$ is a generalized parity objective on pairs. We have shown in Example~\ref{ex:intro-ab} that it is positionally determined on the vertex-colored arena of Figure~\ref{subfig:vertex-col}. Proving \ref{sec2:2d} $ \Rightarrow $ \ref{sec2:2b} will tell us it is determined on all vertex-colored two-player arenas.
\end{example}

\begin{theorem}\label{thm:rev-case2}
    \begin{mdframed}[innertopmargin=+0.5em,innerbottommargin=+0.5em]
    Let $\Gamma$ be a finite non-empty set of colors. If $W \subseteq \Gamma^\omega$ is a generalized parity objective on anchored words, then $W$ is a generalized parity objective on pairs.
    \end{mdframed}
\end{theorem}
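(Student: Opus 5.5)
I would follow the two-step route announced at the beginning of this section: first show that $W$ is a Muller objective on pairs, then use positional determinacy on hub-cycle arenas to upgrade the Muller condition to a parity condition on pairs, in the spirit of the Colcombet--Niwi\'nski/Kopczy\'nski argument but with edges replaced by ordered pairs.

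\textbf{Step 1 (Muller on pairs).} Fix $w,w'\in\Gamma^\omega$ with $F_2^\infty(w)=F_2^\infty(w')=F$. Since $\Gamma$ is finite, $F$ is non-empty. The set $F$ is the edge set of a strongly connected graph on the colors occurring infinitely often. I would show that $w\in W$ depends only on $F$ as follows.
\begin{itemize}
\item Pick a closed walk $c$ in the graph $F$ that uses every pair of $F$.
\item Prove $w\in W \iff c^\omega\in W$.
\item Cut $w$ (after a prefix) at occurrences of an anchor color $a$ into blocks $aw_i$. By Corollary~\ref{corol:3red2} and Proposition~\ref{prop:3red2}, membership in $W_f$ of a cyclic word is decided by a short reduction that repeatedly deletes a segment between two occurrences of a repeated color.
\item Show that the priority $p_a(aw_i)$, or at least the class of $aw_i$ in the ordered partition of Lemma~\ref{post-det-full-conseq}, depends only on the set of pairs occurring in $aw_i$. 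Two cyclic words with the same set of $2$-factors can be transformed into each other by rotations at a repeated letter and duplications or removals of loops $ava$ whose pairs already occur. Each such move preserves membership in $W_f$ by Lemma~\ref{lem:fa-basic}, Lemma~\ref{lem:absorb-one} and Proposition~\ref{prop:3red2}.
\item Group the blocks so that each group of consecutive blocks realizes exactly $F$. Mixing (Lemma~\ref{lem:pos-properties-mix}) then gives $w\in W\iff c^\omega\in W$.
\end{itemize}
This yields $M=\{F\mid c_F^\omega\in W\}$.

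\textbf{Step 2 (Muller to parity on pairs).} By Lemma~\ref{lem:par-word-posdet}, $W$ is positionally determined on vertex-colored hub-cycle arenas. I would derive the Zielonka-tree property on pairs: for strongly connected pair sets $F_1,F_2$ whose union $F_1\cup F_2$ is also strongly connected,
\begin{itemize}
\item if $F_1,F_2\in M$ then $F_1\cup F_2\in M$;
\item if $F_1,F_2\notin M$ then $F_1\cup F_2\notin M$.
\end{itemize}
The argument builds a vertex-colored hub-cycle arena. Take as hub a color shared by $F_1$ and $F_2$, which exists by strong connectivity of the union, and use two cycles realizing $F_1$ and $F_2$, unrolled as disjoint paths of fresh vertices carrying the appropriate colors. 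This arena produces exactly the pairs of $F_1$, of $F_2$, or of their union. If $F_1\cup F_2$ had the wrong status, positional determinacy would fail. When the two sets share no color, I would route through a third set containing connecting pairs.

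With this closure property, the standard Zielonka/McNaughton argument builds $p:\Gamma^2\to\N$:
\begin{itemize}
\item assign the top priority to pairs outside every maximal subset of opposite status;
\item recurse;
\item use the closure under union to ensure uniqueness of the maximal subsets.
\end{itemize}
Pairs that never occur together in a strongly connected set get arbitrary consistent priorities.

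\textbf{Main obstacle.} I expect Step 1 to be the hardest, specifically the invariance of block classes under changing a cyclic word while keeping its set of $2$-factors. Vertex coloring forbids the free cycle constructions of the edge case. I would attack it by induction on length using Corollary~\ref{corol:3red2}. The key sub-claim is that $auava\cdots$ and the word with $av$ duplicated lie in the same class, which follows from Lemma~\ref{lem:absorb-one} applied to the positional structure.
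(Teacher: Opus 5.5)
\textbf{Verdict: genuine gap in Step~1.} Your overall plan is the paper's: Muller on pairs (Lemma~\ref{lem:muller-like}), then Lemma~\ref{lem:par-word-posdet} feeding the Zielonka-tree argument (Lemma~\ref{lem:mul-is-par}). Your Step~2 is essentially the paper's proof. The end of your Step~1 is also sound, namely grouping the $a$-blocks into factors whose cyclic pair set is exactly $F$ and then applying mixing. But it is sound only \emph{once} you know that membership of a cyclic word in $W_f$ depends only on its cyclic $2$-factor set. That invariance is the entire content of Step~1, and your proposal does not establish it.

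Your moves do not all have the same status:
\begin{itemize}
\item Reordering the segments anchored at a common letter $b$, and duplicating a loop, are harmless. This follows directly from Definition~\ref{def:parity-words}, since neither changes the set of $p_b$-values of the $b$-segments.
\item Removing a loop $av$ merely because its pairs occur elsewhere is different. Its validity is \emph{equivalent} to the invariance you want: to pass from $x$ to $x'$, insert $x'$ as a loop into $x$ and then delete $x$. So justifying it by Lemma~\ref{lem:fa-basic}, Lemma~\ref{lem:absorb-one} and Proposition~\ref{prop:3red2} is circular.
\end{itemize}
If you keep only the harmless moves, the claim that they connect any two cyclic words with the same pair set is an unproved combinatorial statement. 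Your suggested induction on length via Corollary~\ref{corol:3red2} does not supply it, because the majority rule only shortens words. For instance, $abcacb$ and $abacbc$ have the same cyclic pair set, yet no letter occurs three times, so Proposition~\ref{prop:3red2} does not apply. Moreover, their segment decompositions at $a$, at $b$ and at $c$ all differ. Relating them forces you to \emph{lengthen} the words and rearrange at one or more anchors. The duplication sub-claim you single out as key is in fact the trivial part.

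The missing ingredient is the paper's rearrangement lemma (Lemma~\ref{lem:reorder-u}). Take an infinite word $w$ with $F_2^\infty(w)=F$. Permuting, within finite blocks, the factors anchored at $u_1$ preserves $F_2^\infty$ and the $\limsup$ of $p_{u_1}$, hence preserves membership. Doing this successively at $u_1,u_2,\dots$ makes any prescribed closed walk $u$ with $F_2(u)\subseteq F$ occur infinitely often.

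Apply this to a concatenation of one representative for each value in the finite set $\{p_a(s)\mid s\in a\Gamma^*,\ F_2(s)\subseteq F\}$, arranged so that the representatives occur as $a$-segments. In the resulting word, every late $a$-segment has priority at most the maximum of this set, and that maximum is attained infinitely often. Hence $w\in W$ iff this maximum is even, which depends only on $F$. This bypasses any connectivity claim about moves.

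Alternatively, you could prove your connectivity claim by the same kind of rearrangement applied to powers. For example, $(abcacb)^2$ and $(abacbc)^2$ can both be reordered at $b$ into words with the same multiset of $a$-segments. Either way, some such argument must be given; as it stands, Step~1 is asserted rather than proved.
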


We introduce the following lemma to establish a closure property: rearranging an infinite word $w$ such that a finite word $u$ occurs infinitely often does not alter membership in $W$.

\begin{lemma}\label{lem:reorder-u}
Let $\Gamma$ be a finite non-empty set of colors. Let $W \subseteq \Gamma^\omega$ be a generalized parity objective on anchored words. Let $w \in \Gamma^\omega$ be an infinite word, and $u\in\Gamma^*$ such that $F_2(u)\subseteq F_2^\infty(w)$. Then there exists $w_u$ such that $F_2^\infty(w)=F_2^\infty(w_u)$, $u$ appears infinitely often in $w_u$, and $w\in W$ iff $w_u\in W$.
\end{lemma}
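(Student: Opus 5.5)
The plan is to build $w_u$ by splicing copies of $u$ into $w$ at occurrences of the letter $u_0$, the first letter of $u$. This keeps the limit $2$-factor set fixed and leaves the original anchored blocks of $w$ intact. Membership in $W$ is then read off the anchored priority function $p_{u_0}$.

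\textbf{Setup.} Write $u = u_0u_1\dots u_{m-1}$ and set $a := u_0$. Since $F_2(u) \subseteq F_2^\infty(w)$ and some pair starting with $a$ lies in $F_2(u)$, the letter $a$ occurs infinitely often in $w$. Since $\Gamma$ is finite, $\Gamma^2$ is finite, so there is $N$ such that every $2$-factor of $w$ after position $N$ lies in $F_2^\infty(w)$.

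\textbf{Construction.} Choose infinitely many, sparse positions $j_0 < j_1 < \dots$ beyond $N$ with $w_{j_t} = a$. Immediately before each of them, insert the word $u$; locally, $\dots x\, a \dots$ becomes $\dots x\, u_0u_1\dots u_{m-1}\, a \dots$. The new $2$-factors are:
\begin{itemize}
\item $(x, u_0) = (x, a)$, which was already present before the insertion;
\item the pairs $(u_i, u_{i+1})$ for $i < m-1$;
\item the closing pair $(u_{m-1}, u_0) = (u_{m-1}, a)$.
\end{itemize}
The last two kinds lie in the cyclic set $F_2(u)$, hence in $F_2^\infty(w)$. Each pair of $F_2^\infty(w)$ still occurs infinitely often, and no new pair occurs infinitely often, so $F_2^\infty(w_u) = F_2^\infty(w)$. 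Clearly $u$ occurs infinitely often in $w_u$.

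\textbf{Block decomposition.} Decompose both words along occurrences of $a$ as in Definition~\ref{def:parity-words}. Because each insertion is placed exactly before an $a$, the anchored blocks $a w_i$ of $w$ survive unchanged in $w_u$. The only new blocks are those obtained by cutting $u$ at its occurrences of $a$. This is a fixed finite set of words, repeated infinitely often, with maximal priority $q$ under $p_a$. If $k = \limsup_i p_a(a w_i)$, then the limsup for $w_u$ is $\max(k,q)$. So whenever $q \leq k$, or $q$ and $k$ have the same parity, we get $w \in W \iff w_u \in W$.

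\textbf{Main obstacle.} The difficult case is $q > k$ with the opposite parity. I would first try to rule it out. The tail of $w$ is an infinite walk in the strongly connected graph $G$ whose edges are $F_2^\infty(w)$, and $u$ traces a closed walk in $G$. The idea is to show that the $u$-blocks can be read, up to rotation, as concatenations of factors already occurring infinitely often in $w$; the relevant rotation facts are Lemma~\ref{lem:fa-basic} and Proposition~\ref{prop:3red2}. Concretely, I would realise the $u$-blocks via detours built from genuine factors of $w$ between consecutive visits to $a$. Their priorities would then be dominated by blocks of $w$ via the monotonicity properties of Lemma~\ref{post-det-full-conseq}, which hold here because of Lemma~\ref{lem:par-word-posdet}.

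If that fails, the fallback is to change $w_u$ rather than rule the case out. In addition to $u$, I would insert a high-priority segment of the correct parity built from $w$, or choose a different anchor letter of $u$ that avoids the conflict, so that the limsup of $w_u$ has the same parity as that of $w$. I expect making this parity comparison rigorous to be the crux of the proof.
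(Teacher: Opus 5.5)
\textbf{There is a genuine gap: the step you flag as the crux is the whole content of the lemma, and it is left open.}

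Your construction is fine up to the block decomposition. Inserting $u$ before sparse occurrences of $a=u_0$ beyond $N$ does keep $F_2^\infty$ unchanged, and the limsup for $w_u$ is $\max(k,q)$. The problem is that insertion adds new anchored blocks, and nothing in the hypotheses controls their $p_a$-priorities relative to the blocks of $w$.

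To exclude $q>k$ with $q\not\equiv k \pmod 2$, you would essentially need that membership in $W$ depends only on $F_2^\infty$. That is Lemma~\ref{lem:muller-like}, the very result this lemma is used to prove, so the argument would be circular.

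The tools you point to do not close the gap. Lemma~\ref{lem:fa-basic}, Proposition~\ref{prop:3red2} and Lemma~\ref{post-det-full-conseq} only say whether concatenations lie in $W_f$. They do not place a new block in the priority order relative to existing ones. The fallbacks fail too. Nothing guarantees a segment over $F_2^\infty(w)$ with priority above $q$ and the parity of $k$. Changing the anchor letter does not help, since any insertion creates new blocks for every anchor.

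\textbf{The missing idea is to rearrange $w$ rather than add to it.} Definition~\ref{def:parity-words} allows \emph{any} factorization $w = x\,c\,w_0\,c\,w_1\cdots$ into pieces starting with $c$; pieces may contain $c$ internally. So you may cut $w$ at selected occurrences of $c$ and permute the resulting pieces, for instance by swapping two adjacent pieces in each period.

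This permutation preserves both things you need:
\begin{itemize}
\item \emph{Membership.} Every piece keeps its priority and a bijection of indices maps infinite sets to infinite sets. Hence the set of priorities occurring infinitely often, the limsup, and membership in $W$ are all unchanged.
\item \emph{The limit 2-factor set.} Every piece starts with $c$, so each junction contributes a pair of the form $(\text{last letter of a piece}, c)$ in either order. Internal pairs are untouched, so $F_2^\infty$ is unchanged.
\end{itemize}

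The paper applies this as follows:
\begin{enumerate}
\item Fix a prefix after which only pairs of $F_2^\infty(w)$ occur.
\item In each period, pick successive occurrences of $u_0u_1, u_1u_2,\dots,u_nu_0$, followed by occurrences of the letters $u_1,\dots,u_n$.
\item Perform such swaps for the anchors $u_1,u_2,\dots,u_n$ in turn. After the step for $u_j$, each $P^i_0$ is followed by $u_0u_1\cdots u_{j+1}$.
\end{enumerate}
At the end, $u$ appears once per period. No new material ever enters the word, which is exactly why the parity conflict you ran into cannot arise.
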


\begin{proof}
We proceed by considering every letter $u_i$ successively, and reordering words beginning with $u_i$ (without changing the acceptance condition given by Definition~\ref{def:parity-words}) until obtaining a word where $u=u_0 u_1 u_2\dots$ appears infinitely often.

Because all adjacent pairs $(u_j, u_{j+1})$ of the word $u$ belong to $F_2^\infty(w)$, they each occur infinitely often in $w$. We can therefore find a finite prefix $P_0$ of $w$, after which only pairs in $F_2^\infty(w)$ appear, and all do infinitely often. 

Hence the pair $(u_0, u_1)$ occurs after $P_0$. 
After this occurrence, since the pair $(u_1,u_2)$ also occurs infinitely often, some occurrence lies strictly later in the word. Iterating this argument for all consecutive pairs in $u$, we obtain a factorization of $w = w^0_{\mathrm{seg}} w^1_{\mathrm{seg}} w^2_{\mathrm{seg}}\dots$, where each $w^i_{\mathrm{seg}}$ is of the form
$w^i_{\mathrm{seg}} = P^i_0 \cdot u_0u_1 \cdot P^i_1 \cdot u_1u_2 \cdot P^i_2 \dots u_{n-1} u_n P^i_{n} \cdot u_nu_0 Q^i_0 u_1 Q^i_1 \dots u_n Q^i_n $. 
To preserve some readability, subscripts are treated modulo the length of the sequence, meaning that the index following the last element wraps around to 0. For instance, $u_{j+1}$ evaluates to $u_0$ when $j$ is the final index.
Each $P^i_j$ is the factor between the chosen occurrence of $u_{j-1}u_{j}$ and the next occurrence of $u_{j}u_{j+1}$, and $Q^i_j$ is the factor between the chosen occurrence of $u_{j}$ and the next occurrence of $u_{j+1}$. 

We highlight factors beginning with $u_1$ as follows: $w = (P^0_0 u_0)\cdot (u_1  P^0_1) \cdot (u_1u_2 P^0_2 u_2 u_3 P^0_3 \dots \\  u_{n-1} u_n P^0_{n} u_nu_0 Q^0_0) \cdot (u_1 Q^0_1 u_2 Q^0_2\dots u_n Q^0_n P^1_0 u_0)\cdot (u_1  P^1_1) \cdot (u_1u_2 P^1_2 u_2 u_3 P^1_3 \dots u_{n-1} u_n P^1_{n} u_nu_0 Q^1_0) \cdot u_1 Q^1_1 u_2 Q^1_2\dots u_n Q^1_n P^2_0 u_0\dots$

We now reorder those factors, defining  $w_1 = (P^0_0 u_0)\cdot (u_1u_2 P^0_2 u_2 u_3 P^0_3 \dots u_{n-1} u_n P^0_{n} u_nu_0 Q^0_0) \cdot (u_1  P^0_1) \cdot (u_1 Q^0_1 u_2 Q^0_2 \dots u_n Q^0_n P^1_0 u_0)\cdot (u_1u_2 P^1_2 u_2 u_3 P^1_3  \dots u_{n-1} u_n P^1_{n} u_nu_0 Q^1_0)\cdot (u_1 P^1_1)  \cdot u_1 Q^1_1 \\ u_2 Q^1_2\dots u_n Q^1_n P^2_0 u_0\dots$
Since $W$ is a generalized parity objective on anchored words, we consider $p_{u_{1}}$ as in Definition~\ref{def:parity-words}. Since the limit superior does not change by reordering, we have $w_1\in W$ iff $w\in W$. We remark that in $w_1$, every $P^i_0$ is followed by the sequence $ u_0 u_1 u_2$ 

We now do the same with words starting with $u_2$, with $w_1 = (P^0_0 u_0 u_1) \cdot (u_2 P^0_2) \cdot (u_2 u_3 P^0_3 \dots u_{n-1} u_n P^0_{n} u_nu_0 Q^0_0 u_1  P^0_1 u_1 Q^0_1) \cdot (u_2 Q^0_2 \dots u_n Q^0_n P^1_0 u_0 u_1) \cdot (u_2 P^1_2) \cdot (u_2 u_3 P^1_3  \dots \\ u_{n-1} u_n P^1_{n} u_nu_0 Q^1_0 u_1  P^1_1 u_1 Q^1_1)\cdot u_2 Q^1_2\dots u_n Q^1_n P^2_0 u_0\dots$

And reorder similarly , defining  $w_2  = (P^0_0 u_0 u_1)\cdot (u_2 u_3 P^0_3 \dots u_{n-1} u_n P^0_{n} u_nu_0 Q^0_0 u_1  P^0_1 u_1 Q^0_1)  \cdot (u_2 P^0_2)\cdot (u_2 Q^0_2 \dots u_n Q^0_n P^1_0 u_0 u_1) \cdot (u_2 u_3 P^1_3  \dots u_{n-1} u_n P^1_{n} u_nu_0 Q^1_0 u_1  P^1_1 u_1 Q^1_1) \cdot (u_2 P^1_2)\cdot u_2 Q^1_2\dots \\ u_n Q^1_n P^2_0 u_0\dots$ Considering $p_{u_{2}}$ ensures $w_2\in W$ iff $w_1\in W$, and we remark that in $w_2$, every $P^i_0$ is followed by the sequence $ u_0 u_1 u_2 u_3$.

We proceed this way for every letter of $u$, moving each factor $u_{j+1}P^i_j$ to the right of the corresponding prefix, until reaching $u_{n}$ and obtaining $w_n = P^0_0 \cdot (u_0u_1\dots u_n)\cdot (u_1 P^0_1 u_1 Q^0_1 u_2 P^0_2 u_2 Q^0_2 \dots u_n P^0_n u_n Q^0_n) \cdot P^0_0 \cdot (u_0u_1\dots u_n)\dots$ where every $P^i_0$ is followed by $u$. Hence we name $w_u = w_n$. We have that $w_u\in W$ iff $w\in W$, that $F_2^\infty(w)=F_2^\infty(w_u)$, and that $u$ appears infinitely often in $w_u$.
\end{proof}

\begin{lemma}\label{lem:muller-like}
 Let $\Gamma$ be a finite non-empty set of colors. If $W \subseteq \Gamma^\omega$ is a generalized parity objective on anchored words, then $W$ is a Muller objective on pairs. 
\end{lemma}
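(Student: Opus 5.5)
To prove Lemma~\ref{lem:muller-like}, it suffices to show that for all $w,w' \in \Gamma^\omega$ with $F_2^\infty(w)=F_2^\infty(w')$ we have $w\in W \iff w'\in W$. We can then set $M := \{F_2^\infty(w) \mid w \in W\}$; this $M$ is well defined and $w\in W$ iff $F_2^\infty(w)\in M$. Let $S := F_2^\infty(w)=F_2^\infty(w')$. Since $\Gamma$ is finite, $S$ is nonempty, and the set of pairs in $S$ is strongly connected when viewed as a graph on the letters occurring infinitely often. Indeed, after a finite prefix, $w$ only uses pairs of $S$, and it uses each of them infinitely often. So we can choose a finite word $u$ with $F_2(u)=S$: this is a closed walk through every edge of $S$, which exists by strong connectivity.

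The plan has two steps. First, apply Lemma~\ref{lem:reorder-u} to $w$ and to $w'$ with this $u$. This gives $w_u$ and $w'_u$ with the same limit 2-factor set $S$, with $u$ occurring infinitely often in both, and with membership in $W$ preserved. Second, compare $w_u$ and $w'_u$ by cutting both at the infinitely many occurrences of $u$. Let $a$ be the first letter of $u$. Every factor between occurrences of $a$ in the tail of $w_u$ (respectively $w'_u$) is a word whose cyclic pairs lie in $S$, up to the boundary $a$ pair, which also lies in $S$. The word $u$ itself decomposes into $a$-anchored blocks that occur infinitely often in both words.

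The key point is to use the anchored parity structure at $a$. By Definition~\ref{def:parity-words}, membership is determined by $\limsup p_a$ over the $a$-blocks. The difficulty is that $w_u$ and $w'_u$ may have different $a$-blocks, with different priorities. I would handle this by interleaving. Build $z$ by alternating longer and longer tail segments of $w_u$ and $w'_u$, glued at occurrences of $u$ so that the glue point is an $a$ inside $u$. Then the $a$-blocks of $z$ are exactly the union of the $a$-blocks of $w_u$ and of $w'_u$, each occurring infinitely often, so $\limsup p_a$ on $z$ is the maximum of the two limsups. This alone does not force the two limsups to have the same parity. So I would instead use Lemma~\ref{lem:reorder-u} more cleverly. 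Since the blocks of $w'_u$ have pairs in $S$, each such block can itself be inserted infinitely often into a rearrangement of $w_u$ without changing membership. Concretely, for a block $y$ of $w'_u$ attaining the limsup priority, I apply Lemma~\ref{lem:reorder-u} to $w_u$ with the word $uy$ (whose pairs lie in $S$) to get a word equivalent to $w_u$ in which $y$ appears infinitely often as an $a$-block. Doing this symmetrically shows that the limsup of $w_u$ is at least that of $w'_u$, and conversely, with consistent parity.

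The main obstacle is this last step. Lemma~\ref{lem:reorder-u} only guarantees that $u$ appears infinitely often, not that it appears as a delimited $a$-block with controlled priority. It is also applied to a single word, so it could introduce one new high priority $a$-block without showing that the limsup is unchanged. To get around this, I would rely on Lemma~\ref{lem:par-word-posdet} together with Lemma~\ref{lem:pos-properties-mix} and Lemma~\ref{lem:fa-basic-bis}. Along the way, I would also check that reordering never creates new pairs.
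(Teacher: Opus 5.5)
Your proposal has a genuine gap at the decisive step. Applying Lemma~\ref{lem:reorder-u} to $w_u$ with the word $uy$ permutes factors cut at the letters of $uy$, so it destroys the $a$-block decomposition of $w_u$. At best, the resulting word $w''$ has an anchored limsup $k''$ with $k''\ge k'$ (because $y$ now recurs as a block) and $k''\equiv k \pmod 2$ (because $w''\in W$ iff $w_u\in W$). Nothing relates $k''$ to the limsup $k$ of $w_u$ itself. Symmetrically you get some $k'''\ge k$ with $k'''\equiv k'$. These facts about four different numbers cannot force $k\equiv k'\pmod 2$: for instance $k=2$, $k'=1$, $k''=4$, $k'''=3$ satisfies all of them. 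So ``the limsup of $w_u$ is at least that of $w'_u$, with consistent parity'' does not follow. Your closing appeal to Lemmas~\ref{lem:par-word-posdet}, \ref{lem:pos-properties-mix} and~\ref{lem:fa-basic-bis} does not say how they would repair this, so as written the proof is incomplete.

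The missing idea is to compare each of $w$ and $w'$ with a quantity depending only on $S$, not with each other. Since $p_a$ takes values in $\{0,\dots,n\}$, the number $m_S:=\max\{p_a(s)\mid s\in a\Gamma^*,\ F_2(s)\subseteq S\}$ exists. Fix a witness $y$ with $p_a(y)=m_S$. Apply Lemma~\ref{lem:reorder-u} to $w$ with $yy$; note that $F_2(yy)=F_2(y)\subseteq S$. This gives an equivalent word in which $y$ occurs infinitely often immediately followed by $a$. Definition~\ref{def:parity-words} allows \emph{any} factorization into $a$-anchored blocks, since the $w_i$ may contain $a$. So you may choose a factorization in which infinitely many of these occurrences of $y$ are blocks, which also answers your worry about ``delimited'' blocks. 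Beyond the finite prefix where pairs outside $S$ occur, every block $aw_i$ satisfies $F_2(aw_i)\subseteq S$, because its wrap-around pair is the pair leading into the next block. Hence every such block has priority at most $m_S$, the limsup is exactly $m_S$, and $w\in W$ iff $m_S$ is even; the same argument applies to $w'$. Your worry that reordering ``could introduce one new high-priority $a$-block'' is thus the solution: inserting a block of maximal possible priority makes the limsup independent of the original word. This is the paper's proof. It builds the finite set $S_{a,F}$ of radix-minimal representatives of each (priority, cyclic 2-factor set) class and concatenates them into a word $U$. It then reorders so that $U$ occurs infinitely often, and concludes that the limsup equals $\max_{s\in S_{a,F}}p_a(s)$, which depends only on $F$.
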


\begin{proof}
Assume that $W \subseteq \Gamma^\omega$ is a generalized parity objective on anchored words. 
Let $v, w \in \Gamma^\omega$, with $F_2^\infty(v) = F_2^\infty(w)$. Let $F=F_2^\infty(v)$. We show that $v \in W$ iff $w \in W$.

Let $a \in \Gamma$ be a color such that some pair $(a, b) \in F$. Since $\Gamma$ is finite and the word is infinite, such an $a$ exists and occurs infinitely often in both $v$ and $w$. By Definition~\ref{def:parity-words}, the acceptance of both words is determined by a priority function $p_a$.

Consider any ordering $<$ on $\Gamma$. We recall that the radix (or shortlex) total order $<_{rad}$, is defined as follows: For $u,v\in\Gamma^*$, we have $u <_{rad} v$ iff $|u| < |v|  \lor (|u| = |v| \land (\exists w, x, y \in \Gamma^*, a, b \in \Gamma : u=wax, v=wby, a < b))$.

We define a mapping $g: a\Gamma^* \to a\Gamma^*$ that maps every $u \in a\Gamma^*$ to the radix-minimal word $u'$ such that we have both $p_a(u) = p_a(u')$ and $F_2(u) = F_2(u')$.
Since the codomain of $p_a$ and the number of subsets of $\Gamma^2$ are both finite, the image $S_{a,F} = \{ g(u) \mid u \in a\Gamma^*, F_2(u) \subseteq F \}$ is a finite set of words.

Let $U$ be a finite word formed by concatenating all elements in $S_{a,F}$. We have $F_2(U) = F$ and apply  Lemma~\ref{lem:reorder-u}: we can reorder $v$ into $v_1$ such that $v \in W$ iff $v_1 \in W$, $F_2^\infty(v_1) = F$, and the block $U$ appears infinitely often in $v_1$. This forces every element of $S_{a,F}$ to appear infinitely often in $v_1$.

We factor $v_1$ into segments starting with $a$: $v_1 = x a u_0 a u_1 \dots$. We define $v_2 = x a g(u_0) a g(u_1) \dots$. By Definition~\ref{def:parity-words} of the generalized parity objective on anchored words, $v_1 \in W$ iff $v_2 \in W$ because $p_a(u_i) = p_a( g(u_i))$ for all $i$. Since $U$ occurs infinitely often in $v_1$, the set of segments $\{ a g(w_i) \}$ occurring infinitely often in $v_2$ is exactly $S_{a,F}$.

Applying the same procedure to $w$ yields a word $w_2$ such that $w \in W$ iff $w_2 \in W$. The set of segments occurring infinitely often in $w_2$ is also exactly $S_{a,F}$. 
Both $v_2$ and $w_2$ are infinite concatenations of segments where the set of priorities $\{ p_a(u) \mid u \text{ occurs infinitely often} \}$ is identical and equal to $\{ p_a(s) \mid s \in S_{a,F} \}$, and we have:
   
\noindent $\limsup_{i \to \infty} p_a(\text{segments of } v_2) = \max \{ p_a(s) \mid s \in S_{a,F} \} = \limsup_{i \to \infty} p_a(\text{segments of } w_2)$.
   
   By the generalized parity condition on words, $v_2 \in W$ iff $w_2 \in W$. It follows that $v \in W $ iff $w \in W$. The membership depends only on $F$, defining a Muller objective on pairs $M = \{ F \mid \max_{s \in S_{a,F}} p_a(s) \text{ is even} \}$.
\end{proof}

The proof of the following result is similar to the one in~\cite{Zielonka98}, but we slightly adapt the result to work with pairs of colors placed on vertices. We give it for the sake of completeness.

\begin{lemma}\label{lem:mul-is-par}
    Let $\Gamma$ be a finite alphabet. If $W \subseteq \Gamma^\omega$ is positional on all vertex-colored hub-cycle arenas, and $W$ is a Muller objective on pairs, then $W$ is a generalized parity objective on pairs.
\end{lemma}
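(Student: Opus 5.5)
The plan is to follow Zielonka's characterisation of parity conditions among Muller conditions. I will replace the set of colors by the finite set $\Gamma^2$ of ordered pairs, and keep in mind that not every set of pairs can be a limit 2-factor set. I view $\Gamma^2$ as the edge set of the complete directed graph on $\Gamma$ (self-loops included). Then $F\subseteq\Gamma^2$ equals $F_2^\infty(w)$ for some $w\in\Gamma^\omega$ iff $F$ is non-empty and the subgraph formed by its edges is strongly connected; call such $F$ \emph{realizable}. Membership in $W$ is determined by $F_2^\infty$, so only realizable sets matter: I write $F$ winning if $F\in M$, and losing otherwise.

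\textbf{Step 1: union property from positionality.} Let $F_1,F_2$ be realizable and share a color $a$, i.e.\ $a$ is a vertex of both. Then $F_1\cup F_2$ is realizable. I claim that if both are winning then so is the union, and if both are losing then so is the union. Since $F_i$ is strongly connected, I pick a closed walk through $a$ that uses every edge of $F_i$, and read it as a finite word $u_i\in a\Gamma^*$ with $F_2(u_i)=F_i$. I build a vertex-colored hub-cycle arena with a hub colored $a$ and two cycles: the $i$-th cycle goes through fresh vertices colored by the remaining letters of $u_i$ and returns to the hub. In the winning case I give the hub to Adam. His positional strategies produce $u_1^\omega$ or $u_2^\omega$, whose limit 2-factor sets are $F_1$ and $F_2$, hence winning plays for Eve. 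By positional determinacy Adam has no winning strategy at all. In particular the alternating play $(u_1u_2)^\omega$, whose limit set is $F_1\cup F_2$, is won by Eve. The losing case is the same argument with Eve owning the hub.

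\textbf{Step 2: recursive priority assignment.} Since different strongly connected components of $\Gamma^2$ can never both occur in one limit set, I treat each component separately, and it suffices to define $p$ on the edges of a realizable $F$ by recursion on $|F|$. Suppose $F$ is winning; the losing case is dual. Consider the maximal realizable proper subsets $H\subsetneq F$ that are losing. By Step 1, two of them sharing a vertex would have a losing realizable union, so by maximality they coincide. Hence the maximal losing subsets $H_1,\dots,H_k$ are pairwise vertex-disjoint. I give every pair of $F\setminus\bigcup_j H_j$ a common top priority whose parity is even. I then recurse inside each $H_j$, with all priorities used there shifted strictly below the top one. Disjointness ensures that the recursive assignments do not conflict. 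Since $|\Gamma^2|$ is finite, the recursion terminates and the number of priorities is at most $|\Gamma|^2+1$.

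\textbf{Step 3: correctness, and the expected obstacle.} I prove by induction that for every realizable $K\subseteq F$, the maximum priority on $K$ is even iff $K$ is winning. First suppose $K$ contains a top-priority pair. Then $K$ is not contained in any $H_j$, so $K$ cannot be losing, because every losing realizable subset lies in some maximal one; thus $K$ is winning, matching the even top priority. Otherwise every pair of $K$ lies in some $H_j$. Because the $H_j$ are vertex-disjoint and $K$ is strongly connected, $K$ lies in a single $H_j$, and the induction hypothesis for $H_j$ applies. This gives the equivalence $w\in W$ iff $\limsup_i p(w_i,w_{i+1})$ is even, since the pairs occurring infinitely often in $w$ form exactly $F_2^\infty(w)$. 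The step I expect to be most delicate is the vertex-disjointness argument. It depends on two points: Step 1 applies to unions that are only guaranteed to be realizable when the sets share a vertex, and strong connectivity of $K$ must prevent $K$ from straddling several $H_j$. This is exactly where working with pairs, rather than colors, requires the realizability bookkeeping.
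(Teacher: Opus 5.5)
Your proposal is correct and follows essentially the same route as the paper. Both arguments first derive a union property for limit 2-factor sets that share a color, using a vertex-colored hub-cycle arena with two cycles through the hub; both then assign priorities over $\Gamma^2$ recursively in Zielonka-tree style, using the fact that the maximal subsets of opposite status are vertex-disjoint. Your explicit use of realizability, namely that $F_2^\infty(w)$ is strongly connected, is what justifies the paper's terser claim that a limit set not contained in a single child must contain a pair of the parent node lying in no child.
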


\begin{proof}
Let $W \subseteq \Gamma^\omega$, and let $M \subseteq 2^{\Gamma^2}$ be the Muller objective on pairs such that $w \in W$ iff $F_2^\infty(w) \in M$. Let $U, V \in M$. Suppose there exists a letter $a \in \Gamma$ that appears infinitely often in both $U$ and $V$. Specifically, let $(a, b) \in U$ and $(a, c) \in V$. 

Since $U \in M$, there exists a word $u = (ab u_2 \dots u_n)^\omega \in W$. Since $V \in M$, there also exists a word $v = (ac v_2 \dots u_m)^\omega \in W$.

We consider the one-player game starting from an initial state $a$, belonging to player $P_2$, from where two actions are possible: the first action leads to a chain that reads $(ab u_2 \dots u_n)$ before returning to the initial state, the second action leads to a chain that reads $(ac v_2 \dots u_m)$ before returning to the initial state. Positional strategies will only yield $u$ or $v$, which are both in $W$ and thus winning for $P_1$, hence $P_2$ has no winning strategies. This means that alternating between the two actions is also winning for $P_1$, and so the word obtained this way $w= ab u_2 \dots u_n \cdot ac v_2 \dots v_m$ is in $W$. Since $F_2^\infty(w) = U\cup V$, we have $U\cup V\in M$.

Thus, $M$ is closed under the union of sets that share a common letter in at least one pair. 

We construct the priority function $p: \Gamma^2 \to \mathbb{N}$. To do so, let $F \subseteq 2^{\Gamma^2}$ be defined as $F = \{f\subseteq {\Gamma^2}\ | \ \exists w \in \Gamma^\omega, F_2^\infty(w) = f\}$. We first define the tree $Z$ associated with $M$ and $F$. A node in $Z$ is a set $S \in F$ by constructing it recursively:
\begin{itemize}
    \item The root of $Z$ is $\Gamma^2$.
    \item For any internal node $S \in Z$:
    \begin{itemize}
        \item If $S \in M$, its children are the maximal sets $C \subsetneq S$ such that $C \in F \setminus M$.
        \item If $S \notin M$, its children are the maximal sets $C \subsetneq S$ such that $C \in   \cap M$.
    \end{itemize}
\end{itemize}
Since $W$ is positional, the children $C_1,C_2\in M$ (resp. $\notin M$) of any node $S$ do not share any $a\in \Gamma$ such that $(a,b)\notin C_1$ (resp. $\in C_1$) and $(a,c)\notin C_2$ (resp. $\in C_2$), otherwise their union $C_1 \cup C_2 \notin M$ (resp. $\in M$), which contradicts their maximality. 
 Hence, all children are strictly disjoint.

Because the children are disjoint, every pair $e \in \Gamma^2$ belongs to a unique minimal node $S_e$ in the tree $Z$ (by subset inclusion). 

We define the priority function $p: \Gamma^2 \to \mathbb{N}$ recursively from the leaves to the roots to satisfy a max-generalized parity condition. To do so, we first define $p$ on $F$
\begin{itemize}
    \item If $S$ is a leaf, $p(S) = 0$ if $S \in M$, and $p(S) = 1$ if $S \notin M$.
    \item If $S \in M$ is an internal node, $p(S)$ is the smallest even integer strictly greater than $\max \{ p(C) \mid C \text{ is a child of } S \}$.
    \item If $S \notin M$ is an internal node, $p(S)$ is the smallest odd integer strictly greater than $\max \{ p(C) \mid C \text{ is a child of } S \}$.
\end{itemize}
For each pair $e \in \Gamma^2$, we set the transition priority $p(e) = p(S_e)$.

Consider any word $w \in \Gamma^\omega$ and let $U = F_2^\infty(w)$. Let $S_U$ be the unique minimal node in $Z$ such that $U \subseteq S_U$. 
Since $U$ is not contained in any single child of $S_U$, $U$ must contain at least one edge $e$ that belongs to $S_U$ but to no child of $S_U$. For this edge, the minimal containing node is $S_e = S_U$, so $p(e) = p(S_U)$. 

For all other edges $e' \in U$, their minimal containing node is either $S_U$ or a descendant of $S_U$. By construction, the priority of a node is strictly greater than the priorities of its descendants. Therefore, the maximum priority among all edges in $U$ is exactly $p(S_U)$. 

Since $w$ traverses all edges in $U$ infinitely often, $\limsup_{i \to \infty} p(w_i, w_{i+1}) = p(S_U)$. 
By the definition of $p(S_U)$, this maximum priority is even if and only if $S_U \in M$. 

Finally, because $U$ is contained in $S_U$ and is not contained in any child of $S_U$, $U$ must have the same membership status in $M$ as $S_U$. If $S_U \in M$ and $U \notin M$, $U$ would be an element of $F$ not in $M$, meaning it would be contained in one of the children of $S_U$, contradicting the minimality of $S_U$. The symmetric argument holds if $S_U \notin M$. 
Thus, $U \in M$ iff $S_U \in M$. 

We conclude that $w \in W$ iff $F_2^\infty(w) \in M$ iff $\limsup_{i \to \infty} p(w_i, w_{i+1}) \equiv 0 \pmod 2$, and so $W$ is a generalized parity objective on pairs.
\end{proof}

We can now give the following:

\begin{proof}[Proof of Theorem~\ref{thm:rev-case2}]
Let $W \subseteq \Gamma^\omega$ be a generalized parity objective on anchored words. Lemma~\ref{lem:muller-like} shows that $W$ is a Muller objective on pairs, and since $W$ is positionally determined on vertex-colored hub-cycle arenas by Lemma~\ref{lem:par-word-posdet}, we use Lemma~\ref{lem:mul-is-par} to conclude that $W$ is a generalized parity objective on pairs.
\end{proof}

\begin{theorem}\label{thm:2parity-pos}
    \begin{mdframed}[innertopmargin=+0.5em,innerbottommargin=+0.5em]
    Let $W \subseteq \Gamma^\omega$ be a generalized parity objective on pairs. Then $W$ is prefix-independent and positionally determined on all vertex-colored two-player arenas.
    \end{mdframed}
\end{theorem}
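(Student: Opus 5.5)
Prefix-independence is immediate. Let $p:\Gamma^2\to[0,n]$ be the priority function and $w=w_0w_1\dots$. The value $\limsup_i p(w_i,w_{i+1})$ depends only on the pairs that occur infinitely often. For $c\in\Gamma$, the pairs of $cw$ are $(c,w_0)$ followed by the pairs of $w$, so the $\limsup$ is unchanged and $cw\in W$ iff $w\in W$. Hence $\Gamma W=W$.

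For positional determinacy, I would reduce to an edge-colored parity game, as sketched at the start of this section. Let $A=\langle V_E,V_A,E\rangle\in\mathcal{A}_v$, and write $c(v)$ for the color of $v$, which is the common color of all edges leaving $v$. Build the edge-colored arena $A'$ with the same vertices and owners, and with an edge $(v,p(c(v),c(v')),v')$ for each $(v,c(v),v')\in E$. Each edge of $A$ corresponds to exactly one edge of $A'$, so histories, strategies, plays and positional strategies correspond bijectively between $A$ and $A'$.

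A play $v_0v_1v_2\dots$ has trace $c(v_0)c(v_1)\dots$ in $A$ and trace $p(c(v_0),c(v_1))\,p(c(v_1),c(v_2))\dots$ in $A'$. By the definition of a generalized parity objective on pairs, the first trace is in $W$ iff the second satisfies the parity objective on $[0,n]$. So a strategy wins from $v_0$ in $\langle A,W\rangle$ iff the corresponding strategy wins in the parity game on $A'$.

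By the classical results \cite{EJ91,Mostowski91,Zielonka98}, recalled in Section~\ref{sec:edge} as the implication \ref{sec1:1c} $\Rightarrow$ \ref{sec1:1b}, the parity game on $A'$ is positionally determined, including on infinite arenas. From every $v_0$, one player therefore has a positional winning strategy in $A'$, and its counterpart is positional and winning in $A$.

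The one point to check carefully is that the pair $(c(v_i),c(v_{i+1}))$ is determined by the edge $v_i\to v_{i+1}$ alone. This holds because the arena is vertex-colored: the target vertex fixes its color, and the source fixes the color of the edge. That is why the relabelling is well defined, and it is the only place where the vertex-colored hypothesis is used.
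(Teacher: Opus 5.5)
Your proof is correct and follows the same route as the paper. You relabel each edge $v\to v'$ using the colors of its two endpoints, which is well defined precisely because the arena is vertex-colored, and then invoke positional determinacy of edge-colored parity games; the only cosmetic difference is that you label edges directly by the priority $p(c(v),c(v'))$, whereas the paper labels them by the pair $(c(v),c(v'))$ and uses the induced generalized parity objective over $\Gamma^2$.

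One small caveat: your side remark that the $\limsup$ depends only on the pairs occurring infinitely often is false when $\Gamma$ is infinite, since no pair need recur. Your actual argument does not rely on it, because $cw$ only prepends one term to the priority sequence.
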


\begin{proof}
First, $W \subseteq \Gamma^\omega$ is prefix-independent because it is defined by a $\limsup$. Since the limit superior of a sequence is unaffected by finite prefixes, any change to a finite prefix of a word leaves its membership in $W$ unchanged.

To show positional determinacy, let $G = \langle A, W \rangle$ be a vertex-colored game on arena $A = \langle V_E, V_A, E \rangle$. Since $W$ is a generalized parity objective on pairs, there exists a priority function $p: \Gamma^2 \to \{0, \dots, n\}$ such that $w_0 w_1 \dots \in W$ iff $\limsup_{i \to \infty} p(w_i, w_{i+1})$ is even.

We consider the auxiliary alphabet $\Gamma^2$ and define $W_{\text{pair}} \subseteq (\Gamma^2)^\omega$ such that a word $u = (x_0,y_0)(x_1,y_1)\dots \in W_{\text{pair}}$ iff $\limsup_{i \to \infty} p(x_i, y_i)$ is even. By Definition~\ref{def:parity}, this makes $W_{\text{pair}}$ a generalized parity objective over $\Gamma^2$.

We construct the auxiliary edge-colored game $G_{\text{pair}} = \langle A_{\text{pair}}, W_{\text{pair}} \rangle$ where $A_{\text{pair}} = \langle V_E, V_A, E_{\text{pair}} \rangle$. We define the edges such that $(v,(a,b),v')\in E_{\text{pair}}$ iff $(v,a,v') \in E$ and $b$ is the unique element of $\Gamma$ such that for some $v''\in V$, we have $(v', b, v'')\in E$.

By construction of $E_{\text{pair}}$, any valid play in $G_{\text{pair}}$ produces a trace $u = (w_0,w_1)(w_1,w_2)\dots \in (\Gamma^2)^\omega$. This trace belongs to $W_{\text{pair}}$ iff $\limsup_{i \to \infty} p(w_i, w_{i+1})$ is even, which holds iff the underlying trace $w_0 w_1 \dots$ belongs to $W$.

By Theorem~6 of~\cite{Zielonka98}, $W_{\text{pair}}$ is positionally determined on all two-player edge-colored arenas. Hence, there exists a winning strategy $\sigma: V_E \to V$ on $G_{\text{pair}}$ that depends only on the current vertex. Because $G$ and $G_{\text{pair}}$ share the same vertices and their winning conditions are equivalent over all valid plays, $\sigma$ is also a positional winning strategy for $G$. Thus, $W$ is positionally determined on all vertex-colored two-player arenas.
\end{proof}

\section{Additional remarks}
\label{sec:remarks}

\textbf{$\omega$-regularity: } When $\Gamma$ is finite, the prefix-independent objectives characterizing positionally determined vertex-colored arenas are $\omega$-regular. 

\begin{restatable}{lemma}{omeWord}\label{lem:omeWord}
Let $\Gamma$ be a finite non-empty set of colors. Languages $W\subseteq\Gamma^\omega$ defined by a generalized parity objective on pairs are $\omega$-regular.
\end{restatable}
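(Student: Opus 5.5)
We prove the lemma by building an explicit deterministic parity automaton for $W$. Let $p:\Gamma^2\to\{0,\dots,n\}$ be the priority function witnessing that $W$ is a generalized parity objective on pairs. Since $\Gamma$ is finite, the automaton only needs to remember the previously read color. We then invoke the standard fact that deterministic parity automata (with priorities on transitions, equivalently on states) recognize exactly the $\omega$-regular languages. An alternative is to note that $W$ is a Muller condition over the finitely many pairs, in the spirit of Lemma~\ref{lem:muller-like}, but the parity automaton is more direct.

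\textbf{Construction.} Define $\mathcal{B}$ with state set $\{q_\bot\}\cup\{q_c \mid c\in\Gamma\}$ and initial state $q_\bot$. The transitions are as follows:
\begin{itemize}
\item from $q_\bot$, reading $a$ leads to $q_a$ with priority $0$;
\item from $q_c$, reading $a$ leads to $q_a$ with priority $p(c,a)$.
\end{itemize}
The automaton is deterministic and complete, so every $w\in\Gamma^\omega$ has exactly one run.

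\textbf{Correctness.} On $w=w_0w_1\dots$, the run is $q_\bot, q_{w_0}, q_{w_1},\dots$. Its $(i+1)$-th transition, for $i\ge 0$, carries priority $p(w_i,w_{i+1})$. The single initial transition with priority $0$ does not affect the $\limsup$. Hence the run is accepting, meaning the maximal priority seen infinitely often is even, if and only if $\limsup_i p(w_i,w_{i+1})$ is even, that is, if and only if $w\in W$.

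If one prefers state-based acceptance, move the priority onto the target state by using states $(c,a)$ labelled $p(c,a)$. This remains finite because $\Gamma^2$ is finite.

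\textbf{Main subtlety.} The only point needing care is finiteness: the construction uses $|\Gamma|+1$ states and a finite priority range. This is exactly where the finiteness of $\Gamma$ is used, consistent with the paper's remark that the color set must be finite. Once finiteness holds, the lemma reduces to citing that deterministic parity automata define $\omega$-regular languages, for example via their equivalence with Büchi automata.
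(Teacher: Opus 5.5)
Your proof is correct, but it takes a different route from the paper's.

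\textbf{Your route.} You build a deterministic parity automaton with $|\Gamma|+1$ states. It remembers the last color read and emits $p(c,a)$ on each transition. You then invoke the standard fact that languages recognized by deterministic parity automata are $\omega$-regular.

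\textbf{The paper's route.} The paper never passes through parity automata and writes $W$ directly as an $\omega$-regular expression. For each pair $(a,b)$ it defines the regular language $M_{a,b}$ of finite words that start with $ab$ and whose consecutive priorities are all at most $p(a,b)$, including the wrap-around pair $(w_k,a)$. It then shows $W=\Gamma^*\cdot\bigcup_{p(a,b)\text{ even}}(M_{a,b})^\omega$. It concludes by closure of $\omega$-regular languages under concatenation with regular languages, finite union, and $\omega$-power.

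\textbf{Comparison.} In effect, the paper inlines the parity-to-B\"uchi translation for your automaton. It guesses the even maximal recurring priority through a pair $(a,b)$ that attains it, then cuts the tail into $M_{a,b}$-blocks at suitably chosen occurrences of $ab$. This makes it self-contained modulo B\"uchi's closure properties. The price is a decomposition argument that the paper only sketches.

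Your version is shorter and more modular, and it yields slightly more. It shows that $W$ is recognized by a \emph{deterministic} parity automaton with $|\Gamma|+1$ states and the same priority range $[0,n]$. This makes the use of finiteness of $\Gamma$ explicit. It also mirrors the pair-relabeling used in Theorem~\ref{thm:2parity-pos}. Your only cost is relying on the standard direction that deterministic parity automata define $\omega$-regular languages.
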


\begin{proof}
Let $W \subseteq \Gamma^\omega$ be a language defined by a generalized parity objective on pairs. By definition, there exists a priority function $p: \Gamma \times \Gamma \to [0,n]$ such that a word $w = w_0 w_1 w_2 \dots \in \Gamma^\omega$ belongs to $W$ iff the largest integer occurring infinitely often in the sequence $p(w_0, w_1), p(w_1, w_2), \dots$ is even.

For $a,b\in \Gamma$, let $M_{a,b} = \{w \in \Gamma^*\ | k+1=|w|,\ w_0=a,\ w_1=b,\ \forall i\in[0,k-1],\  p(w_i,w_{i+1}\leq p(a,b)$ and $p(w_{k},a)\leq p(a,b) \}$ contain all words starting with $ab$ and with priorities always less or equal to $p(a,b)$. This language is regular: it is recognized by a finite automaton with states $\Gamma \cup \{f,a_{init},b_{next}\}$, where $f$ is the only accepting state. From the initial state $a_{init}$, the first transition must reach $b_{next}$. Subsequent transitions between any $\gamma, \gamma' \in \Gamma$ are allowed iff $p(\gamma, \gamma') \leq p(a,b)$. Finally, a transition to $f$ is permitted from any $\gamma$ such that $p(\gamma, a) \leq p(a,b)$.

Since every word $w$ must contain infinitely often some pair $ab$ such that $p(a,b)$ is the maximal priority seen infinitely often in $w$, we then take:

\begin{equation}
    W = \Gamma^* \cdot \bigcup_{\substack{i \in [0,n] \\ i \text{ even}}} \bigcup_{\substack{a,b \in \Gamma \\ p(a,b)=i}} (M_{a,b})^\omega
\end{equation}

Since these languages are obtained by binary concatenation, finite union, and $\omega$-power of regular languages, they are $\omega$-regular. 
\end{proof}

\noindent\textbf{Infinite alphabet: } We have proved our characterization with generalized parity objectives on pairs only when considering a finite sets of colors. For an arbitrary set of colors, this characterization does not hold any more. The proof relies on parity games with infinite colors defined in~\cite{GW06} by Gr{\"{a}}del and Walukiewicz to create a chain of strictly increasing priorities of arbitrary length.

\begin{restatable}{lemma}{noInfinite}\label{lem:noInfinite}
Let $\Gamma = \mathbb{N}$. There exists an  objective $W\subseteq \Gamma^\omega$ prefix-independent and positionally determined on all vertex-colored two-player arenas, such that for every generalized parity function on pairs $W_p$, there exists a word $w \in \Gamma^\omega$ such that $w \in W \iff w \notin W_p$.
\end{restatable}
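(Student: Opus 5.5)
We take $W := \{ w \in \mathbb{N}^\omega \mid \liminf_{i\to\infty} w_i \text{ is even}\}$, the min-parity condition with infinitely many priorities considered by Gr\"adel and Walukiewicz~\cite{GW06}. The plan has two parts: first check that this $W$ has the two required properties, then show that no priority function $p:\mathbb{N}^2\to[0,n]$ defines it, via a chain of strictly decreasing priorities of length $n+2$.

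\textbf{Properties of $W$.} Since every subset of $\mathbb{N}$ has a minimum, $\liminf_i w_i$ is always a well-defined natural number. It is unchanged by adding or removing a finite prefix, so $W$ is prefix-independent. For positional determinacy on vertex-colored two-player arenas, we invoke~\cite{GW06}: min-parity games with colors in $\mathbb{N}$ are positionally determined on arbitrary, possibly infinite, arenas. That result is stated for vertex priorities, which is exactly our vertex-colored setting, so it should transfer without change. This part is essentially a citation.

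\textbf{No $p$ on pairs defines $W$.} Fix $n$ and a priority function $p:\mathbb{N}^2\to[0,n]$, and suppose towards a contradiction that $W$ coincides with the objective $W_p$ defined by $p$. The idea is to test $W_p$ on words in which a fixed small letter recurs forever, interleaved with fresh large letters, so that each pair involving a large letter occurs only once.
\begin{enumerate}
\item Colour each $m > n+1$ by the finite vector $\bigl(p(k,m),p(m,k)\bigr)_{k\le n+1}$. By pigeonhole there is an infinite set $H \subseteq \{n+2, n+3, \dots\}$ on which this vector is constant, say equal to $(\alpha_k,\beta_k)_{k\le n+1}$. Set $c_k := \max(\alpha_k,\beta_k)$.
\item Let $k<l \le n+1$, and let $m_1<m_2<\dots$ enumerate $H$. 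Consider the word $w = k\, m_1\, l\, m_2\, k\, m_3\, l\, m_4 \cdots$.
\begin{itemize}
\item Its only pairs are of the form $(k,m_i),(m_i,l),(l,m_i),(m_i,k)$. Each particular pair occurs once, but their priorities range over $\{\alpha_k,\beta_k,\alpha_l,\beta_l\}$, each value being taken infinitely often.
\item Hence $\limsup$ of the pair priorities is $\max(c_k,c_l)$.
\item On the other hand $\liminf_i w_i = k$, since all $m_i > n+1 \ge l$. So $w \in W$ iff $k$ is even.
\end{itemize}
\item The hypothesis $W = W_p$ therefore forces $\max(c_k,c_l) \equiv k \pmod 2$. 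The same word with $l$ alone, namely $l\, m_1\, l\, m_2 \cdots$, gives $c_l \equiv l \pmod 2$. So whenever $k$ and $l$ have different parities, $c_l$ cannot be the maximum, and we get $c_k > c_l$.
\item Applying this to the consecutive letters $0<1<\dots<n+1$, which alternate in parity, gives the strictly decreasing chain $c_0 > c_1 > \dots > c_{n+1}$. This is $n+2$ distinct values in $[0,n]$, a contradiction. Hence some word $w$ separates $W$ from $W_p$, as the statement requires.
\end{enumerate}

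\textbf{Main obstacle.} The key design choice is to interleave fresh large letters, so that no diagonal or small–small pairs occur and every pair priority in the word is controlled by the uniform constants $\alpha_k,\beta_k$. The step to be careful with is the simultaneous pigeonhole in step 1: the same infinite set $H$ must serve every $k \le n+1$ at once, so that $c_k$ does not depend on which partner letter $l$ is used. Beyond that, the only remaining care point is checking that the theorem in~\cite{GW06} applies verbatim to our arenas, i.e.\ that it covers arenas where every vertex has at least one outgoing edge and colors are placed on vertices.
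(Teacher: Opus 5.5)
Your proof is correct, but the separation part takes a different route from the paper.

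\textbf{The paper's route.} The paper stays inside the finite alphabet $\{0,\dots,n+1\}$ and uses the ultimately periodic words $w_k=(k,k+1,\dots,n+1,n+1,n,\dots,k+1)^\omega$. Their sets of infinitely recurring pairs are nested, $F_2^\infty(w_{k+1})\subsetneq F_2^\infty(w_k)$. So the maxima $M_k$ of $p$ over these sets are automatically non-increasing. The constraint $M_k\equiv k \pmod 2$ then forces a strict chain of length $n+2$ in $[0,n]$.

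\textbf{Your route.} You instead use the infinitude of $\Gamma$ inside each test word. Interleaving fresh large letters makes every individual pair occur only once. The simultaneous pigeonhole is then needed to make the priorities $\alpha_k,\beta_k$ uniform, after which the two-letter interleavings give $c_i>c_{i+1}$.

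\textbf{Comparison.} Both arguments reach the same contradiction. The paper's version is more elementary: no pigeonhole step, only finitely many letters, and ultimately periodic witnesses. It also shows that already the restriction of $W$ to $\{0,\dots,n+1\}^\omega$ needs $n+2$ pair priorities. Your version illustrates that, over an infinite alphabet, a parity condition on pairs is not governed by $F_2^\infty$ at all, since your test words have $F_2^\infty(w)=\emptyset$. The cost is the pigeonhole step you flagged as the main obstacle, which the nested-words trick makes unnecessary.

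\textbf{A small correction.} Your claim that $\liminf_i w_i$ is always a natural number is false. For $w=0\,1\,2\,3\cdots$ no letter recurs, so you must fix a convention for such words; the paper puts them in $W$. This does not affect your separation argument, since each of your test words has a recurring letter $\le n+1$. Positional determinacy also holds under either convention: shift all priorities by one and swap the roles of the players.
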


\begin{proof}
We define the objective $W$ based on an extension of the min generalized parity condition to an infinite set of priorities.  $u = u_0 u_1 \dots \in W \subseteq \Gamma^\omega$ iff $\liminf_{i \to \infty} u_i \neq \emptyset \Rightarrow \liminf_{i \to \infty} u_i $ is even.
Hence, a word is in $W$ if either the set of priorities appearing infinitely often is empty, or otherwise if its minimum element is even. 

$W$ is prefix-independent, since the limit inferior is independent of any finite prefix of $w$. Positional determinacy on all vertex-colored two-player arenas is non-trivial, and has been shown in~\cite{GW06}.

Let  $W_p$ be a generalized parity function on pairs of order $n$ (hence it has n priorities in $[0,n]$). To show it does not captures $W$, we consider for each $k \in [0,n+1]$ the word $w_k = (k, k+1, \dots, n+1, n+1, n, \dots, k+1)^\omega$. The set of elements of $\Gamma^2$ traversed infinitely often by $w_k$ is $F_2^\infty(w_n) = \{ (i, i+1), (i+1, i) \mid k \leq i \leq n\}\cup \{(n+1,n+1)\}$. Note that for all $n$, $F_2^\infty(w_{k+1}) \subsetneq F_2^\infty(w_{k})$.

By definition of $W$, $w_k \in W$ if and only if $k$ is even. For the generalized parity condition on pairs $W_p$, $w_k \in W_p$ iff the priority $M_n=\max_{k \leq i \leq n}(p(w_i,w_{i+1}))$ is even.

Since for all $k\in\mathbb{N}$ we have $F_2^\infty(w_{k+1}) \subsetneq F_2^\infty(w_{k})$, we have $M_{k+1} \leq M_{k}$.
If $W_p$ captures $W$, then for every $k$, $M_k$ must have the same generalized parity as $k$, hence $M_k \not\equiv M_{k+1} \pmod 2$.
These two conditions imply that $M_{k+1} < M_{k}$ for all $n \in [0,n+1]$.

The sequence of priorities $(M_k)_{k \in [0,n+1]}$ must be strictly decreasing. This contradicts the assumption that the range of $p$ is $[0,n]$. Hence, no such generalized parity function on pairs $W_p$ can represent $W$.
\end{proof}

\subsection{Games on finite arenas}
\label{app:notMon}

We compare our setting with Theorem 2 of~\cite{GZ05}, that considers quantitative objectives represented by a binary relation $\sqsubseteq$. It proves that $\sqsubseteq$ is positionally determined on \textbf{finite} edge-colored one-player games iff $\sqsubseteq$ is positionally determined on finite edge-colored two-player games iff $\sqsubseteq$ is \textbf{selective and monotone}. 

Prefix-independence implies monotonicity, but our running example $(a+b)^*(ab)^\omega$ on $\Gamma=\{a,b\}$ is not selective.

Before showing that there exists a non-selective objective that is positionally determined on vertex-colored games, we take the time to introduce the results of~\cite{GZ05}, restated in our framework.

\begin{definition}[Outcome games and strategies]\label{def:outcome}
An \textbf{outcome game} is a tuple $G = \langle A, \sqsubseteq \rangle$ where $A \in \mathcal{A}_e$ is an arena and Eve has a \textbf{preference relation} $\sqsubseteq$, which is a complete, reflexive, and transitive binary relation over $\Gamma^\omega$
We write $x \sqsubset y$ (strictly preferred) if $x \, \sqsubseteq \, y$ holds but $y \, \sqsubseteq \, x$ does not.

Plays and strategies are defined as in classical games.
A pair of strategies $(\sigma_{E}, \sigma_{A})$ for Eve and Adam is \textbf{optimal} if for all states $v \in V$ and all strategies $(\tau_{E}, \tau_{A})$ for respectively Eve and Adam: 
$\text{color}(p_G(s, \tau_{E}, \sigma_{A})) \,\, \sqsubseteq \,\, \text{color}(p_G(s, \sigma_{E}, \sigma_{A})) \,\, \sqsubseteq \,\, \text{color}(p_G(s, \sigma_{E}, \tau_{A}))$.

A strategy is an \emph{optimal positional strategy} if it is an positional strategy that satisfies the optimality condition.
\end{definition}

\begin{definition}[Recognizable Languages]
Let $Rec(C)$ denote the family of \textbf{recognizable subsets} of finite words over $C$. A language $L \subseteq C^*$ is recognizable if it is recognized by a finite automaton; this is equivalent to the class of \textbf{regular languages}
We also denote by $Pref(L)$ the set of prefixes of $L$. 

The \textbf{operator $[\cdot]$} associates a language of finite words $L \subseteq C^*$ with a set of infinite words $[L] \subseteq C^\omega$, with 
$ [L] = \{ x \in C^\omega \mid \text{every finite prefix of } x \text{ is in } Pref(L) \} $.
\end{definition}

\begin{definition}[Preferences over Sets]
The \textbf{preference relation} $\sqsubseteq$ and its strict version $\sqsubset$ are extended from individual sequences to \textbf{sets of sequences} $X, Y \subseteq C^\omega$ as follows:
\begin{itemize}
    \item $X \, \sqsubseteq \, Y$ iff $\forall x \in X, \exists y \in Y$ such that $x \, \sqsubseteq \, y$. (For every outcome in $X$, there is an outcome in $Y$ at least as good.)
    \item $X \sqsubset Y$ iff $\exists y \in Y, \forall x \in X$ such that $x \sqsubset y$. (There exists an outcome in $Y$ that is strictly better than every outcome in $X$.)
\end{itemize}
\end{definition}

\begin{definition}[Monotone and Selective relations]
A preference relation $v$ is:
\begin{itemize}
    \item \textbf{Monotone} if for all $M, N \in Rec(C)$:
    \[ \exists x \in C^*, [xM] \sqsubset [xN] \Rightarrow \forall y \in C^*, [yM] \, \sqsubseteq \, [yN] \]
    This implies that the optimal choice between two future behaviors does not depend on the history of the play. This is stronger than prefix-independent, which requires that the value of a play does not depend on the history.
    \item \textbf{Selective} if for all $x \in C^*$ and all $M, N, K \in Rec(C)$:
    \[ [x(M \cup N)^*K] \, \sqsubseteq \, [xM^*] \cup [xN^*] \cup [xK] \]
    This implies that a player cannot change the optimal choice by switching back and forth between different behaviors ($M$ and $N$).
\end{itemize}
\end{definition}

Hence, we can properly state Theorem 2 of~\cite{GZ05}.

\begin{proposition}[By Gimbert and Zielonka in~\cite{GZ05}]
Given a preference relation $\sqsubseteq$, both players have optimal positional
strategies for all edge-colored one-player outcome arenas  $G = \langle A, \sqsubseteq \rangle$ over finite arenas $A$ if and only if the relations
$\sqsubseteq$ and its inverse $\sqsubseteq^{-1}$ are monotone and selective.
\end{proposition}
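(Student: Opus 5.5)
The proof splits into two directions: necessity, where optimal positional strategies on small gadget arenas force monotonicity and selectivity, and sufficiency, by induction on the size of the arena. Since one-player arenas are either fully Eve's or fully Adam's, it suffices to treat Eve's arenas with $\sqsubseteq$; Adam's arenas are the dual case with $\sqsubseteq^{-1}$.

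\emph{Necessity.} Fix regular $M,N,K$ and prefixes $x,y$. Take trimmed finite automata for them and read them as finite edge-colored arenas owned by Eve, with trimming making every vertex extendable to an infinite play. This way the set of infinite plays from the initial state of the $M$-automaton is exactly $[M]$.

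For \textbf{monotonicity}, build a single arena with two initial vertices $s_x,s_y$. A colored path reading $x$ (resp.\ $y$) leads from $s_x$ (resp.\ $s_y$) to a common vertex $v$. At $v$, Eve may enter the $M$-gadget or the $N$-gadget. A single optimal positional strategy $\sigma$ must then fix the choice at $v$ uniformly for both start vertices. Suppose $[xM]\sqsubset[xN]$. Optimality from $s_x$ forces $\sigma$ to enter the $N$-branch; otherwise its outcome lies in $[xM]$ and is strictly beaten by some play of $[xN]$. Optimality from $s_y$ then gives $[yM]\sqsubseteq[yN]$.

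For \textbf{selectivity}, build an arena where a path reading $x$ leads to $v$. From $v$, Eve has three exits: an $M$-gadget whose accepting states loop back to $v$, an $N$-gadget looping back likewise, and a $K$-gadget. The plays from the start are exactly $[x(M\cup N)^*K]$. A positional strategy picks one exit at $v$ forever, so its outcome lies in $[xM^*]\cup[xN^*]\cup[xK]$. Optimality of this outcome then yields the inequality.

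The delicate point here is the positional choices inside the gadgets. I would make each automaton a tree-like unfolding (deterministic and acyclic except for the returns to $v$), or else argue via the closure operator $[\cdot]$ that such choices keep the outcome inside the claimed set.

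\emph{Sufficiency.} I would argue by induction on $|E|-|V|$. If every vertex has out-degree one, the claim is trivial. Otherwise pick a vertex $v$ with out-degree $\ge 2$ and split its out-edges into two nonempty parts, giving subarenas $A_1,A_2$ with fewer edges. By induction each $A_i$ has an optimal positional strategy $\sigma_i$.

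Let $M_i$ be the regular language of colors read from $v$ in $A_i$. Monotonicity compares $[hM_1]$ and $[hN]$ consistently over all histories $h$ reaching $v$, so we may pick, say, $A_1$ as the better side uniformly in $h$. Keep $\sigma_1$ on $A_1$ and extend it to vertices reachable only through $A_2$ using $\sigma_2$.

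To prove optimality in $A$, take an arbitrary play from some start $s$. Write it as $x\,(\text{cycles through }v)^*\,(\text{tail})$, where $x$ is regular from $s$ to $v$, the cycles come from regular sets $M$ (using $A_1$-edges at $v$) and $N$ (using $A_2$-edges at $v$), and $K$ is the set of tails that never return. Selectivity bounds this play by the best outcome in $[xM^*]\cup[xN^*]\cup[xK]$. Each of these is realizable within $A_1$ or $A_2$, hence dominated by $\sigma_1$ or $\sigma_2$, and monotonicity then shows the $\sigma_1$-outcome dominates.

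I expect the main obstacle to be this decomposition step: cutting the play at its visits to $v$ so that the pieces are genuinely described by regular languages to which the selectivity hypothesis applies. The other delicate point is handling plays that visit $v$ only finitely often.
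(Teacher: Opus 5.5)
The paper does not prove this proposition; it quotes it from Gimbert and Zielonka~\cite{GZ05}. Your plan follows their scheme. Necessity uses one-player gadget arenas built from automata. Sufficiency is an induction on $|E|-|V|$ that splits the moves of a vertex, with monotonicity choosing one side uniformly and selectivity absorbing the switching between sides. The only difference is scope: they carry out sufficiency on two-player arenas, which is where their one-to-two-player corollary comes from, whereas you need only the one-player case, where there is no opponent strategy to build. Your argument is sound once the points you left open are settled as follows.

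\emph{Necessity.} Drop the tree-like unfolding. A gadget that is acyclic apart from the returns to $v$ has only finitely many first-return cycles, so it cannot carry an infinite $M$ such as $a^*b$. The closure argument is enough. If the positional strategy takes the $M$-exit, then one of two things happens:
\begin{itemize}
\item it comes back to $v$ after reading some $m\in M$ and, being positional, repeats that cycle forever, giving $xm^\omega\in[xM^*]$;
\item it stays in the gadget forever, reading some $w\in[M]$, giving $xw\in[xM]\subseteq[xM^*]$.
\end{itemize}
You only need $[x(M\cup N)^*K]$ to be included in the set of plays, which König's lemma gives. You do not need equality, and equality actually fails when $K=\emptyset$.

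\emph{Sufficiency.} $A_1$ and $A_2$ have the same vertex set, so there is nothing to extend with $\sigma_2$; the candidate strategy is $\sigma_1$ itself. Take $M_i$ to be the prefix-closed language of labels of finite paths from $v$ in $A_i$, so that $[hM_i]=h[M_i]$. By completeness, $\neg([hM_1]\sqsubset[hM_2])$ is equivalent to $[hM_2]\sqsubseteq[hM_1]$, and this is what turns monotonicity into your uniform dichotomy.

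In selectivity, $x$ must be a single word, not a regular set. Fix a play from $s$ and set up the decomposition as follows.
\begin{itemize}
\item Let $x$ be the label of the play up to its first visit to $v$. That prefix uses no move of $v$, so it is a path of both $A_1$ and $A_2$.
\item Let $M$ and $N$ be the labels of first-return cycles at $v$ starting with an $A_1$-move and an $A_2$-move, respectively.
\item Let $K$ be the labels of all finite paths from $v$ that do not return to $v$, partial cycles included.
\end{itemize}
Then the play lies in $[x(M\cup N)^*K]$ whether it visits $v$ finitely or infinitely often. Plays that avoid $v$ altogether are already $A_1$-plays. By König's lemma, every word of $[xM^*]\cup[xN^*]\cup[xK]$ is an $A_1$-play or an $A_2$-play from $s$. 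Selectivity then provides some such word above the play; it need not be a ``best'' one.

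Finally, compare the $\sigma_2$-outcome from $s$ with the $\sigma_1$-outcome. Either the $\sigma_2$-outcome avoids $v$, and it is then an $A_1$-play. Or it equals $hw$, where $h$ labels a path using no move of $v$ and $w\in[M_2]$. In that case the dichotomy gives $w'\in[M_1]$ with $hw\sqsubseteq hw'$. Since $hw'$ is an $A_1$-play from $s$, it lies below the $\sigma_1$-outcome.
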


We can now give our Lemma~\ref{lem:notMon}.

\begin{restatable}{lemma}{notMon}\label{lem:notMon}
There exists $W\subseteq \Gamma^\omega$ prefix-independent and positionally determined on all vertex-colored one-player arenas, such that its associated relation $\sqsubseteq$ is not selective.
\end{restatable}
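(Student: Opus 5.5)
We take the running example $W = (a+b)^*(ab)^\omega$ over $\Gamma=\{a,b\}$ as the witness. First, $W$ is prefix-independent and positionally determined on all vertex-colored arenas, not just one-player ones. This follows from Example~\ref{ex:def-ab}, which shows that $W$ is a generalized parity objective on pairs (with $p(a,b)=p(b,a)=0$ and $p(a,a)=p(b,b)=1$). Theorem~\ref{thm:2parity-pos} then gives prefix-independence and positional determinacy on all vertex-colored two-player arenas, hence in particular on one-player arenas.

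The associated preference relation is the natural two-level one: $x \sqsubseteq y$ iff $x\in W$ implies $y\in W$. It is a complete preorder in which winning words are preferred to losing ones. It remains to violate selectivity, i.e.\ to exhibit $x\in\Gamma^*$ and regular $M,N,K$ such that $[x(M\cup N)^*K] \sqsubseteq [xM^*]\cup[xN^*]\cup[xK]$ fails. We choose $x=\varepsilon$, $M=\{a\}$, $N=\{b\}$ and $K=\emptyset$.

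With this choice, $[(M\cup N)^*K] = [\emptyset]$ is empty, which makes the left-hand side vacuous. So we take $K=\{\varepsilon\}$ instead. Then $[(M\cup N)^*] = [\{a,b\}^*] = \Gamma^\omega$, which contains $(ab)^\omega\in W$. On the right-hand side, $[a^*]=\{a^\omega\}$ and $[b^*]=\{b^\omega\}$, while $[\{\varepsilon\}]$ is empty since it contains no infinite words. Both $a^\omega$ and $b^\omega$ lie outside $W$.

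By the set-extension of $\sqsubseteq$, the inequality would require some $y\in\{a^\omega,b^\omega\}$ with $(ab)^\omega\sqsubseteq y$, i.e.\ $y\in W$. No such $y$ exists, so selectivity fails. Informally, this is exactly the edge-colored hub-cycle phenomenon of Example~\ref{ex:intro-ab}: switching back and forth between the behaviours $a$ and $b$ beats committing to either one.

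The only delicate point is matching the conventions of~\cite{GZ05} as restated here: how $[\cdot]$ acts on $\{\varepsilon\}$ versus $\emptyset$, and the exact quantifier reading of $\sqsubseteq$ on sets. We check these against the definitions and adjust $K$ if necessary, for instance taking $K=\{a\}$, which gives $[xK]=\emptyset$ in the same way. The essential witness pair, $(ab)^\omega$ against $\{a^\omega,b^\omega\}$, is robust to these choices.
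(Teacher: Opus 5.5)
Your proof is correct and follows the paper's route: the same witness $W=(a+b)^*(ab)^\omega$, positional determinacy obtained from its description as a generalized parity objective on pairs via Theorem~\ref{thm:2parity-pos}, and non-selectivity because alternating between $a$ and $b$ wins while committing to either letter alone loses. Your explicit choice $x=\varepsilon$, $M=\{a\}$, $N=\{b\}$, $K=\{\varepsilon\}$ (correctly using $[\{\varepsilon\}]=\emptyset$ under the paper's definition of $[\cdot]$) makes precise the paper's informal observation that $W\cap\{a\}^\omega=W\cap\{b\}^\omega=\emptyset$ while $W\cap\{a,b\}^\omega\neq\emptyset$.
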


\begin{proof}
Let $W = (a+b)^*(ab)^\omega$ on $\Gamma=\{a,b\}$, which is the generalized parity condition such that $p(a,a)=p(b,b)=1$ and $p(a,b)=p(b,a)=0$, hence by Theorem~\ref{thm:rev-case2} it is prefix-independent and positionally determined on all vertex-colored two-player arenas.
$W$ is monotone: prefix-independence ($x \in W \iff ax \in W$) ensures $x \sqsubseteq y \Rightarrow ax \sqsubseteq ay$.

However, $W$ is not selective: 
$W \cap \{a\}^\omega = \emptyset, W \cap \{b\}^\omega = \emptyset, \text{ yet } W \cap \{a, b\}^\omega \neq \emptyset$
shows that $W$ is not closed under the union of sub-alphabets.
\end{proof}

\section{Related works and future directions}
\label{sec:rel}

In 2005, Gimbert and Zielonka~\cite{GZ05} established a foundational  $1$-to-$2$-player lift for quantitative objectives represented by a preference relation $\sqsubseteq$. They proved that $\sqsubseteq$ is positionally determined on finite one-player edge-colored arenas iff it is positionally determined on finite two-player edge-colored arenas iff $\sqsubseteq$ is selective and monotone. Here, quantitative objectives assign non-binary payoffs to plays, generalizing standard qualitative win-lose objectives. This raises an \emph{open question} for our framework: Is $\sqsubseteq$ positionally determined on finite one-player vertex-colored arenas iff it is positionally determined on finite two-player vertex-colored arenas iff $\sqsubseteq$ is selective and monotone on pairs?

Finite-memory strategies offer a natural generalization of positional determinacy while remaining finitely representable, retaining some simplicity in terms of description and implementation. Characterizations of finite-memory determinacy extending the work of~\cite{GZ05} have been obtained for finite games~\cite{BLORV22} and infinite games~\cite{BRV23}. This raises the \emph{open question} whether our pair-based framework can bridge edge and vertex-colored games for finite-memory determinacy.

Determinacy that is positional or finite-memory for one player only (called half-positional determinacy, etc) has also been extensively studied~\cite{Kopczynski06, DBLP:journals/theoretics/Ohlmann23, CO26}. In this framework, our notion of objectives on pairs might be adapted to establish a correspondence between edge-colored and vertex-colored games.

A recent work by Colcombet and Idir~\cite{CI26} considers the $\omega$-regular objectives that are Eve-positional (Eve has a positional winning strategy whenever she has a winning strategy). For edge-colored finite games, they give a characterization of this property.

Finally, lifting the finiteness assumption in our second equivalence remains an open problem. As discussed previously, resolving this is particularly relevant for capturing parity games with infinitely many priorities~\cite{GW06}.

\section{Conclusion}
\label{sec:conclusion}

\textbf{Further remarks: } The finiteness of $\Gamma$ is used only to obtain the Muller objective on pairs in Lemma~\ref{lem:mul-is-par}: since the first step of the proof, Theorem~\ref{thm:pose-parity-words}, with a generalized parity objective on anchored words, remains valid for infinite color sets, we emphasize this intermediate result in the body of the article. Furthermore, the finiteness assumption in the second step is essential: parity games on infinitely many natural numbers are positionally determined~\cite{GW06}, yet they generally cannot be reduced to parity objectives on pairs of colors from consecutive vertices, as shown in Lemma~\ref{lem:noInfinite}. Finally, we establish in Section~\ref{sec:remarks} that our characterization with generalized parity objectives on pairs of colors from consecutive vertices is $\omega$-regular.

\textbf{Discussion: }
Our results establish a $1$-to-$2$-player lift, as the characterization of determinacy for two-player games reduces to the study of one-player hub-cycle games. This starts clarifying the relationship between edge and vertex games. Although any vertex-colored game can be viewed as an edge-colored game via a direct injection, the class of edge-colored games is strictly richer. This is reflected in the objectives: generalized parity objectives on a single color form a strict subclass of generalized parity objectives on pairs of colors. For instance, the objective $(a+b)^*(a^\omega + b^\omega)$ can be defined as an objective on pairs, but not on a single color.

Positionally determined objectives on vertex-colored arenas are strictly more expressive than those on edge-colored arenas. Indeed, vertex-colored games allow more complex objectives to remain positionally determined by effectively using the vertex color to track transitions. While we can convert a vertex-colored game with an objective on pairs into a standard edge-colored game, as in Theorem~\ref{thm:2parity-pos}, this incurs a quadratic blowup in the color domain, since the priority function has domain $\Gamma$ on single colors, but $\Gamma^2$ on pairs of colors.

Our equivalence thus establishes a two-way correspondence, between the vertex setting and the edge setting from~\cite{CN06,Kopczynski08}, via generalized parity objectives. Future works could strengthen this result.

\newpage

\bibliography{ref}

@phdthesis{Kopczynski08,
  title        = {Half-positional Determinacy of Infinite Games.},
  author       = {Eryk Kopczy\'{n}ski},
  year         = 2018,
  url         = {https://www.mimuw.edu.pl/~erykk/papers/hpwc.pdf},
  school       = {University of Warsaw},
  type         = {Ph{D} thesis}
}

@inproceedings{EJ91,
  author       = {E. Allen Emerson and
                  Charanjit S. Jutla},
  title        = {Tree Automata, Mu-Calculus and Determinacy (Extended Abstract)},
  booktitle    = {32nd Annual Symposium on Foundations of Computer Science, San Juan,
                  Puerto Rico, October 1-4, 1991},
  pages        = {368--377},
  publisher    = {{IEEE} Computer Society},
  year         = {1991},
  url          = {https://doi.org/10.1109/SFCS.1991.185392},
  doi          = {10.1109/SFCS.1991.185392},
  bibsource    = {dblp computer science bibliography, https://dblp.org}
}

@misc{Mostowski91,
 author       = {A. Mostowski},
  title        = {Games with forbidden positions},
  booktitle    = {Research report 78},
  year = {1991},
  publisher = {University of Gdansk}
 }

@article{Zielonka98,
  author       = {Wies\l{}aw Zielonka},
  title        = {Infinite Games on Finitely Coloured Graphs with Applications to Automata
                  on Infinite Trees},
  journal      = {Theor. Comput. Sci.},
  volume       = {200},
  number       = {1-2},
  pages        = {135--183},
  year         = {1998},
  url          = {https://doi.org/10.1016/S0304-3975(98)00009-7},
  doi          = {10.1016/S0304-3975(98)00009-7},
  bibsource    = {dblp computer science bibliography, https://dblp.org}
}

@article{CN06,
  author       = {Thomas Colcombet and
                  Damian Niwinski},
  title        = {On the positional determinacy of edge-labeled games},
  journal      = {Theor. Comput. Sci.},
  volume       = {352},
  number       = {1-3},
  pages        = {190--196},
  year         = {2006},
  url          = {https://doi.org/10.1016/j.tcs.2005.10.046},
  doi          = {10.1016/J.TCS.2005.10.046},
  bibsource    = {dblp computer science bibliography, https://dblp.org}
}

@inproceedings{GZ05,
  author       = {Hugo Gimbert and
                  Wies\l{}aw Zielonka},
  editor       = {Mart{\'{\i}}n Abadi and
                  Luca de Alfaro},
  title        = {Games Where You Can Play Optimally Without Any Memory},
  booktitle    = {{CONCUR} 2005 - Concurrency Theory, 16th International Conference,
                  {CONCUR} 2005, San Francisco, CA, USA, August 23-26, 2005, Proceedings},
  series       = {Lecture Notes in Computer Science},
  pages        = {428--442},
  publisher    = {Springer},
  year         = {2005},
  url          = {https://doi.org/10.1007/11539452\_33},
  doi          = {10.1007/11539452\_33},
  bibsource    = {dblp computer science bibliography, https://dblp.org}
}

@article{BLORV22,
  author       = {Patricia Bouyer and
                  St{\'{e}}phane Le Roux and
                  Youssouf Oualhadj and
                  Mickael Randour and
                  Pierre Vandenhove},
  title        = {Games Where You Can Play Optimally with Arena-Independent Finite Memory},
  journal      = {Log. Methods Comput. Sci.},
  volume       = {18},
  number       = {1},
  year         = {2022},
  url          = {https://doi.org/10.46298/lmcs-18(1:11)2022},
  doi          = {10.46298/LMCS-18(1:11)2022},
  bibsource    = {dblp computer science bibliography, https://dblp.org}
}

@article{BRV23,
  author       = {Patricia Bouyer and
                  Mickael Randour and
                  Pierre Vandenhove},
  title        = {Characterizing Omega-Regularity through Finite-Memory Determinacy
                  of Games on Infinite Graphs},
  journal      = {TheoretiCS},
  volume       = {2},
  eid          = {1},
  year         = {2023},
  url          = {https://doi.org/10.46298/theoretics.23.1},
  doi          = {10.46298/THEORETICS.23.1},
  bibsource    = {dblp computer science bibliography, https://dblp.org}
}

@inproceedings{Kopczynski06,
  author       = {Eryk Kopczy\'{n}ski},
  editor       = {Michele Bugliesi and
                  Bart Preneel and
                  Vladimiro Sassone and
                  Ingo Wegener},
  title        = {Half-Positional Determinacy of Infinite Games},
  booktitle    = {Automata, Languages and Programming, 33rd International Colloquium,
                  {ICALP} 2006, Venice, Italy, July 10-14, 2006, Proceedings, Part {II}},
  series       = {Lecture Notes in Computer Science},
  pages        = {336--347},
  publisher    = {Springer},
  year         = {2006},
  url          = {https://doi.org/10.1007/11787006\_29},
  doi          = {10.1007/11787006\_29},
  bibsource    = {dblp computer science bibliography, https://dblp.org}
}

@article{DBLP:journals/theoretics/Ohlmann23,
  author       = {Pierre Ohlmann},
  title        = {Characterizing Positionality in Games of Infinite Duration over Infinite
                  Graphs},
  journal      = {TheoretiCS},
  volume       = {2},
  eid          = {3},
  year         = {2023},
  url          = {https://doi.org/10.46298/theoretics.23.3},
  doi          = {10.46298/THEORETICS.23.3},
  bibsource    = {dblp computer science bibliography, https://dblp.org}
}

@article{CO26,
  author       = {Antonio Casares and
                  Pierre Ohlmann},
  title        = {Positional {\(\omega\)}-regular languages},
  journal      = {TheoretiCS},
  volume       = {5},
  eid          = {5},
  year         = {2026},
  url          = {https://doi.org/10.46298/theoretics.26.5},
  doi          = {10.46298/THEORETICS.26.5},
  bibsource    = {dblp computer science bibliography, https://dblp.org}
}

@article{GW06,
  author       = {Erich Gr{\"{a}}del and
                  Igor Walukiewicz},
  title        = {Positional Determinacy of Games with Infinitely Many Priorities},
  journal      = {Log. Methods Comput. Sci.},
  volume       = {2},
  number       = {4},
  year         = {2006},
  url          = {https://doi.org/10.2168/LMCS-2(4:6)2006},
  doi          = {10.2168/LMCS-2(4:6)2006},
  bibsource    = {dblp computer science bibliography, https://dblp.org}
}

@proceedings{DBLP:conf/dagstuhl/2001automata,
  editor       = {Erich Gr{\"{a}}del and
                  Wolfgang Thomas and
                  Thomas Wilke},
  title        = {Automata, Logics, and Infinite Games: {A} Guide to Current Research},
  series       = {Lecture Notes in Computer Science},
  volume       = {2500},
  publisher    = {Springer},
  year         = {2002},
  url          = {https://doi.org/10.1007/3-540-36387-4},
  doi          = {10.1007/3-540-36387-4},
  isbn         = {3-540-00388-6},
  bibsource    = {dblp computer science bibliography, https://dblp.org}
}

@inproceedings{DBLP:conf/cav/Cook18,
  author       = {Byron Cook},
  editor       = {Hana Chockler and
                  Georg Weissenbacher},
  title        = {Formal Reasoning About the Security of {A}mazon Web Services},
  booktitle    = {Computer Aided Verification - 30th International Conference, {CAV}
                  2018, Held as Part of the Federated Logic Conference, FloC 2018, Oxford,
                  UK, July 14-17, 2018, Proceedings, Part {I}},
  series       = {Lecture Notes in Computer Science},
  pages        = {38--47},
  publisher    = {Springer},
  year         = {2018},
  url          = {https://doi.org/10.1007/978-3-319-96145-3\_3},
  doi          = {10.1007/978-3-319-96145-3\_3},
  bibsource    = {dblp computer science bibliography, https://dblp.org}
}

@misc{CI26,
      title={An algebraic characterisation of {E}ve-positional languages}, 
      author={Thomas Colcombet and Olivier Idir},
      year={2026},
      eprint={2604.22648},
      archivePrefix={arXiv},
      primaryClass={cs.FL},
      url={https://arxiv.org/abs/2604.22648}, 
}

\end{document}